\newtheorem{thm}{Theorem}[section]
\newtheorem{cor}[thm]{Corollary}
\newtheorem{lem}[thm]{Lemma}
\newtheorem{prop}[thm]{Proposition}
\theoremstyle{definition}
\newtheorem{defn}[thm]{Definition}
\newtheorem{ass}[thm]{Assumption}
\newtheorem{stass}{Standing Assumption}
\theoremstyle{remark}
\newtheorem{rem}[thm]{Remark}
\newtheorem{exa}[thm]{Example}
\numberwithin{equation}{section}
\newcommand{\R}{\mathbb{R}}
\newcommand{\N}{\mathbb{N}}
\newcommand{\A}{\mathcal{A}}
\newcommand{\prob}{\mathbb{P}}
\newcommand{\M}{\mathcal{M}}
\newcommand{\cE}{\mathcal{B}_E}
\newcommand{\FF}{\bF}
\newcommand{\qprob}{\mathbb{Q}}
\newcommand{\expec}{\mathbb{E}}
\newcommand{\basisp}{(\Omega,\bF,\prob)}
\newcommand{\basisq}{(\Omega,\bF,\qprob)}
\newcommand{\basisgp}{(\Omega,\bG,\prob)}
\newcommand{\F}{\mathcal{F}}
\newcommand{\G}{\mathcal{G}}
\newcommand{\bH}{\mathbf{H}}
\newcommand{\ud}{\,\mathrm d}
\newcommand{\cO}{\mathcal{O}}
\newcommand{\cP}{\mathcal{P}}
\newcommand{\Mloc}{\M_{\text{loc}}}
\newcommand{\mMloc}{\M_{\text{\emph{loc}}}}
\DeclareMathOperator{\esssup}{ess\,sup}
\DeclareMathOperator{\argmax}{arg\,max}
\newcommand{\pare}[1]{\left(#1\right)}
\newcommand{\dbra}[1]{[\kern-0.15em[ #1 ]\kern-0.15em]}
\newcommand{\dbraco}[1]{[\kern-0.15em[ #1 [\kern-0.15em[}
\newcommand{\dbraoc}[1]{]\kern-0.15em] #1 ]\kern-0.15em]}
\newcommand{\dbraoo}[1]{]\kern-0.15em] #1 [\kern-0.15em[}
\newcommand{\Z}{\mathcal{Z}}
\newcommand{\bF}{\mathbf{F}}
\newcommand{\Cons}{\mathcal{C}}
\newcommand{\bG}{\mathbf{G}}
\newcommand{\GG}{\mathbf{G}}
\newcommand{\ind}{\mathbf{1}}
\newcommand{\be}{\begin{equation}}
\newcommand{\ee}{\end{equation}}
\newcommand{\ba}{\begin{aligned}}
\newcommand{\ea}{\end{aligned}}
\begin{document}

\title{The value of informational arbitrage}

\author{Huy N. Chau}
\address{Huy N. Chau, Alfr\'ed R\'enyi Institute of Mathematics, Hungarian Academy of Sciences (Hungary)}
\email{chau@renyi.hu}

\author{Andrea Cosso}
\address{Andrea Cosso, Dipartimento di Matematica, Universit\`a di Bologna (Italy)}
\email{andrea.cosso@unibo.it}

\author{Claudio Fontana}
\address{Claudio Fontana, 
LPSM, Paris Diderot University (France)}
\email{fontana@math.univ-paris-diderot.fr}

\thanks{The authors are thankful to Albina Danilova, 
Martin Larsson, Martin Schweizer and seminar participants at ETH Z\"urich and LSE for valuable discussions and suggestions on the topic of the present paper.
Huy N. Chau was supported by the ``Lend\"{u}let'' grant LP2015-6 of the Hungarian Academy of Sciences and by the NKFIH (National Research, Development and Innovation Office, Hungary) 
	grant KH 126505.}
\subjclass[2010]{60G44, 91B44, 91G10}
\keywords{Inside information; value of information; initial enlargement of filtration; arbitrage opportunity;  indifference price; leverage; portfolio optimization; density hypothesis; martingale representation.}

\date{\today}


\maketitle

\begin{abstract}
In the context of a general semimartingale model of a complete market, we aim at answering the following question: How much is an investor willing to pay for learning some inside information that allows to achieve arbitrage? 
If such a value exists, we call it the value of informational arbitrage.
In particular, we are interested in the case where the inside information yields arbitrage opportunities but not unbounded profits with bounded risk. 
In the spirit of Amendinger et al. (2003, {\em Finance Stoch.}), we provide a general answer to the above question by relying on an indifference valuation approach. 
To this effect, we establish some new results on models with inside information and study optimal investment-consumption problems in the presence of initial information and arbitrage, also allowing for the possibility of leveraged positions.
We characterize when the value of informational arbitrage is universal, in the sense that it does not depend on the preference structure. 
Our results are illustrated with several explicit examples.
\end{abstract}

\section{Introduction}		\label{sec:intro}

The notion of information plays a crucial role in the analysis of investment decisions. In line with economic intuition, access to more precise sources of information gives an informational advantage leading to better performing portfolios.
The problem of quantifying such an informational advantage represents a central question in finance and has constantly attracted significant attention in financial economics and, more recently, in mathematical finance.

In this work, we develop a general approach for quantifying in monetary terms the informational advantage associated to some inside information, in the context of a general semimartingale model of a complete market, under weak assumptions on the random variable (denoted by $L$) representing the inside information.
We adopt an indifference valuation approach, determining a value $\pi(v)$ which makes a risk averse agent with initial capital $v$ indifferent between the following two alternatives: (i) invest optimally the initial capital $v$ by relying on the publicly available information only; (ii) acquire the inside information $L$ at the price $\pi(v)$ and invest optimally the residual capital $v-\pi(v)$ by relying on the publicly available information enriched by the inside information.

The idea of quantifying information through an indifference valuation approach can be traced back to early contributions in information economics, see in particular \cite{LaValle,MR72,Morris,Willinger}. 
The same approach has been pursued in the context of modern mathematical finance in \cite{ABS}, which represents the main starting point for the present work.
An alternative approach for the measurement of the value of information consists in computing the {\em utility gain} of an informed agent, as considered for instance in \cite{PK96,AIS98,Hillairet,MR2223957,HJ}. However, this has the drawback of expressing the value of information in utility terms, and not in monetary units.

A distinguishing feature of the present work is that we explicitly allow the inside information $L$ to generate arbitrage opportunities for an informed agent. 
This situation is not covered by the existing literature, except for the extreme case where the knowledge of $L$ is so informative that it leads to infinite utility for an informed agent (see, e.g., \cite{PK96,AIS98}).
In contrast, we assume that the inside information can be exploited  to realize arbitrage opportunities, but unbounded profits with bounded risk cannot be achieved (this represents the minimal condition allowing for a meaningful solution to optimal portfolio problems, see \cite{MR2335830,CDM15,CCFM2015}).
In this framework, we call the indifference value of $L$ the {\em value of informational arbitrage}.

As we are going to show, informational arbitrage arises whenever the inside information reveals that some events, which are believed to occur with strictly positive probability by public opinion, are actually impossible.
In order to illustrate the notion of the value of informational arbitrage, let us present a simple example, which will be analysed in a more general version in Section \ref{sec:example_Brownian}. 

\begin{exa}	\label{ex:intro}
Consider a financial market with a single risky asset, with price process 
\[
S_t = \exp\left(W_t-t/2\right),
\qquad\text{ for all }t\in[0,1],
\]
where $(W_t)_{t\in[0,1]}$ is a standard Brownian motion.
The {\em ordinary information} (publicly available) is given by the observation of the price process alone, corresponding to the filtration $\bF=(\F_t)_{t\in[0,1]}$.
We suppose that the {\em inside information} is represented by the observation at $t=0$ of the random variable $L=\ind_{\{W_1\geq0\}}$. In other words, an informed agent who observes the realization of $L$ knows before the beginning of trading  whether the terminal value $S_1$ of the asset will be above or below the threshold $1/\sqrt{e}$. 
The information flow available to an informed agent is described by the  initially enlarged filtration $\bG=(\G_t)_{t\in[0,1]}$, where $\G_t=\F_t\vee\sigma(L)$ for all $t\in[0,1]$.

Clearly, the ordinary information does not allow any kind of arbitrage profits and every risk averse agent would simply choose not to trade in the risky asset $S$. On the contrary, the inside information $L$ yields arbitrage opportunities, which can also be realized through suitably chosen buy-and-hold strategies. 
In this sense, we say that $L$ yields {\em informational arbitrage} and we aim at determining the indifference value $\pi(v)$, namely the maximal amount that an agent with initial wealth $v>0$ accepts to pay for learning the realization of $L$ before the beginning of trading.

In the context of this example, we will show that
for any risk averse agent constrained to invest in non-negative portfolios the value of informational arbitrage is always given by
\[
\pi(v) = v/2.
\]
This means that, while it is attractive to acquire the inside information $L$, a risk averse agent would not sacrifice more than one half of his initial wealth in order to have the possibility of achieving arbitrage.
Moreover, there exists an arbitrage strategy which represents the optimal trading strategy for every risk averse informed agent.
Referring to Section \ref{sec:example_Brownian} for a detailed analysis of this example, we point out that the value $\pi(v)$ presents the striking feature of being a universal indifference value, which does not depend on the preference structure.
\end{exa}

In the present work, we aim at revealing which features of the inside information are at the origin of the appearance of arbitrage and understanding the indifference value of informational arbitrage in a general setting.
Motivated by Example \ref{ex:intro} and similarly as in \cite{ABS}, the problem is naturally framed in the context of an initial enlargement of filtration.
However, in order to allow for the possibility of informational arbitrage, we have to depart from the conventional assumption that $L$ is independent of the ordinary information flow $\bF$ under an equivalent probability measure (called {\em decoupling measure} in \cite{ABS}).
The notion of decoupling measure goes back to early works in the theory of enlargement of filtrations and has been widely employed in the insider trading literature (see, e.g., \cite{GP98,MR1831271,Amen,Campi05,Hillairet,HJ}). 
The existence of a decoupling measure is tantamount to the equivalence between the $\bF$-conditional law of $L$ and its unconditional law and allows to easily transfer to the initially enlarged filtration $\bG$ most of the properties of $\bF$, including the (semi-)martingale property, market completeness and, most notably, absence of arbitrage.

We assume the validity of  {\em Jacod's density hypothesis}, as introduced in the seminal paper \cite{Jac85}. This condition is significantly weaker than the existence of a decoupling measure, as it corresponds to the absolute continuity (but not necessarily equivalence) of the $\bF$-conditional law of $L$ with respect to its unconditional law.
While the passage from an equivalence to an absolute continuity relation could appear as a technical generalization, it turns out to require the development of a novel approach. Most importantly, it allows the inside information to generate arbitrage, as shown in Example \ref{ex:intro}, thus covering situations that cannot be addressed by the existing works.

The main results and contributions of the paper can be outlined as follows.
First, we provide a new martingale representation result in the initially enlarged filtration $\bG$, showing that market completeness can be transferred from $\bF$ to $\bG$ up to a change of num\'eraire. By relying on this result, we obtain a complete characterization of the validity of {\em no free lunch with vanishing risk} (NFLVR, see \cite{DS94,DS98b}) and {\em no unbounded profit with bounded risk} (NUPBR, see \cite{MR2335830}) in $\bG$.
This set of theoretical results provides the necessary foundations for the study of optimal consumption-investment problems under inside information and, possibly, in the presence of arbitrage opportunities. 
We consider preferences described by a utility stochastic field together with a random consumption clock and assume that agents are allowed to enter into leveraged positions, up to the limit of a fixed credit line. 
We provide a general solution via duality methods, which reveals the interplay between arbitrage and leverage and can be made explicit in the case of typical utility functions.
In turn, this enables us to study the indifference value $\pi(v)$ of the inside information $L$. Under natural assumptions, we prove that $\pi(v)$ is finite and also strictly positive and increasing in the allowable leverage whenever the information revealed by $L$ allows to achieve arbitrage opportunities, regardless of the preference structure.
For logarithmic and power utility functions, we obtain explicit expressions for $\pi(v)$, thereby generalizing the results of \cite{ABS} and proving some of the empirical findings reported in \cite{Liu_et_al}. 
Moreover, we provide universal bounds for the value of informational arbitrage and characterize when the indifference value of inside information is a {\em universal} value which does not depend on the preference structure, as in the case of Example \ref{ex:intro}. In particular, we show that this can happen only in the presence of arbitrage.


\subsection{Structure of the paper}

In Section \ref{sec:setting}, we introduce the general setting, consisting of two financial markets associated to two different filtrations. We provide a new martingale representation result and study (no-)arbitrage properties in the presence of inside information. 
Section \ref{sec:portfolio} deals with optimal consumption-investment problems under general preferences, allowing for non-trivial initial information, leverage and arbitrage. 
In Section \ref{sec:indiff_price} we study the indifference value of inside information and characterize under which conditions it is a universal value. Section \ref{sec:examples} contains three examples. 
For better readability, the proofs are collected in Section \ref{sec:proofs}.
Section \ref{sec:conclusions} concludes by discussing the role of  market completeness and pointing out possible directions of further research.

\subsection{Notation}

Throughout the paper, we adopt the following conventions and notations, referring to \cite{MR1219534,MR1943877} for all unexplained notions related to stochastic calculus. Let $(\Omega,\A,\prob)$ be a generic probability space endowed with some filtration $\mathbf{H}=(\mathcal{H}_t)_{t\in[0,T]}$ satisfying the usual conditions of right-continuity and $\prob$-completeness, with $T\in(0,+\infty)$ a fixed time horizon. 
We denote by $\M(\prob,\mathbf{H})$ ($\Mloc(\prob,\mathbf{H})$, resp.) the set of martingales (local martingales, resp.) on $(\Omega,\mathbf{H},\prob)$ and we tacitly assume that every local martingale has c\`adl\`ag paths a.s. 
For a given $\R^d$-valued semimartingale $X=(X_t)_{t\in[0,T]}$ on $(\Omega,\mathbf{H},\prob)$, we denote by $L(X,\mathbf{H})$ the set of all $\mathbf{H}$-predictable $\R^d$-valued processes $\varphi=(\varphi_t)_{t\in[0,T]}$ which are integrable with respect to $X$ in the filtration $\mathbf{H}$. Recall that the set $L(X,\mathbf{H})$ is invariant under equivalent changes of measure (see, e.g., \cite[Theorem 12.22]{MR1219534}). The stochastic integral of $\varphi\in L(X,\mathbf{H})$ with respect to $X$ is denoted by $(\varphi\cdot X)_t:=\int_{(0,t]}\varphi_u\ud X_u$, for all $t\in[0,T]$, with $(\varphi\cdot X)_0=0$.
Finally, we denote by $\cO(\mathbf{H})$ and $\cP(\mathbf{H})$, respectively, the optional and predictable sigma-fields on $\Omega\times[0,T]$ with respect to the filtration $\mathbf{H}$.
For an adapted process $Y=(Y_t)_{t\in[0,T]}$, we write $Y\in\cO_+(\mathbf{H})$ to denote that $Y$ is a non-negative $\cO(\mathbf{H})$-measurable process.

\section{The ordinary and the insider financial markets}	\label{sec:setting}

In this section, we first present the ordinary financial market (Section \ref{sec:ordinary}), consisting of a general arbitrage-free complete financial market in a reference filtration $\bF$. In Section \ref{sec:G}, we introduce the initially enlarged filtration $\bG$ associated to the inside information and state a new martingale representation result in $\bG$. In Section \ref{sec:arbitrage_in_G}, we characterize the (no-)arbitrage properties of the financial market under inside information.

\subsection{The ordinary financial market}	\label{sec:ordinary}

We consider a probability space $(\Omega,\A,\prob)$ endowed with a filtration $\bF = (\F_t)_{t\in[0,T]}$ satisfying the usual conditions, where $T<+\infty$ represents a fixed investment horizon. 
For simplicity of presentation, we assume that the initial sigma-field $\F_0$ is trivial.
On $(\Omega,\bF,\prob)$, we let $S=(S_t)_{t\in[0,T]}$ be a $d$-dimensional non-negative semimartingale, representing the prices of $d$ risky assets, discounted with respect to some baseline security\footnote{The non-negativity assumption on $S$ is made for simplicity of presentation and can be relaxed at the expense of slightly greater technicalities, introducing the notion of sigma-martingale (see, e.g., \cite{DS98b,TakSch}).}.

We call {\em ordinary financial market} the tuple $(\Omega,\bF,\prob;S)$, where the filtration $\bF$ is supposed to represent the publicly available information.
We assume that $S$ satisfies the {\em no free lunch with vanishing risk} (NFLVR) condition on $\basisp$, see \cite{DS98b}. More specifically, we shall assume the validity of the following condition throughout the paper.

\begin{stass}	\label{ass:Q}
There exists a unique probability measure $\qprob$ on $(\Omega,\F_T)$ such that $\qprob\sim\prob$ and $S\in\Mloc(\qprob,\bF)$.
\end{stass}

Assumption \ref{ass:Q} implies that the ordinary financial market $(\Omega,\bF,\prob;S)$ is arbitrage-free (in the sense of NFLVR) and {\em complete}.\footnote{The relevance of the market completeness assumption will be discussed in Section \ref{sec:conclusions}.} 
Indeed, by \cite[Theorem 11.3]{MR542115}, every local martingale on $(\Omega,\bF,\qprob)$ can be represented as a stochastic integral of $S$. In particular, every $\F_T$-measurable (bounded) contingent claim can be replicated by self-financing trading.
We denote by $Z=(Z_t)_{t\in[0,T]}$ the density process of $\qprob$ with respect to $\prob$ on $\bF$, i.e., $Z_t=\!\ud\qprob|_{\F_t}/\!\ud\prob|_{\F_t}$, for all $t\in[0,T]$. Recall that $Z$ is a strictly positive martingale on $\basisp$ with $Z_0=1$.

\begin{rem}	\label{rem:NUPBR_in_F}
Assumption \ref{ass:Q} can be relaxed by only requiring the existence of a unique equivalent local martingale deflator for $S$ on $\basisp$ (see \cite{MR2972237}). This ensures that NUPBR holds on $(\Omega,\bF,\prob;S)$, so that portfolio optimization problems can be meaningfully solved in $\bF$.
In view of \cite[Corollary 2.1]{SY98}, this condition also suffices to ensure that the ordinary financial market $(\Omega,\bF,\prob;S)$ is complete.
However, since the main goal of the present paper is to study the value of inside information generating arbitrage opportunities, when the latter are impossible to achieve on the basis of ordinary information alone, we find it more natural to work under Assumption \ref{ass:Q}.
\end{rem}

\subsection{The initially enlarged filtration $\bG$}	\label{sec:G}

We assume that the inside information is generated by an $\A$-measurable random variable $L$ taking values in a Lusin space $(E,\cE)$, where $\cE$ denotes the Borel sigma-field of $E$. The associated {\em initially enlarged filtration} $\bG=(\G_t)_{t\in[0,T]}$ is defined as the smallest filtration containing $\bF$ and such that $L$ is $\G_0$-measurable, i.e., $\G_t:=\F_t\vee\sigma(L)$, for all $t\in[0,T]$.
We denote by $\lambda : \cE \rightarrow [0,1]$ the unconditional law of $L$, so that $\lambda(B)=\prob(L\in B)$ holds for all $B\in\cE$. For each $t\in[0,T]$, let $\nu_t : \Omega \times \cE \rightarrow [0,1]$ be a regular version of the $\F_t$-conditional law of $L$ (which always exists since $(E,\cE)$ is Lusin). 

Throughout the paper, we shall assume the validity of the following condition, which is known as {\em Jacod's density hypothesis} in enlargement of filtrations theory (see \cite{Jac85}). 

\begin{stass}	\label{ass:Jacod}
For all $t\in[0,T]$, $\nu_t \ll\lambda$ holds in the a.s. sense.
\end{stass}

Assumption \ref{ass:Jacod} was introduced in the seminal work \cite{Jac85} in order to prove the validity of the {\em $H'$-hypothesis}, namely that every $\bF$-semimartingale is also a $\bG$-semimartingale. In a frictionless financial market, the failure of the semimartingale property is incompatible with NUPBR (see \cite{KarPla11}), which is in turn a necessary condition for the solution of portfolio optimization problems (see \cite[Proposition 4.19]{MR2335830}). Therefore, the validity of the $H'$-hypothesis represents a necessary requirement in our framework (in this respect, see also Remark \ref{rem:density_clock}).

A central feature of our work is that Assumption \ref{ass:Jacod} is only required to hold as an absolute continuity relation and not as an equivalence, as explained in the introduction. This fact turns out to be intimately linked to the existence of arbitrage opportunities in $\bG$ (see Theorem \ref{thm:results_NUPBR}).
As a preliminary, the following lemma presents some fundamental consequences of Assumption \ref{ass:Jacod}.

\begin{lem}	\label{lem:prelim_G}
The filtration $\GG$ is right-continuous and every semimartingale on $\basisp$ is also a semimartingale on $\basisgp$. 
There exists a $\pare{\cE \otimes \cO(\FF)}$-measurable function $E \times \Omega \times [0,T] \ni (x, \omega,t)\mapsto q^x_t(\omega) \in \R_+$, c\`adl\`ag \ in $t \in[0,T]$ and such that:
\begin{itemize}
\item[(i)]
for every $t\in[0,T]$, $\nu_t(\!\ud x) = q^x_t \,\lambda(\!\ud x)$ holds a.s.;
\item[(ii)]
for every $x\in E$, the process $q^x=(q^x_t)_{t\in[0,T]}$ is a martingale on $\basisp$.
\end{itemize}
Furthermore, it holds that $\prob(q^L_t>0)=1$, for all $t\in[0,T]$.
\end{lem}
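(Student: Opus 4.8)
The plan is to establish the three assertions of Lemma~\ref{lem:prelim_G} in turn, the first two by citing Jacod's results and the last by a direct probabilistic argument.

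First I would recall that Assumption~\ref{ass:Jacod}, Jacod's density hypothesis, is precisely the hypothesis under which the conclusions of~\cite{Jac85} hold. The right-continuity of $\bG$ and the $H'$-hypothesis (every $\basisp$-semimartingale is a $\basisgp$-semimartingale) follow from~\cite[Th\'eor\`eme~(2.1)]{Jac85} and the accompanying discussion; I would cite these explicitly. The existence of a version of the conditional densities $(x,\omega,t)\mapsto q^x_t(\omega)$ that is jointly measurable in $(\cE\otimes\cO(\FF))$, c\`adl\`ag in $t$, and such that $q^x$ is a (c\`adl\`ag) $\basisp$-martingale for each fixed $x\in E$ with $\nu_t(\!\ud x)=q^x_t\,\lambda(\!\ud x)$ a.s.\ for each $t$, is the content of~\cite[Lemme~(1.8)]{Jac85} (see also the version in~\cite{MR1943877}). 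So parts~(i) and~(ii), and the joint-measurability/regularity statements, are essentially a matter of invoking the right result; I would state this and move on.

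The substantive point is the final claim that $\prob(q^L_t>0)=1$ for every $t\in[0,T]$. Here I would argue as follows. Fix $t$ and consider the event $A_t:=\{\omega : q^{L(\omega)}_t(\omega)=0\}$, which is $\F_t\vee\sigma(L)$-measurable by the joint measurability of $q$. Using a regular conditional distribution, disintegrate:
\[
\prob(A_t) \;=\; \int_E \prob\!\left(q^x_t = 0 \,\big|\, L=x\right)\lambda(\!\ud x).
\]
The delicate step is to replace the conditioning on $\{L=x\}$ by the unconditional probability $\prob(q^x_t=0)$; this is justified because, for $\lambda$-a.e.\ $x$, the $\F_t$-conditional law of $L$ is $q^x_t\,\lambda(\!\ud x)$, so that for a bounded $\F_t$-measurable $Y$ one has $\expec[Y\ind_{\{q^L_t=0\}}] = \int_E \expec[Y q^x_t \ind_{\{q^x_t=0\}}]\,\lambda(\!\ud x) = 0$ because $q^x_t\ind_{\{q^x_t=0\}}=0$; more carefully one shows $\prob(A_t)=\int_E\expec[\ind_{\{q^x_t=0\}}q^x_t]\,\lambda(\!\ud x)$. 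Since $q^x_t\ge 0$, the integrand $\expec[\ind_{\{q^x_t=0\}}q^x_t]$ vanishes identically, hence $\prob(A_t)=0$, which is the assertion.

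I expect the main obstacle to be making the disintegration argument for $\prob(q^L_t>0)=1$ fully rigorous: one must carefully use the defining property $\nu_t(\!\ud x)=q^x_t\,\lambda(\!\ud x)$ together with the tower property to express $\prob(q^L_t=0)$ as an integral against $\lambda$ of an expectation that is manifestly zero, being of the form $\expec[q^x_t\ind_{\{q^x_t=0\}}]$. A clean way to organize this is to test against arbitrary bounded $\F_t$-measurable random variables and use the characterization of $\nu_t$ as a version of the $\F_t$-conditional law of $L$; the positivity $q^x_t\ge0$ then does all the work. Everything else reduces to quoting~\cite{Jac85}.
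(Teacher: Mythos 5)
Your proposal is correct and follows essentially the same route as the paper for the first parts: the right-continuity, the $H'$-hypothesis, and the existence of a jointly measurable c\`adl\`ag martingale version of the densities $q^x$ are all obtained by citing Jacod (the paper actually cites \cite[Lemma~4.2]{Fon2015} for right-continuity and \cite[Theorem 1.1]{Jac85} for the semimartingale stability, but these are the same results you have in mind). For the final claim $\prob(q^L_t>0)=1$ the paper simply cites \cite[Corollary~1.11]{Jac85}, whereas you unpack the proof; your computation
\[
\prob(q^L_t=0) \;=\; \expec\!\left[\int_E \ind_{\{q^x_t=0\}}\,\nu_t(\!\ud x)\right] \;=\; \int_E \expec\!\left[\ind_{\{q^x_t=0\}}\,q^x_t\right]\lambda(\!\ud x) \;=\; 0
\]
is exactly the standard proof of that corollary and is correct as stated. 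One remark on the exposition: the sentence claiming that the ``delicate step is to replace the conditioning on $\{L=x\}$ by the unconditional probability $\prob(q^x_t=0)$'' is misleading, and in fact such a replacement would be unjustified (it would amount to independence of $L$ from $\F_t$). Fortunately you do not actually use it: what you use, correctly, is the disintegration over $\F_t$ via $\nu_t(\!\ud x)=q^x_t\,\lambda(\!\ud x)$, not a conditioning on $\sigma(L)$. I would simply delete the opening remark about disintegrating over $\{L=x\}$ and present the $\F_t$-disintegration directly.
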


Under the present standing assumptions, we can establish the following proposition, which provides a fundamental martingale representation result in the initially enlarged filtration $\bG$. 

\begin{prop}	\label{prop:MRT}
Let $M=(M_t)_{t\in[0,T]}$ be a local martingale on $\basisgp$. Then there exists a process $K=(K_t)_{t\in[0,T]}\in L(S,\bG)$ such that
\[
M_t = \frac{Z_t}{q^L_t}\bigl(M_0 + (K\cdot S)_t\bigr)
\qquad\text{ a.s. for all $t\in[0,T]$.}
\]
\end{prop}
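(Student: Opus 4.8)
The plan is to reduce the martingale representation in $\bG$ to the known martingale representation in $\bF$ (available under Assumption \ref{ass:Q} via \cite[Theorem 11.3]{MR542115}) by transferring the problem to the filtration $\bF$ through an explicit change of measure and num\'eraire. The key object is the process $q^L = (q^L_t)_{t\in[0,T]}$ from Lemma \ref{lem:prelim_G}, which is a strictly positive $\bG$-martingale (strict positivity is exactly the last assertion of Lemma \ref{lem:prelim_G}), together with the $\bF$-density process $Z$ of $\qprob$, which we view as a $\bG$-martingale as well (it remains a $\bG$-martingale because $Z$ is $\bF$-adapted and, under Jacod's hypothesis, $\bF$-martingales that are also in the appropriate integrability class remain $\bG$-martingales — this uses that $\expec[Z_T \mid \G_t] = \expec[Z_T\mid \F_t] = Z_t$ since $L$ is $\qprob$-conditionally handled by the density $q^x$; more precisely one checks $\expec[Z_t f(L)\mid \F_s] = \expec[Z_t q^L_t/q^L_s \cdot (q^L_s/q^L_t) f(L)\mid\F_s]$, which reduces to standard Jacod computations).

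First I would introduce the probability measure $\ptilde$ on $(\Omega,\G_T)$ with density process $q^L/Z$ (or rather its reciprocal, depending on the direction), i.e. roughly $\mathrm d\ptilde/\mathrm d\prob|_{\G_t} \propto q^L_t/Z_t$ — here one must check this is a genuine strictly positive $\bG$-martingale starting at $1$, for which strict positivity of $q^L$ and of $Z$ is essential, and integrability must be argued (this is where the c\`adl\`ag choice of $q^x$ and the martingale property from Lemma \ref{lem:prelim_G}(ii) are used). Under $\ptilde$, a $\bG$-local martingale $M$ corresponds, via the Girsanov/Bayes correspondence, to the $\bG$-local martingale $N := (q^L/Z)\,M$ (up to the correct placement of the density), and the claim becomes: every $\bG$-local martingale under $\ptilde$ is of the form $N_0 + (K\cdot S)$ for some $K\in L(S,\bG)$. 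The point of choosing $\ptilde$ this way is that $S$ should be a $\ptilde$-local martingale in $\bG$ and, crucially, that $(\Omega,\bG,\ptilde)$ should inherit the martingale representation property with integrator $S$ from $(\Omega,\bF,\qprob)$.

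The heart of the argument — and the step I expect to be the main obstacle — is establishing this transfer of the predictable representation property from $(\bF,\qprob)$ to $(\bG,\ptilde)$. Under Jacod's hypothesis, the decoupling-type measure $\ptilde$ is precisely designed so that, conditionally on $\G_0=\F_0\vee\sigma(L)$, the filtration $\bG$ behaves like $\bF$ "frozen at the value $L$". One standard route: show that under $\ptilde$ the random variable $L$ is $\F_T$-conditionally distributed as $\lambda$ in a way that makes $(\Omega,\bG,\ptilde)$ a "parametrised" version of $(\Omega,\bF,\qprob)$, so that a $\bG$-local martingale $N$ under $\ptilde$ can be disintegrated as a family $N^x$ of $\bF$-local martingales under $\qprob$, each represented as $N_0^x + (K^x\cdot S)$ by \cite[Theorem 11.3]{MR542115}, and then glued back via $K_t := K^L_t$, with measurability of $(x,\omega,t)\mapsto K^x_t$ and $\bG$-predictability of $K^L$ checked through the $(\cE\otimes\cP(\bF))$-measurable structure (à la Jacod; see \cite[Lemma 1.8]{Jac85} or its descendants). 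The delicate points are: (a) verifying $S\in\Mloc(\ptilde,\bG)$, which requires computing the $\bG$-semimartingale decomposition of $S$ and checking the drift vanishes under $\ptilde$ — the information drift involves $\mathrm d\langle S, q^{\cdot}\rangle/q^{\cdot}$ and is annihilated precisely by the change of measure; (b) handling the localisation carefully, since $q^L$ need not be bounded away from zero uniformly; and (c) ensuring the integrand $K$ obtained is genuinely in $L(S,\bG)$ and not merely in $L(S,\bG)$ after a $\ptilde$-localisation — here one uses that $L(S,\bG)$ is invariant under equivalent changes of measure (cited in the Notation section). Assembling these, one unwinds the changes of measure and num\'eraire: $M_t = (Z_t/q^L_t)\,N_t = (Z_t/q^L_t)(N_0 + (K\cdot S)_t)$, and since $N_0 = M_0$ (the num\'eraire is normalised to $1$ at time $0$, as $Z_0 = q^L_0 = 1$), this is exactly the claimed identity.
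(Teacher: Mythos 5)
Your proposal rests on constructing a decoupling-type measure $\ptilde$ with density process proportional to $q^L/Z$ or $Z/q^L$, and this is precisely the object that does \emph{not} exist under the paper's standing assumptions. The process $Z/q^L$ is a strictly positive $\bG$-local martingale under $\prob$ (this is Lemma \ref{lem:mgG}), but Theorem \ref{thm:results_NUPBR} shows that it is a \emph{true} $\bG$-martingale if and only if NFLVR holds in $(\Omega,\bG,\prob;S)$, which is exactly what the paper wants to allow to fail. In the regime of interest ($\expec[Z_T/q^L_T]<1$), $Z/q^L$ is a strict local martingale and cannot be used to define a probability measure $\ptilde$ on $\G_T$; its reciprocal $q^L/Z$ fares no better (it is not even a local martingale in general). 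This is not a localisation-technicality to be patched up: the introduction of the paper explicitly flags the absence of an equivalent decoupling measure as the departure point from \cite{ABS}, and your proof plan essentially re-imports it.

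Two further issues. First, you assert that $Z$ remains a $\bG$-martingale under Jacod's hypothesis; this is false in general. Jacod's hypothesis guarantees that $\bF$-(local) martingales are $\bG$-semimartingales, not $\bG$-martingales, and they acquire an information drift. The correct statement (Lemma \ref{lem:mgG}) is that for any $M\in\Mloc(\prob,\bF)$, the \emph{corrected} process $M/q^L$ lies in $\Mloc(\prob,\bG)$; applying this to $M=Z$ and $M=ZS$ is what yields the candidate deflator $Z/q^L$. Second, the localisation you gesture at is not obviously innocuous: the paper's own localisation argument (the proof that $\Z=\{Z/q^L\}$ in Theorem \ref{thm:results_NUPBR}) explicitly invokes Proposition \ref{prop:MRT} to establish the local MRP under the stopped decoupling measures $\qprob^{\bG,n}$. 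So using the localised decoupling measures to prove Proposition \ref{prop:MRT} risks circularity, unless one instead proves the localised Amendinger-type transfer of MRP from scratch, with all the measurable-gluing issues on the $\lambda$-null set $\{q^x_{\tau_n}=0\}$ carefully handled.

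The paper avoids the decoupling measure entirely. It first shows that the $(d+1)$-dimensional $\bF$-local martingale $ZX$, with $X=(1,S)$, has the predictable representation property on $\basisp$ (this follows from Assumption \ref{ass:Q} and \cite[Theorem 11.3]{MR542115}). It then invokes \cite[Proposition 4.10]{Fon2015} to represent any $\bG$-local martingale under $\prob$ as $M=(1/q^L)(M_0+H\cdot(ZX))$ with $H\in L(ZX,\bG)$, and finally converts this representation into the claimed form by writing $1/Z=1+\theta\cdot S$ (MRP of $S$ under $\qprob$), applying integration by parts to $(M_0+H^n\cdot(ZX))/Z$ for the truncations $H^n=H\ind_{\{\|H\|\le n\}}$, and passing to the limit in the semimartingale topology. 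No change of measure beyond $\prob\leftrightarrow\qprob$ on $\bF$ is needed, which is what makes the argument robust to the failure of NFLVR in $\bG$.
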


The above proposition shows that the martingale representation property of $S$ on $\basisq$ can be transferred to the initially enlarged filtration $\bG$ under $\prob$ up to a suitable ``change of num\'eraire'', represented by the process $Z/q^L$. 
Furthermore, the process $Z/q^L$ plays a key role in the study of the (no-)arbitrage properties of $S$ on $\basisgp$, as we are going to see in Section \ref{sec:arbitrage_in_G}.

\subsection{Market viability under inside information}	\label{sec:arbitrage_in_G}

An agent endowed with inside information ({\em informed agent}) is supposed to have access to the information generated by $L$, i.e., to the initially enlarged filtration $\bG$. An informed agent can trade in the same set of securities available in the ordinary financial market, but is allowed to rely on the information flow $\bG$ when constructing portfolios. We call the tuple $(\Omega,\bG,\prob;S)$ the {\em insider financial market}, recalling that Assumption \ref{ass:Jacod} ensures that $S$ is a semimartingale on $\basisgp$ (see Lemma \ref{lem:prelim_G}).

We are especially interested in the situation where the additional information generated by $L$ yields arbitrage opportunities, so that NFLVR does not hold in the insider financial market $(\Omega,\bG,\prob;S)$. However, we need to ensure that $(\Omega,\bG,\prob;S)$ still represents a viable financial market, in the sense that portfolio optimization problems can be meaningfully solved. To this effect, in view of \cite[Proposition 4.19]{MR2335830}, the minimal no-arbitrage requirement is represented by the {\em no unbounded profit with bounded risk} (NUPBR) condition, defined as the boundedness in probability of the set $\bigl\{(\varphi\cdot S)_T : \varphi\in L(S,\bG)\text{ and }(\varphi\cdot S)_t\geq-1 \text{ a.s. for all $t\in[0,T]$}\bigr\}$.
By \cite[Theorem 2.6]{TakSch} (see also \cite[Theorem 2.1]{MR2972237} in the case $d=1$), $S$ satisfies NUPBR on $\basisgp$ if and only if 
\[
\Z := \bigl\{ Z\in\Mloc(\prob,\bG) \colon  Z>0, \; Z_0 =1 \text{ and }  ZS\in\Mloc(\prob,\bG) \bigr\} 
\neq \emptyset,
\]
with $\Z$ denoting the set of {\em equivalent local martingale deflators} (ELMDs) for $S$ on $\basisgp$.

The following result provides a complete characterization of the (no-)arbitrage properties of the insider financial market $(\Omega,\bG,\prob;S)$, in the sense of both NUPBR and NFLVR.

\begin{thm}	\label{thm:results_NUPBR}
Suppose that the space $L^1(\Omega,\F_T,\prob)$ is separable. 
Then, NUPBR holds on $\basisgp$ if and only if the set $\{q^x=0<q^x_-\}$ is evanescent for $\lambda$-a.e. $x\in E$. In this case, it holds that
$\Z=\{Z/q^L\}$.
Moreover, the following properties are equivalent:
\begin{enumerate}
\item[(i)] $S$ satisfies NFLVR on $\basisgp$;
\item[(ii)] for all $t\in[0,T]$, $\lambda\ll\nu_t$ holds in the a.s. sense;
\item[(iii)] $\prob(q^x_T>0)=1$ for $\lambda$-a.e. $x\in E$;
\item[(iv)] $\expec[1/q^L_T]=1$;
\item[(v)] $\expec[Z_T/q^L_T]=1$;
\item[(vi)] the process $1/q^L=(1/q^L_t)_{t\in[0,T]}$ is a martingale on $\basisgp$;
\item[(vii)] the process $N/q^L=(N_t/q^L_t)_{t\in[0,T]}$ is a martingale on $\basisgp$, for every  $N\in\M(\prob,\bF)$.
\end{enumerate}
\end{thm}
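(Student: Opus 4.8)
The plan is to prove the NUPBR characterization first, then establish the chain of equivalences (i)--(vii) by showing that each of them holds precisely when the ``change of numéraire'' process $Z/q^L$ furnished by Proposition \ref{prop:MRT} is a genuine uniformly integrable martingale on $\basisgp$ rather than a strict local martingale. The point is that $Z/q^L$ is always an ELMD for $S$ on $\basisgp$ (so NUPBR never fails, under the evanescence condition), and NFLVR holds iff this ELMD has constant expectation, i.e. iff it defines an equivalent local martingale \emph{measure}, which in turn is equivalent to a nondegeneracy statement about the Jacod densities $q^x$.

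\textbf{Step 1: the NUPBR part.} First I would verify that $Z/q^L \in \Mloc(\prob,\bG)$ and that $(Z/q^L)S \in \Mloc(\prob,\bG)$, so that $Z/q^L \in \Z$ whenever it is well-defined and strictly positive. Well-definedness and strict positivity hold off an evanescent set precisely when $\{q^x = 0 < q^x_-\}$ is evanescent for $\lambda$-a.e.\ $x$: otherwise $q^L$ jumps to zero with positive probability, and since $1/q^L$ would then have a jump to $+\infty$, no strictly positive local martingale deflator can exist (this direction uses the uniqueness structure coming from the martingale representation of Proposition \ref{prop:MRT}). That $ZS/q^L$ is a $\bG$-local martingale should follow from applying Proposition \ref{prop:MRT} to the components of $S$ — indeed $S$ viewed as a $\bG$-local martingale multiplied by $Z/q^L$ must be a $\prob$-$\bG$-local martingale because $ZS$ is a $\prob$-$\bF$-local martingale and the Jacod change-of-measure machinery converts $\bF$-local martingales divided by $q^L$ into $\bG$-local martingales. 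The separability of $L^1(\Omega,\F_T,\prob)$ is what guarantees, via the countable generation of $\F_T$, that the exceptional $\lambda$-null set of bad $x$'s can be chosen uniformly, so that the pointwise-in-$x$ statement upgrades to an evanescence statement for the process $q^L$. Finally, $\Z = \{Z/q^L\}$ because any other ELMD would, together with $Z/q^L$, produce two distinct equivalent local martingale measures in $\bF$ after an appropriate disintegration, contradicting the uniqueness in Assumption \ref{ass:Q}; alternatively, the martingale representation forces any $\bG$-deflator to be a stochastic-integral-free transform of $Z/q^L$, hence equal to it.

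\textbf{Step 2: the equivalence chain.} The equivalences (iii)$\Leftrightarrow$(iv)$\Leftrightarrow$(v) are the arithmetic core: since $(q^x_t)$ is a $\prob$-$\bF$-martingale with $q^x_0 = 1$ for $\lambda$-a.e.\ $x$, Jensen/Fatou gives $\expec[1/q^x_T] \geq 1$ with equality iff $q^x_T > 0$ a.s., and then integrating against $\lambda(\ud x)$ and using the disintegration $\expec[f(L)] = \int \expec[f(x) q^x_T]\,\lambda(\ud x)$ (which identifies $\expec[1/q^L_T]$ with $\int \expec[q^x_T \cdot (1/q^x_T)]\,\lambda(\ud x) = \int \prob(q^x_T>0)\,\lambda(\ud x)$) converts the pointwise statement into (iv); the passage (iv)$\Leftrightarrow$(v) follows from $\expec[Z_T/q^L_T] = \expec^{\qprob}[1/q^L_T]$ together with the fact that under $\qprob$ the disintegration still applies. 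For (ii)$\Leftrightarrow$(iii) one observes that $\lambda \ll \nu_t$ a.s.\ is, by Lemma \ref{lem:prelim_G}(i), the statement $q^x_t > 0$ for $\lambda$-a.e.\ $x$, and a martingale positivity/absorption argument shows the relevant $t$ is $t=T$ (positivity can only be lost, never regained). The implication (v)$\Rightarrow$(vi): if $\expec[Z_T/q^L_T]=1$ then the local martingale $Z/q^L$ on $\basisgp$ has terminal expectation equal to its initial value $1$, hence is a true martingale, and (vii) then follows because for $N \in \M(\prob,\bF)$ the process $N Z / (Z/q^L)^{-1}$ — more precisely $N \cdot (Z/q^L)$ after a Bayes-type computation — is a $\prob$-$\bG$-martingale exactly when the deflator is a true martingale (this is the statement that, under NFLVR, $\bF$-martingales stay $\bG$-martingales after the numéraire change). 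Taking $N \equiv 1$ shows (vii)$\Rightarrow$(vi), and (vi)$\Rightarrow$(iv) is immediate by evaluating at $T$. Finally (i)$\Leftrightarrow$(v): NFLVR on $\basisgp$ is equivalent to the existence of an equivalent \emph{local martingale measure}; given NUPBR we already have the ELMD $Z/q^L$, and this upgrades to an ELMM iff $Z/q^L$ has constant expectation, i.e.\ (v); conversely an ELMM's density process is a strictly positive $\bG$-martingale that deflates $S$, hence by $\Z = \{Z/q^L\}$ must coincide with $Z/q^L$, forcing (v). One should be slightly careful to invoke the Delbaen--Schachermayer fundamental theorem in the form valid for locally bounded or sigma-martingale settings, consistent with the non-negativity footnote.

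\textbf{Main obstacle.} The delicate point is Step 1's uniform-in-$x$ handling of the exceptional sets and the proof that $Z/q^L$ is the \emph{unique} element of $\Z$: the separability hypothesis on $L^1(\Omega,\F_T,\prob)$ is doing real work here, letting one pass from ``for $\lambda$-a.e.\ $x$, the process $q^x$ has no bad jumps'' to ``the process $(q^L_{t})_{t}$ along the random index $L$ has no bad jumps, off an evanescent set,'' and symmetric care is needed in deducing uniqueness of the deflator from the $\bF$-uniqueness in Assumption \ref{ass:Q}. The rest of the equivalence chain, by contrast, is a sequence of by-now-standard manipulations with Jacod densities, Bayes' rule, and the martingale-vs-strict-local-martingale dichotomy, modulo keeping track of which measure ($\prob$ or $\qprob$) and which filtration ($\bF$ or $\bG$) one is working under at each stage.
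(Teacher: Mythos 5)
Your Step 2 (the equivalence chain) is essentially correct in spirit and, in places, offers a genuinely cleaner route than the paper: for $(i)\Rightarrow(v)$ you exploit the already-established uniqueness $\Z=\{Z/q^L\}$ to identify an ELMM density directly, whereas the paper constructs an explicit optimal-arbitrage strategy (a choice motivated by Remark~\ref{rem:opt_arb}, not by necessity). Your disintegration identity $\expec[1/q^L_T]=\int_E\prob(q^x_T>0)\,\lambda(\ud x)$ and its $\qprob$-analogue match the paper's computation. Two caveats: you conflate $(iv)$ and $(v)$ when deducing $(vi)$ (the paper gets $(iv)\Leftrightarrow(vi)$ from $1/q^L$ being a positive $\bG$-local martingale via Lemma~\ref{lem:mgG}, not from $Z/q^L$), and your Bayes sketch for $(v)\Rightarrow(vii)$ silently relies on $\qprob^{\bG}$ being a decoupling measure with $\qprob^{\bG}|_{\F_T}=\qprob|_{\F_T}$; this can be made rigorous, but the paper instead proves $(iii)\Rightarrow(vii)$ by a direct computation with \eqref{expec_init}, which is shorter.

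Step 1, however, contains a genuine conceptual error. You assert that if $\{q^x=0<q^x_-\}$ is not evanescent for a $\lambda$-positive set of $x$, then ``$q^L$ jumps to zero with positive probability, and since $1/q^L$ would then have a jump to $+\infty$, no strictly positive local martingale deflator can exist.'' This is false on two counts. First, by Lemma~\ref{lem:prelim_G} (and \cite[Corollary~1.11]{Jac85}, cited in the paper's proof) one always has $\prob(q^L_t>0)=1$ for all $t$, and in fact the hitting time of zero of $q^L$ is a.s.\ infinite; $q^L$ never reaches zero, regardless of Assumption~\ref{ass:NoJumpZero}. Second, even if $q^L$ did hit zero, that would only disqualify the candidate $1/q^L$ (or $Z/q^L$) as a deflator — it would not rule out all other elements of $\Z$. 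The actual necessity argument in the paper is much subtler: it introduces the $\bF$-stopping times $\eta^x$ (the times at which $q^x$ jumps to zero), builds the compensated indicator $\bF$-martingales $M^x=-(\ind_{\dbra{\eta^x,T}}-(\ind_{\dbra{\eta^x,T}})^p)$, replicates them via $H^x\in L(S,\bF)$ using completeness, and then observes that along the random index $x=L$ the jump term vanishes (precisely because $\eta^L=+\infty$ a.s.), leaving $H^L\cdot S=(\ind_{\dbra{\eta^x,T}})^p|_{x=L}$, a non-negative, non-decreasing process with strictly positive terminal expectation — an unbounded profit with bounded risk. The separability hypothesis on $L^1(\Omega,\F_T,\prob)$ is used at a specific technical point in this construction, namely to obtain a $(\cP(\bF)\otimes\cE)$-jointly measurable version of the dual predictable projection $(\ind_{\dbra{\eta^x,T}})^p$ via \cite[Proposition~4]{SY}; your reading of it as ensuring a uniform choice of exceptional $\lambda$-null set is not what is going on. Finally, your sketch of the uniqueness $\Z=\{Z/q^L\}$ (``disintegration producing two ELMMs in $\bF$,'' or ``stochastic-integral-free transform'') is too vague to stand: the paper's argument requires a localization to $\bG$-stopping times $\tau_n$, a verification via Proposition~\ref{prop:MRT} that $S^{\tau_n}$ has the martingale representation property under the localized measure $\qprob^{\bG,n}$ on $\bG^n$, and an appeal to the uniqueness of the martingale measure given the MRP (\cite[Corollary~11.4]{MR542115} with a non-trivial initial sigma-field), none of which is suggested by your sketch.
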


The sufficiency of the condition that $\{q^x=0<q^x_-\}$ is evanescent for $\lambda$-a.e. $x\in E$ for NUPBR to hold in $\bG$ has been shown in \cite[Theorem 1.12]{AFK} (see also \cite[Theorem 6]{ACJ15}). In our setting, the completeness of the financial market enables us to prove that the same condition is also necessary for NUPBR to hold in $(\Omega,\bG,\prob;S)$.
We point out that, in Theorem \ref{thm:results_NUPBR}, the separability assumption is only needed in the proof of the necessity part. 
Motivated by the above theorem, we now introduce our last standing assumption, which will be assumed throughout the paper. 

\begin{stass}	\label{ass:NoJumpZero}
The set $\{q^x=0<q_{-}^x\}$ is evanescent for $\lambda$-a.e. $x\in E$.
\end{stass}

We are especially interested in the case where the densities $q^x$ can reach zero, as this is intimately related to the existence of arbitrage opportunities in the insider financial market $(\Omega,\bG,\prob;S)$. In general, the densities $q^x$ can reach zero either in a continuous way or due to a jump to zero. Assumption \ref{ass:NoJumpZero} excludes a jump-to-zero behavior.
As shown in Theorem \ref{thm:results_NUPBR}, under our standing assumptions, the set of ELMDs for $S$ on $\basisgp$ is non-empty and consists of a singleton. 
In particular, in view of \cite[Corollary 2.1]{SY98}, the latter property implies that the insider financial market $(\Omega,\bG,\prob;S)$ inherits the completeness of the ordinary financial market $(\Omega,\bF,\prob;S)$.

The last part of Theorem \ref{thm:results_NUPBR} gives a precise characterization of when the inside information generates arbitrage opportunities for an informed agent. This happens if and only if the $\F_T$-conditional law $\nu_T$ of $L$ fails to be  equivalent with respect to the unconditional law $\lambda$. 
The failure of the equivalence means that there exist some scenarios that, from the point of view of an ordinary agent, are a priori possible (i.e., they have a strictly positive $\lambda$-measure) but can be later revealed to be impossible (i.e., they can be assigned zero $\nu_T$-measure). For an informed agent, such scenarios would be excluded already before the beginning of trading, thus providing a clear informational advantage. 
This phenomenon will also be clarified by the examples considered in Section \ref{sec:examples}.


\begin{rem}[On optimal arbitrage]	\label{rem:opt_arb}
Theorem \ref{thm:results_NUPBR} shows that NFLVR holds in $(\Omega,\bG,\prob;S)$ if and only if $\expec[Z_T/q^L_T]=1$. 
Observe that $\expec[Z_T/q^L_T]$ corresponds to the average cost of replicating the constant payoff $1$ in $(\Omega,\bG,\prob;S)$, as shown in the proof of Theorem \ref{thm:results_NUPBR}.
In the terminology of \cite{CT15}, the quantity $1/\expec[Z_T/q^L_T]$ is the {\em optimal arbitrage profit}, if $\expec[Z_T/q^L_T]<1$. In this sense, Theorem \ref{thm:results_NUPBR} shows that NFLVR holds in $(\Omega,\bG,\prob;S)$ if and only if no optimal arbitrage is possible.
\end{rem}

\begin{rem}[On the num\'eraire portfolio]	\label{rem:num_port_G}
The unique ELMD $Z/q^L$ is {\em tradable}, in the sense that there exists a process $\phi\in L(S,\bG)$ such that $q^L/Z=1+\phi\cdot S$ (this is a direct consequence of Proposition \ref{prop:MRT}, taking $M\equiv 1$). In other words, adopting the terminology of \cite{KKS}, the process $1+\phi\cdot S$ is the {\em local martingale num\'eraire} for the insider financial market $(\Omega,\bG,\prob;S)$.
\end{rem}

\section{Optimal consumption-investment problems under inside information}\label{sec:portfolio}

In this section, we study general optimal consumption-investment problems via duality techniques, allowing for state-dependent utilities and intermediate consumption.
Similarly to \cite{ABS}, we allow for a non-trivial initial information,  represented by the inside information generated by the random variable $L$.
However, since we are especially interested in the case where the knowledge of $L$ yields arbitrage opportunities, we have to depart from a classical duality approach based on martingale measures.
The results of this section represent fundamental ingredients for the computation of the value of informational arbitrage, which will be discussed in Section \ref{sec:indiff_price}.

\subsection{Admissible portfolios}	\label{sec:strategies}

We fix a {\em stochastic clock} $\kappa=(\kappa_t)_{t\in[0,T]}$, which is a non-decreasing c\`adl\`ag $\bF$-adapted bounded process with $\kappa_0\!=\!0$ and such that $\prob(\kappa_T\!>\!0|\sigma(L))>0$ a.s. The stochastic clock $\kappa$ represents the notion of time according to which consumption is assumed to occur. 


A \emph{portfolio} is defined as a triplet $\Pi=(v,\vartheta,c)$, where $v\in\R$ represents an initial capital, $\vartheta=(\vartheta_t)_{t\in[0,T]}$ is an $\R^d$-valued $S$-integrable process representing the holdings in the $d$ risky assets and $c=(c_t)_{t\in[0,T]}$ is a non-negative process representing the consumption rate. 
For an ordinary agent, the strategy $\vartheta$ and the consumption process $c$ are required to be measurable with respect to $\cP(\bF)$ and $\cO(\bF)$, respectively. On the other hand, an informed agent is allowed to construct portfolios by choosing $\cP(\bG)$-measurable strategies $\vartheta$ and $\cO(\bG)$-measurable consumption processes $c$.
The value process $V^{v,\vartheta,c}=(V^{v,\vartheta,c}_t)_{t\in[0,T]}$ of a portfolio $\Pi=(v,\vartheta,c)$ is defined as
\[
V^{v,\vartheta, c}_t := v + \int_0^t\vartheta_u\ud S_u - \int_0^tc_u \ud\kappa_u,
\qquad \text{ for all } t\in[0,T].
\]

\begin{defn}	\label{def:adm_strategies}
Let $\mathbf{H}\in\{\bF,\bG\}$, $k\in\R_+$ and $v\in\R$.
The set of {\em $\bH$-admissible portfolios} with initial capital $v$ and allowable credit line $k$, denoted by $\A^{\mathbf{H},k}_+(v)$, is defined as
\[
\A^{\mathbf{H},k}_+(v) :=
\left\{(\vartheta,c) \in L(S,\mathbf{H})\times\cO_+(\mathbf{H}) \;\colon\;
V_t^{v,\vartheta,c}\geq-k\text{ a.s. for all }t\in[0,T]\text{ and }V^{v,\vartheta,c}_T\geq0\text{ a.s.}\right\}.
\]
\end{defn}

According to Definition \ref{def:adm_strategies}, we assume that  investors have access to a finite and fixed credit line $k$ over the investment horizon $[0,T]$, and are required to fully repay their debts by the terminal date $T$. Observe that, in the absence of arbitrage opportunities, the requirement $V^{v,\vartheta,c}_T\geq0$ a.s. automatically implies that $V^{v,\vartheta,c}_t\geq0$ a.s. for all $t\leq T$ (see \cite[Proposition 3.5]{DS94}), so that $\A^{\mathbf{H},k}_+(v)=\A^{\mathbf{H},0}_+(v)$, for all $k\in\R_+$. However, this is no longer true in the presence of arbitrage opportunities.
For $k=0$, we recover from Definition \ref{def:adm_strategies}  the usual notion of admissibility via non-negative portfolios, as considered for instance in \cite{KS98,GP98,Amen,GK03,Mos15,CCFM2015}.

For convenience of notation, we define the processes $Z^{\bF}=(Z^{\bF}_t)_{t\in[0,T]}$ and $Z^{\bG}=(Z^{\bG}_t)_{t\in[0,T]}$ by
\[
Z^{\bF}_t := Z_t \qquad\text{ and }\qquad
Z^{\bG}_t := Z_t/q^L_t,
\qquad\text{ for all }t\in[0,T]. 
\]
For later use, let us also define the two following classes of portfolios:
\begin{align*}
\A_{sm}^{\mathbf{H},k}(v)  := \Biggl\{ (\vartheta, c)\in\, & \, L(S,\mathbf{H})\times\cO_+(\mathbf{H}) \;\colon\; \Biggr.
\int_0^T\!Z^{\mathbf{H}}_u\,c_u\ud\kappa_u\in L^1(\prob),
\quad V^{v,\vartheta,c}_T\geq0 \text{ a.s.}	\\
&\Biggl.\text{ and }
Z^{\mathbf{H}}\,V^{v+k,\vartheta,c} + \int_0^{\cdot} Z^{\mathbf{H}}_u\,c_u\ud\kappa_u
\text{ is a supermartingale on $(\Omega,\mathbf{H},\prob)$} 
\Biggr\},
\end{align*}
\[
\A^{\mathbf{H},k}_m(v) := 
\biggl\{(\vartheta,c)\in\A_{sm}^{\mathbf{H},k}(v) \;\colon\;
Z^{\mathbf{H}}\,V^{v+k,\vartheta,c} + \int_0^{\cdot} Z^{\mathbf{H}}_u\,c_u\ud\kappa_u\in\M(\prob,\mathbf{H})\biggr\}.
\]

As will be shown in the next subsection, the classes $\A_{sm}^{\mathbf{H},k}(v)$ and $\A^{\mathbf{H},k}_m(v)$ appear naturally in the solution of optimal investment-consumption problems.
The requirement $\int_0^TZ^{\mathbf{H}}_u\,c_u\ud\kappa_u\in L^1(\prob)$ ensures that the consumption process $c=(c_t)_{t\in[0,T]}$ can be financed via self-financing trading starting from some initial capital, in the sense that there exists a pair $(\zeta,\varphi)\in L^1_+(\prob,\mathcal{H}_0)\times L(S,\mathbf{H})$ such that $V_t^{\zeta,\varphi,c}\geq0$ a.s. for all $t\in[0,T]$, with $L^1_+(\prob,\mathcal{H}_0)$ denoting the set of $\mathcal{H}_0$-measurable integrable non-negative random variables (compare with Lemma \ref{lem:SuperMart=Mart}).

In the present setting characterized by two financial markets, a suitable definition of admissibility should ensure that every portfolio which is admissible for an ordinary agent is also admissible for an informed agent, in line with economic intuition. This property, as well as the relations between the sets $\A^{\mathbf{H},k}_+(v)$, $\A_{sm}^{\mathbf{H},k}(v)$ and $\A^{\mathbf{H},k}_m(v)$, is clarified in the next lemma.

\begin{lem}	\label{lem:AFinAG}
For every $v\in\R$ and $k\in\R_+$, the following holds:
\begin{enumerate}
\item[(i)] $\A^{\mathbf{H},k}_+(v)=\A_{sm}^{\mathbf{H},k}(v)$, for $\mathbf{H}\in\{\bF,\bG\}$;
\item[(ii)] $\A^{\bF,k}_+(v)\subseteq\A^{\bG,k}_+(v)$ and, therefore, $\A^{\bF,k}_{sm}(v)\subseteq\A^{\bG,k}_{sm}(v)$.
\end{enumerate}
Moreover, the inclusion $\A^{\bF,k}_m(v)\subseteq\A^{\bG,k}_m(v)$ holds for every $v\in\R$ if and only if $\expec[1/q^L_T]=1$.
\end{lem}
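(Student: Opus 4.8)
The plan is to start from the single structural fact that makes everything work: $Z^{\bF}=Z$ is the $\prob$-$\bF$-density process of $\qprob$, while $Z^{\bG}=Z/q^L$ is, by Theorem~\ref{thm:results_NUPBR}, the unique ELMD for $S$ on $\basisgp$; so for both $\mathbf{H}\in\{\bF,\bG\}$ the process $Z^{\mathbf{H}}$ is a strictly positive $\prob$-$\mathbf{H}$-local martingale with $Z^{\mathbf{H}}_0=1$ and $Z^{\mathbf{H}}S\in\Mloc(\prob,\mathbf{H})$. For part~(i) I would prove both inclusions directly. Set $A_t:=\int_0^tc_u\ud\kappa_u$, so $V^{v+k,\vartheta,c}=v+k+\vartheta\cdot S-A$. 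If $(\vartheta,c)\in\A^{\mathbf{H},k}_+(v)$, then $V^{v+k,\vartheta,c}\geq0$ forces $v+k+\vartheta\cdot S=V^{v+k,\vartheta,c}+A\geq0$, so $\vartheta\cdot S$ is bounded below and, by the Ansel--Stricker theorem, $Z^{\mathbf{H}}(v+k+\vartheta\cdot S)$ is a non-negative $\prob$-$\mathbf{H}$-local martingale; an integration by parts for the finite-variation process $A$ rewrites $Z^{\mathbf{H}}V^{v+k,\vartheta,c}+\int_0^\cdot Z^{\mathbf{H}}_uc_u\ud\kappa_u$ as $Z^{\mathbf{H}}(v+k+\vartheta\cdot S)$ minus the local martingale $\int_0^\cdot A_{u-}\ud Z^{\mathbf{H}}_u$, i.e.\ as a non-negative local martingale, hence a supermartingale; taking expectations at $T$ gives $\int_0^TZ^{\mathbf{H}}_uc_u\ud\kappa_u\in L^1(\prob)$, so $(\vartheta,c)\in\A^{\mathbf{H},k}_{sm}(v)$. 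Conversely, if $(\vartheta,c)\in\A^{\mathbf{H},k}_{sm}(v)$, then $Z^{\mathbf{H}}V^{v+k,\vartheta,c}$ is the given supermartingale $Z^{\mathbf{H}}V^{v+k,\vartheta,c}+\int_0^\cdot Z^{\mathbf{H}}_uc_u\ud\kappa_u$ minus the non-decreasing, $T$-integrable process $\int_0^\cdot Z^{\mathbf{H}}_uc_u\ud\kappa_u$, hence itself a supermartingale, with non-negative terminal value $Z^{\mathbf{H}}_T\bigl(V^{v,\vartheta,c}_T+k\bigr)$; thus $Z^{\mathbf{H}}_tV^{v+k,\vartheta,c}_t\geq\expec[Z^{\mathbf{H}}_TV^{v+k,\vartheta,c}_T\mid\mathcal{H}_t]\geq0$ for every $t$, so $V^{v+k,\vartheta,c}\geq0$ a.s.\ by right-continuity and $(\vartheta,c)\in\A^{\mathbf{H},k}_+(v)$.

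For part~(ii), the inclusions $\cP(\bF)\subseteq\cP(\bG)$ and $\cO_+(\bF)\subseteq\cO_+(\bG)$ are immediate, and under Assumption~\ref{ass:Jacod} every $\vartheta\in L(S,\bF)$ lies in $L(S,\bG)$ with the same stochastic integral (the standard preservation of stochastic integrals under initial enlargement; cf.\ Lemma~\ref{lem:prelim_G}); since $\kappa$ is $\bF$-adapted, the value process $V^{v,\vartheta,c}$ is unchanged when passing from $\bF$ to $\bG$, so the constraints defining $\A^{\bF,k}_+(v)$ carry over verbatim and $\A^{\bF,k}_+(v)\subseteq\A^{\bG,k}_+(v)$; then $\A^{\bF,k}_{sm}(v)\subseteq\A^{\bG,k}_{sm}(v)$ follows from part~(i).

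For the final equivalence I would get ``$\Rightarrow$'' cheaply by testing against the trivial portfolio: with $v=1$ and $(\vartheta,c)=(0,0)$ one has $(0,0)\in\A^{\bF,k}_m(1)$, since $(1+k)Z\in\M(\prob,\bF)$; the assumed inclusion then gives $(1+k)Z^{\bG}=(1+k)Z/q^L\in\M(\prob,\bG)$, whence $\expec[Z_T/q^L_T]=Z_0/q^L_0=1$ (recall $Z_0=q^L_0=1$), which by Theorem~\ref{thm:results_NUPBR} is equivalent to $\expec[1/q^L_T]=1$. For ``$\Leftarrow$'', assume $\expec[1/q^L_T]=1$. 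Theorem~\ref{thm:results_NUPBR} then yields $\lambda(\{x:q^x_T>0\})=1$ a.s., so that $\qprob^{\bG}:=(Z_T/q^L_T)\cdot\prob$ is an equivalent local martingale measure for $S$ on $\basisgp$, with $\prob$-$\bG$-density process $Z^{\bG}$ and with the decoupling property $\qprob^{\bG}|_{\F_T}=\qprob|_{\F_T}$. Given $(\vartheta,c)\in\A^{\bF,k}_m(v)$, parts~(i)--(ii) already put it in $\A^{\bG,k}_{sm}(v)$, so $Y^{\bG}:=Z^{\bG}V^{v+k,\vartheta,c}+\int_0^\cdot Z^{\bG}_uc_u\ud\kappa_u$ is a non-negative $\prob$-$\bG$-supermartingale starting at $v+k$, hence a $\prob$-$\bG$-martingale -- that is, $(\vartheta,c)\in\A^{\bG,k}_m(v)$ -- as soon as $\expec[Y^{\bG}_T]=v+k$. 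To obtain this I would change num\'eraire: since $V^{v+k,\vartheta,c}_T$ and $A_T=\int_0^Tc_u\ud\kappa_u$ are $\F_T$-measurable, an integration by parts together with $\qprob^{\bG}|_{\F_T}=\qprob|_{\F_T}$ gives $\expec[Y^{\bG}_T]\geq\expec^{\qprob}\bigl[v+k+(\vartheta\cdot S)_T\bigr]$; since $\vartheta\cdot S\geq-(v+k)$ it is a $\qprob$-supermartingale, and the hypothesis $(\vartheta,c)\in\A^{\bF,k}_m(v)$ -- i.e.\ $Z\,V^{v+k,\vartheta,c}+\int_0^\cdot Z_uc_u\ud\kappa_u$ is a \emph{true} $\prob$-$\bF$-martingale -- forces $\expec^{\qprob}[(\vartheta\cdot S)_T]=0$; together with the supermartingale bound $\expec[Y^{\bG}_T]\leq v+k$ this yields $\expec[Y^{\bG}_T]=v+k$, and a non-negative supermartingale started at the deterministic value $v+k$ with terminal expectation $v+k$ is a martingale.

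The hard part will be the ``$\Leftarrow$'' direction: promoting $Y^{\bG}$ from a supermartingale to a true martingale. Three points require care. First, identifying $\qprob^{\bG}$ and its decoupling property $\qprob^{\bG}|_{\F_T}=\qprob|_{\F_T}$, which is precisely where the hypothesis $\expec[1/q^L_T]=1$ -- equivalently NFLVR on $\basisgp$, by Theorem~\ref{thm:results_NUPBR} -- enters. Second, handling the consumption term $\int_0^\cdot Z^{\bG}_uc_u\ud\kappa_u$ under the change of num\'eraire: one has to use the $L^1$-integrability built into $\A_{sm}$, together with the corresponding identity $\expec[\int_0^TZ_uc_u\ud\kappa_u]=\expec^{\qprob}[\int_0^Tc_u\ud\kappa_u]$ in $\bF$, to control $\int_0^\cdot A_{u-}\ud Z^{\bG}_u$ rigorously. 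Third, extracting $\expec^{\qprob}[(\vartheta\cdot S)_T]=0$ from membership in $\A^{\bF,k}_m(v)$ in order to upgrade $\expec[Y^{\bG}_T]\geq v+k$ to an equality. Parts~(i) and~(ii), by contrast, should be routine once the structural observation on $Z^{\mathbf{H}}$ is in place.
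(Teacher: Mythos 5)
Parts (i) and (ii) and the ``$\Rightarrow$'' direction of the final equivalence are correct and follow essentially the paper's route (the only cosmetic difference in (i) is that you obtain the $L^1$-integrability of $\int_0^TZ^{\mathbf{H}}_uc_u\ud\kappa_u$ directly from the terminal supermartingale bound rather than via a localization argument, which is fine). Your reduction of the ``$\Leftarrow$'' direction to a single scalar check $\expec[Y^{\bG}_T]=v+k$ (a non-negative $\bG$-supermartingale with deterministic initial value $v+k$ and terminal expectation $v+k$ must be a martingale) is a clean observation and a genuine departure from the paper, which instead establishes the martingale property of $Z^{\bG}V^{v+k,\vartheta,c}+\int_0^\cdot Z^{\bG}_uc_u\ud\kappa_u$ process-wise via the optional projection of $1/q^L_T$ and property (vii) of Theorem~\ref{thm:results_NUPBR}.

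However, the way you propose to close the scalar check has a real gap, precisely at your ``Third point.'' The claim that $(\vartheta,c)\in\A^{\bF,k}_m(v)$ forces $\expec^{\qprob}[(\vartheta\cdot S)_T]=0$ does not follow. Writing $M:=Z^{\bF}V^{v+k,\vartheta,c}+\int_0^\cdot Z^{\bF}_uc_u\ud\kappa_u$, integration by parts gives $Z^{\bF}(v+k+\vartheta\cdot S)=M+A_-\cdot Z^{\bF}$ with $A:=\int_0^\cdot c_u\ud\kappa_u$. Here $M$ is a true martingale by hypothesis, $Z^{\bF}(v+k+\vartheta\cdot S)$ is a non-negative local martingale, and $A_-\cdot Z^{\bF}$ is merely a local martingale; nothing rules out that the last two are strict local martingales with matching defect. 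Taking expectations only yields $\expec^{\qprob}[v+k+(\vartheta\cdot S)_T]=v+k+\expec[(A_-\cdot Z^{\bF})_T]\le v+k$, i.e.\ $\expec^{\qprob}[(\vartheta\cdot S)_T]\le 0$ with no lower bound in general (the membership in $\A^{\bF,k}_m(v)$ constrains the \emph{sum} $M$, not the two pieces separately). The analogous issue affects the inequality $\expec[Y^{\bG}_T]\ge\expec^{\qprob}[v+k+(\vartheta\cdot S)_T]$ in your step 1, which needs $\expec[(A_-\cdot Z^{\bG})_T]\le 0$ -- again not automatic. So the route through the trading-gains piece $\vartheta\cdot S$ does not close.

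The fix is to compute $\expec[Y^{\bG}_T]$ term by term without ever isolating $\vartheta\cdot S$: one shows (a) $\expec[Z^{\bG}_TV^{v+k,\vartheta,c}_T]=\expec[Z^{\bF}_TV^{v+k,\vartheta,c}_T]$, which is the decoupling $\qprob^{\bG}|_{\F_T}=\qprob|_{\F_T}$ you already identified, applied to the $\F_T$-measurable random variable $V^{v+k,\vartheta,c}_T$; and (b) $\expec[\int_0^TZ^{\bG}_uc_u\ud\kappa_u]=\expec[\int_0^TZ^{\bF}_uc_u\ud\kappa_u]$, which is the genuinely nontrivial step and is precisely what the paper's argument delivers via formula \eqref{expec_init}, \cite[Theorem 5.32]{MR1219534} (expectation of a martingale against an increasing process) and the fact that $\prob(q^x_u>0)=1$ for all $u$ and $\lambda$-a.e.\ $x$ under condition (iii) of Theorem~\ref{thm:results_NUPBR}. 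Adding (a) and (b) and invoking the $\bF$-martingale identity $\expec[Z^{\bF}_TV^{v+k,\vartheta,c}_T+\int_0^TZ^{\bF}_uc_u\ud\kappa_u]=v+k$ then gives $\expec[Y^{\bG}_T]=v+k$. If you replace your steps 1--3 by (a)--(b), your scalar-check strategy becomes a correct and arguably cleaner alternative to the paper's process-level argument.
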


In view of Theorem \ref{thm:results_NUPBR}, the above lemma shows that the inclusion $\A_m^{\bF,k}(v)\subseteq\A_m^{\bG,k}(v)$ holds if and only if there are no arbitrage opportunities in $(\Omega,\bG,\prob;S)$. 
If we think of martingales as fair games, then the economic intuition underlying this property becomes clear. Indeed, portfolios which are fair for an ordinary agent can be considered too expensive (and, hence, unfair) by an informed agent if the latter has the possibility of achieving arbitrage.
This intuition is in line with the implication (i)$\Rightarrow$(v) in Theorem \ref{thm:results_NUPBR}, where it is shown that if $\expec[Z_T/q^L_T]<1$ (equivalently, $\expec[1/q^L_T]<1$), then investing the total wealth in the riskless asset is not a fair strategy for an informed agent, as the latter can replicate the constant payoff $1$ starting from initial capital $v<1$.

\subsection{Optimal consumption-investment problems}	\label{sec:general_utility}

We assume that preferences are defined with respect to intermediate consumption over $[0,T]$ and/or wealth at the terminal date $T$. More specifically, similarly as in \cite{Zit05,Mos15,CCFM2015}, we introduce a {\em utility stochastic field} $U=U(\omega,t,x):\Omega\times[0,T]\times\R_+\rightarrow\R\cup\{-\infty\}$ satisfying the following requirements.

\begin{ass}	\label{ass:U}
For every $(\omega,t)\in\Omega\times[0,T]$, the function $x\mapsto U(\omega,t,x)$ is strictly concave, strictly increasing, continuously differentiable on $(0,+\infty)$ and satisfies the Inada conditions
\[
\underset{x\downarrow0}{\lim} \, U'(\omega,t,x) = +\infty
\qquad\text{and}\qquad
\underset{x\rightarrow+\infty}{\lim} \, U'(\omega,t,x) = 0,
\]
with $U'$ denoting the derivative of $U$ with respect to $x$. By continuity, we assume that $U(\omega,t,0)=\lim_{x\downarrow0}U(\omega,t,x)$.
Finally, for every $x\geq0$, the stochastic process $U(\cdot,\cdot,x)$ is $\cO(\bF)$-measurable.
\end{ass}

In the following, we shall always assume that a utility stochastic field satisfies Assumption \ref{ass:U}, unless otherwise mentioned.
For $\mathbf{H}\in\{\bF,\bG\}$, we define the following set of consumption processes:
\begin{align*}
\Cons_+^{\mathbf{H},k}(v) := \left\{c\in\cO_+(\mathbf{H})\;\colon\;
\exists\; \vartheta\in L(S,\mathbf{H})\text{ s.t. }(\vartheta,c)\in\A_+^{\mathbf{H},k}(v)\right\},
\end{align*}
corresponding to all consumption plans that can be financed by portfolios with initial capital $v$ satisfying the allowable credit line $k$.
The optimal consumption-investment problem of an agent having access to the information flow $\mathbf{H}$ and with initial capital $v$ at  $t=0$ is  defined as follows:\footnote{For simplicity of notation, we shall omit to write explicitly the dependence on $\omega$ in the utility stochastic field $U$.}
\be	\label{eq:uH}
u^{\mathbf{H},k}(v) := \sup_{c\in\Cons_+^{\mathbf{H},k}(v)}
\expec\left[\int_0^TU(u,c_u)\ud\kappa_u \right],
\ee
with the convention $\expec[\int_0^TU(u,c_u)\ud\kappa_u]=-\infty$ if $\expec[\int_0^TU^-(u,c_u)\ud\kappa_u]=+\infty$. 
We also define the related $\mathcal{H}_0$-conditional optimization problem (with possibly non-trivial initial information):
\be	\label{eq:uH_2}
\underset{c\in\Cons_+^{\mathbf{H},k}(v)}{\esssup}\;
\expec\left[\int_0^TU(u,c_u)\ud\kappa_u \bigg|\mathcal{H}_0\right],
\ee
with an analogous convention.
Note that an element $c\in\Cons_+^{\mathbf{H},k}(v)$ attains the supremum in problem \eqref{eq:uH} if it attains the supremum in problem \eqref{eq:uH_2} (see, e.g., \cite[Section 4]{ABS}).
We also remark that the set $\Cons_+^{\mathbf{H},k}(v)$ is closed in the topology of convergence in measure $(\!\ud\kappa\otimes\prob)$ as long as NUPBR holds on $(\Omega,\mathbf{H},\prob;S)$ (see \cite{CCFM2015} and compare also with Lemma \ref{lem:SuperMart=Mart}).


The above structure of preferences is very general and allows for state-dependent utilities. By suitably specifying $\kappa$, several different formulations of optimal investment problems, with or without intermediate consumption, can be recovered from the present setting (compare with \cite[Section 2.8]{Zit05} and \cite[Examples 2.5-2.9]{Mos15}). In particular, the classical problem of maximization of expected utility from terminal wealth is obtained by setting $\ud\kappa_u=\delta_{T}(\!\ud u)$.


For convenience of notation, let us define the following sets of consumption processes, corresponding to the different sets of admissible portfolios considered in Section \ref{sec:strategies}:
\begin{align*}
\Cons_{sm}^{\mathbf{H},k}(v) &:= \left\{c\in\cO_+(\mathbf{H})\;\colon\;
\exists\; \vartheta\in L(S,\mathbf{H})\text{ s.t. }(\vartheta,c)\in\A_{sm}^{\mathbf{H},k}(v)\right\},\\
\Cons_m^{\mathbf{H},k}(v) &:= \left\{c\in\cO_+(\mathbf{H})\;\colon\;
\exists\; \vartheta\in L(S,\mathbf{H})\text{ s.t. }(\vartheta,c)\in\A_m^{\mathbf{H},k}(v) \text{ and }V^{v,\vartheta,c}_T=0\text{ a.s.}\right\}.
\end{align*}

It is evident that $\Cons_m^{\mathbf{H},k}(v)\subseteq\Cons_+^{\mathbf{H},k}(v)$, for every $k\in\R_+$ and $v\in\R$.
In the following lemma, we show that the optimal expected utility  does not change if one maximizes over all consumption processes belonging to the smaller class $\Cons_m^{\mathbf{H},k}(v)$. 
The economic intuition of this result is clear, as a martingale is the least expensive supermartingale that reaches a given terminal value (compare with \cite[Lemma 10.4.1]{MR2267213}). Furthermore, the condition $V^{v,\vartheta,c}_T=0$ a.s. represents the simple fact that all the available resources are used to finance consumption.

\begin{lem}	\label{lem:SuperMart=Mart}
Let $\mathbf{H}\in\{\bF,\bG\}$, $k\in\R_+$ and $v\in\R$. 
Then, $\Cons_+^{\mathbf{H},k}(v)=\Cons_{sm}^{\mathbf{H},k}(v)$ and, for every consumption process $c\in\cO_+(\mathbf{H})$, the following holds:
\begin{enumerate}
\item[(i)] $c\in\Cons_+^{\mathbf{H},k}(v)$ if and only if $\expec[\int_0^TZ^{\mathbf{H}}_u\,c_u\ud\kappa_u|\mathcal{H}_0]\leq v+k(1-\expec[Z^{\mathbf{H}}_T|\mathcal{H}_0])$ a.s.;
\item[(ii)] $c\in\Cons_m^{\mathbf{H},k}(v)$ if and only if $\expec[\int_0^TZ^{\mathbf{H}}_u\,c_u\ud\kappa_u|\mathcal{H}_0]=v+k(1-\expec[Z^{\mathbf{H}}_T|\mathcal{H}_0])$ a.s.
\end{enumerate}
Moreover, it holds that
\be	\label{eq:utility_mart}
u^{\mathbf{H},k}(v) = \sup_{c\in\Cons_m^{\mathbf{H},k}(v)}
\expec\left[\int_0^TU(u,c_u)\ud\kappa_u  \right]
=: u^{\mathbf{H},k}_m(v).
\ee
\end{lem}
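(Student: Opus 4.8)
The plan is to first establish the dual characterizations (i) and (ii) via the supermartingale/martingale properties that define $\A^{\mathbf{H},k}_{sm}(v)$ and $\A^{\mathbf{H},k}_m(v)$, and then deduce the identity \eqref{eq:utility_mart}. The equality $\Cons^{\mathbf{H},k}_+(v)=\Cons^{\mathbf{H},k}_{sm}(v)$ is just the consumption-level reformulation of Lemma \ref{lem:AFinAG}(i), so nothing new is needed there. For (i), take $c\in\Cons^{\mathbf{H},k}_+(v)$; by $\A^{\mathbf{H},k}_+(v)=\A^{\mathbf{H},k}_{sm}(v)$ there is $\vartheta$ with $Z^{\mathbf{H}}V^{v+k,\vartheta,c}+\int_0^\cdot Z^{\mathbf{H}}_u c_u\ud\kappa_u$ a supermartingale on $(\Omega,\mathbf{H},\prob)$. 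Taking $\mathcal{H}_0$-conditional expectations, comparing the value at $0$ (which is $v+k$, since $Z^{\mathbf{H}}_0=1$) with the value at $T$ (where $V^{v+k,\vartheta,c}_T=V^{v,\vartheta,c}_T+k\geq k$ and in fact we may push to $V^{v,\vartheta,c}_T\geq 0$), and using $\expec[\int_0^T Z^{\mathbf{H}}_u c_u\ud\kappa_u\mid\mathcal{H}_0]<\infty$, gives
\[
\expec\!\left[\int_0^T Z^{\mathbf{H}}_u c_u\ud\kappa_u\,\Big|\,\mathcal{H}_0\right]
\;\le\; v+k-k\,\expec[Z^{\mathbf{H}}_T\mid\mathcal{H}_0]
\;=\; v+k\bigl(1-\expec[Z^{\mathbf{H}}_T\mid\mathcal{H}_0]\bigr).
\]
For the converse direction of (i), given a consumption process $c$ with $\expec[\int_0^T Z^{\mathbf{H}}_u c_u\ud\kappa_u\mid\mathcal{H}_0]\le v+k(1-\expec[Z^{\mathbf{H}}_T\mid\mathcal{H}_0])$, one builds the candidate wealth process by the usual ``define the value process as a conditional expectation, then represent'' trick: set $Y_t:=\esssup$-type object, or more directly $Y_t := \frac{1}{Z^{\mathbf{H}}_t}\expec[\int_t^T Z^{\mathbf{H}}_u c_u\ud\kappa_u + \xi\mid\mathcal{H}_t]$ for a suitable non-negative $\mathcal{H}_T$-measurable slack $\xi$ chosen so that the budget holds with equality, note $Z^{\mathbf{H}}Y+\int_0^\cdot Z^{\mathbf{H}}_u c_u\ud\kappa_u$ is a (true, by integrability) martingale on $(\Omega,\mathbf{H},\prob)$ with terminal value $\ge 0$, and invoke the martingale representation for $(\Omega,\mathbf{H},\prob)$-martingales. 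Here is the crucial point: in $\bF$ this representation is Assumption \ref{ass:Q} (every $\qprob$-local martingale is a stochastic integral of $S$), and in $\bG$ it is exactly Proposition \ref{prop:MRT}, which supplies $K\in L(S,\bG)$ with $M_t=Z^{\bG}_t(M_0+(K\cdot S)_t)$. Dividing the representing martingale $M:=Z^{\mathbf{H}}Y+\int_0^\cdot Z^{\mathbf{H}}_u c_u\ud\kappa_u$ by $Z^{\mathbf{H}}$ recovers $Y=V^{v+k,\vartheta,c}$ with $\vartheta=K$ (in $\bG$) or the analogous integrand in $\bF$, and one reads off that $(\vartheta,c)\in\A^{\mathbf{H},k}_{sm}(v)=\A^{\mathbf{H},k}_+(v)$, checking the credit-line bound $V^{v+k,\vartheta,c}\ge 0$ hence $V^{v,\vartheta,c}\ge -k$ and $V^{v,\vartheta,c}_T\ge0$. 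Statement (ii) is obtained by the same argument with every inequality replaced by an equality: the budget holds with equality iff the representing process is a true martingale with $V^{v,\vartheta,c}_T=0$ a.s., which is the definition of $\A^{\mathbf{H},k}_m(v)$ together with the extra terminal condition.

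For \eqref{eq:utility_mart}, the inequality $u^{\mathbf{H},k}(v)\ge u^{\mathbf{H},k}_m(v)$ is immediate from $\Cons^{\mathbf{H},k}_m(v)\subseteq\Cons^{\mathbf{H},k}_+(v)$. For the reverse, take any $c\in\Cons^{\mathbf{H},k}_+(v)$; by (i) its ``cost'' $\expec[\int_0^T Z^{\mathbf{H}}_u c_u\ud\kappa_u\mid\mathcal{H}_0]$ is $\le v+k(1-\expec[Z^{\mathbf{H}}_T\mid\mathcal{H}_0])$, so on the event where this inequality is strict we can add a non-negative perturbation $\Delta c\ge 0$ (for instance proportional to $c$ itself, or concentrated near $T$) pushing the cost up to equality; since $U(\omega,t,\cdot)$ is strictly increasing, $c+\Delta c$ is at least as good, $c+\Delta c\in\Cons^{\mathbf{H},k}_m(v)$, and therefore $\expec[\int_0^T U(u,c_u)\ud\kappa_u]\le\expec[\int_0^T U(u,c_u+\Delta c_u)\ud\kappa_u]\le u^{\mathbf{H},k}_m(v)$. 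Taking the supremum over $c$ yields $u^{\mathbf{H},k}(v)\le u^{\mathbf{H},k}_m(v)$, and combined with the trivial inequality this gives \eqref{eq:utility_mart}; the remark that attaining the $\mathcal{H}_0$-conditional supremum \eqref{eq:uH_2} attains \eqref{eq:uH} then lets one work conditionally throughout.

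The main obstacle I expect is the ``converse'' direction of (i)/(ii): making the passage from a budget-feasible consumption stream to an honest admissible pair $(\vartheta,c)$ fully rigorous. One must (a) verify the relevant integrability so that $Z^{\mathbf{H}}Y+\int_0^\cdot Z^{\mathbf{H}}_u c_u\ud\kappa_u$ is a genuine martingale, not merely a local one, in order to apply the martingale representation; (b) in the $\bG$-case, handle the ``change of num\'eraire'' factor $Z/q^L$ carefully --- $q^L$ may hit zero (under Assumption \ref{ass:NoJumpZero} only continuously), so one needs $q^L>0$ a.s.\ pointwise in $t$ (Lemma \ref{lem:prelim_G}) to make sense of $V^{v+k,\vartheta,c}=M/Z^{\mathbf{H}}$, and must check that the credit-line floor and terminal non-negativity survive the division; and (c) choose the slack $\xi$ (equivalently $\Delta c$) measurably and non-negatively so that the budget constraint is met with equality almost surely on $\{\mathcal{H}_0\text{-event}\}$, without destroying admissibility. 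Points (a) and (b) are where the completeness hypotheses (Assumption \ref{ass:Q} and Proposition \ref{prop:MRT}) and Assumption \ref{ass:NoJumpZero} are genuinely used; everything else is bookkeeping with conditional expectations.
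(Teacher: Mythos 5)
Your proposal is correct and takes essentially the same route as the paper's proof: reduce to the super/martingale characterizations via Lemma~\ref{lem:AFinAG}(i), obtain the budget inequality by taking $\mathcal{H}_0$-conditional expectations of the supermartingale, recover an admissible pair from a budget-feasible $c$ by defining a conditional-expectation process and representing it via Assumption~\ref{ass:Q} (resp.\ Proposition~\ref{prop:MRT}), and for \eqref{eq:utility_mart} top up any slack by a nonnegative increment to $c$ using the strict monotonicity of $U$. The paper makes your slack concrete (it sets $\tilde{c}_t := c_t + \tilde{v}/(Z^{\mathbf{H}}_t\,\expec[\kappa_T|\mathcal{H}_0])$ with $\tilde{v}$ the $\mathcal{H}_0$-measurable budget surplus, and for the converse of (i) builds the candidate wealth process with explicit $\mathcal{H}_t$-conditional correction terms rather than a terminal slack $\xi$), and the concerns you list at the end are all discharged by the standing assumptions: $Z/q^L$ is a strictly positive $\bG$-local martingale by Lemmata~\ref{lem:prelim_G} and~\ref{lem:mgG} under Assumptions~\ref{ass:Jacod}--\ref{ass:NoJumpZero}, and the integrability condition $\int_0^T Z^{\mathbf{H}}_u c_u\ud\kappa_u\in L^1(\prob)$ built into $\A^{\mathbf{H},k}_{sm}(v)$ is exactly what makes the representing process a true martingale.
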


\begin{rem}	\label{rem:credit_arbitrage}
It is important to observe that the credit line (or allowable leverage) $k$ plays no role in the characterization of financeable consumption plans if and only if $Z^{\mathbf{H}}\in\M(\prob,\mathbf{H})$. In view of \eqref{eq:utility_mart}, this implies that $u^{\mathbf{H},k}(v)=u^{\mathbf{H},0}(v)$, for every $v\in\R_+$, if and only if $Z^{\mathbf{H}}\in\M(\prob,\mathbf{H})$. In other words, the optimal expected utility does not depend on the allowable leverage if and only if there are no arbitrage opportunities in $(\Omega,\mathbf{H},\prob;S)$. To this effect, see also Remark \ref{rem:credit_cons}.
\end{rem}

We are now in a position to derive the general solution to the optimal consumption-investment problems \eqref{eq:uH}-\eqref{eq:uH_2}. Similarly as in \cite{Amen,ABS}, we rely on a  duality approach. However, since in our setting arbitrage opportunities can exist, we have to rely on ELMDs instead of martingale measures (compare with \cite[Chapter 3]{KS98} and \cite{FonRun}).
Due to the strict concavity and continuous differentiability of the utility stochastic field $U$ (see Assumption \ref{ass:U}), there exists a unique stochastic field $I=I(\omega,t,y):\Omega\times[0,T]\times(0,+\infty)\rightarrow(0,+\infty)$ such that $U'(\omega,t,I(\omega,t,y))=y$, for all $(\omega,t,y)\in\Omega\times[0,T]\times(0,+\infty)$.
Observe that, due to Assumption \ref{ass:U}, for every strictly positive $\mathbf{H}$-optional process $(Y_t)_{t\in[0,T]}$, it holds that $(I(\omega,t,Y_t(\omega)))_{t\in[0,T]}\in\cO_+(\mathbf{H})$.

\begin{prop}	\label{prop:lambda}
Let $\mathbf{H}\in\{\bF,\bG\}$, $k\in\R_+$ and $v\geq -k(1-\|\expec[Z^{\mathbf{H}}_T|\mathcal{H}_0]\|_{\infty})=:v^{\mathbf{H}}_k$. 
Suppose that there exists an $\mathcal{H}_0$-measurable random variable $\Lambda^{\mathbf{H},k}(v)\colon\Omega\rightarrow(0,+\infty)$ such that\footnote{For brevity of notation, we omit to write explicitly the dependence on $\omega$ in the stochastic field $I$.}
\be	\label{eq:condition_lambda}
\expec\bigg[ \int_0^TZ^{\mathbf{H}}_uI\left(u, \Lambda^{\mathbf{H},k}(v)Z^{\mathbf{H}}_u\right) \!\ud\kappa_u\bigg|\mathcal{H}_0\bigg] \ = \ v + k\left(1-\expec[Z^{\mathbf{H}}_T|\mathcal{H}_0]\right)
\quad\text{a.s.}
\ee
and such that the process $(I(t,\Lambda^{\mathbf{H},k}(v)Z^{\mathbf{H}}_t))_{t\in[0,T]}$ satisfies
$
\int_0^T\!U^-(u,I(u,\Lambda^{\mathbf{H},k}(v)Z^{\mathbf{H}}_u))\ud\kappa_u \in L^1(\prob)
$.
Then, the optimal consumption process $c^{\mathbf{H}}=(c^{\mathbf{H}}_t)_{t\in[0,T]}$ solving problem \eqref{eq:uH_2} with initial capital $v$ and allowable leverage $k$ is given by
$
c^{\mathbf{H}}_t = I(t,\Lambda^{\mathbf{H},k}(v)Z^{\mathbf{H}}_t) 
$,
for all $t\in[0,T]$.
\end{prop}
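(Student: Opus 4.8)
The plan is to verify directly that the candidate process $c^{\mathbf{H}}_t := I(t,\Lambda^{\mathbf{H},k}(v)Z^{\mathbf{H}}_t)$ is admissible and optimal, using the standard concavity/Fenchel inequality together with the budget characterization already established in Lemma \ref{lem:SuperMart=Mart}. First I would check admissibility: by Assumption \ref{ass:U}, since $Z^{\mathbf{H}}$ is strictly positive and $\mathbf{H}$-optional, the process $(I(t,\Lambda^{\mathbf{H},k}(v)Z^{\mathbf{H}}_t))_{t\in[0,T]}$ lies in $\cO_+(\mathbf{H})$; moreover $\Lambda^{\mathbf{H},k}(v)$ is $\mathcal{H}_0$-measurable, so equation \eqref{eq:condition_lambda} says precisely that $\expec[\int_0^TZ^{\mathbf{H}}_u c^{\mathbf{H}}_u\ud\kappa_u|\mathcal{H}_0] = v + k(1-\expec[Z^{\mathbf{H}}_T|\mathcal{H}_0])$ a.s. By Lemma \ref{lem:SuperMart=Mart}(ii) this is exactly the condition for $c^{\mathbf{H}}\in\Cons_m^{\mathbf{H},k}(v)\subseteq\Cons_+^{\mathbf{H},k}(v)$; in particular the required integrability $\int_0^TZ^{\mathbf{H}}_u c^{\mathbf{H}}_u\ud\kappa_u\in L^1(\prob)$ follows (taking expectations over $\mathcal{H}_0$ and using boundedness of $\kappa$ and $v^{\mathbf{H}}_k\le v$), and the extra hypothesis $\int_0^TU^-(u,c^{\mathbf{H}}_u)\ud\kappa_u\in L^1(\prob)$ guarantees that $\expec[\int_0^TU(u,c^{\mathbf{H}}_u)\ud\kappa_u|\mathcal{H}_0]$ is well defined (not $-\infty$ in an uncontrolled way).

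The core of the argument is the comparison step. Fix an arbitrary competitor $c\in\Cons_+^{\mathbf{H},k}(v)$. For each $(\omega,t)$ the Fenchel-type inequality coming from concavity of $x\mapsto U(\omega,t,x)$ and the definition of $I$ gives
\[
U(u,c_u) \;\le\; U(u,c^{\mathbf{H}}_u) + U'(u,c^{\mathbf{H}}_u)\,(c_u - c^{\mathbf{H}}_u)
\;=\; U(u,c^{\mathbf{H}}_u) + \Lambda^{\mathbf{H},k}(v)\,Z^{\mathbf{H}}_u\,(c_u - c^{\mathbf{H}}_u),
\]
since $U'(u,c^{\mathbf{H}}_u)=\Lambda^{\mathbf{H},k}(v)Z^{\mathbf{H}}_u$ by construction of $c^{\mathbf{H}}$. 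Integrating against $\ud\kappa_u$ and taking $\mathcal{H}_0$-conditional expectations, the linear term becomes
\[
\Lambda^{\mathbf{H},k}(v)\left(\expec\Big[\textstyle\int_0^TZ^{\mathbf{H}}_u c_u\ud\kappa_u\,\Big|\,\mathcal{H}_0\Big] - \expec\Big[\textstyle\int_0^TZ^{\mathbf{H}}_u c^{\mathbf{H}}_u\ud\kappa_u\,\Big|\,\mathcal{H}_0\Big]\right),
\]
where I pull out the $\mathcal{H}_0$-measurable factor $\Lambda^{\mathbf{H},k}(v)$. By Lemma \ref{lem:SuperMart=Mart}(i) the first conditional expectation is $\le v + k(1-\expec[Z^{\mathbf{H}}_T|\mathcal{H}_0])$, while by \eqref{eq:condition_lambda} the second equals exactly that same quantity; since $\Lambda^{\mathbf{H},k}(v)>0$, the linear term is $\le 0$ a.s. Hence $\expec[\int_0^TU(u,c_u)\ud\kappa_u|\mathcal{H}_0]\le \expec[\int_0^TU(u,c^{\mathbf{H}}_u)\ud\kappa_u|\mathcal{H}_0]$ a.s., which is the claimed optimality for problem \eqref{eq:uH_2}; the optimality for \eqref{eq:uH} then follows from the remark after \eqref{eq:uH_2} that a maximizer of the conditional problem maximizes the unconditional one.

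The main obstacle is making the manipulation of conditional expectations and the integrability bookkeeping rigorous, rather than the structure of the argument, which is classical. Specifically, the positive part $\int_0^TU^+(u,c_u)\ud\kappa_u$ of a competitor need not be integrable a priori, so one must argue that the Fenchel inequality still lets the comparison go through: the right-hand side $U(u,c^{\mathbf{H}}_u)+\Lambda^{\mathbf{H},k}(v)Z^{\mathbf{H}}_u(c_u-c^{\mathbf{H}}_u)$ has an integrable positive part (because $U^+(u,c^{\mathbf{H}}_u)$ is controlled via the hypothesis on $U^-$ and strict monotonicity together with finiteness of $u^{\mathbf{H},k}(v)$ where needed, $Z^{\mathbf{H}}_u c_u$ is $\ud\kappa\otimes\prob$-integrable by Lemma \ref{lem:SuperMart=Mart}, and $Z^{\mathbf{H}}_u c^{\mathbf{H}}_u$ likewise), so the conditional expectation of the left side is well defined and the inequality is preserved; one handles the case $\expec[\int_0^TU(u,c_u)\ud\kappa_u|\mathcal{H}_0]=-\infty$ trivially. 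A secondary point is to justify pulling the $\mathcal{H}_0$-measurable random variable $\Lambda^{\mathbf{H},k}(v)$ outside the conditional expectation, which is legitimate once the relevant integrands are shown to be conditionally integrable (or by a truncation/localization argument on $\{\Lambda^{\mathbf{H},k}(v)\le n\}$ if one wants to avoid any integrability subtlety). Once these measurability and integrability issues are dispatched, the proof is complete.
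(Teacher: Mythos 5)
Your proof is correct and uses the same core argument as the paper's: the pointwise concavity comparison with $U'(u,c^{\mathbf{H}}_u)=\Lambda^{\mathbf{H},k}(v)Z^{\mathbf{H}}_u$, combined with the budget characterization from Lemma \ref{lem:SuperMart=Mart}. The only (minor) structural difference is that you compare $c^{\mathbf{H}}$ against arbitrary $c\in\Cons_+^{\mathbf{H},k}(v)$ directly, letting the linear correction be $\le 0$ via the inequality of Lemma \ref{lem:SuperMart=Mart}(i), whereas the paper first restricts competitors to $\Cons_m^{\mathbf{H},k}(v)$ via equality \eqref{eq:utility_mart} so that the linear term cancels exactly; both variants are equivalent and yours is marginally more direct.
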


It can be easily shown that, if $u^{\mathbf{H},k}(v)<+\infty$, then the strict concavity of $U$ implies that the optimal consumption process $c^{\mathbf{H}}=(c^{\mathbf{H}}_t)_{t\in[0,T]}$ is unique up to a $(\!\ud\kappa\otimes\prob)$-nullset.
The associated optimal trading strategy $\vartheta^{\mathbf{H}}\in L(S,\mathbf{H})$ is given by the integrand appearing in the martingale representation (see Proposition \ref{prop:MRT}) of the local martingale $M=(M_t)_{t\in[0,T]}$ on $(\Omega,\mathbf{H},\prob)$ defined by $M_t:=\expec[\int_t^T\!Z^{\mathbf{H}}_uc^{\mathbf{H}}_u\ud\kappa_u|\mathcal{H}_t]+Z^{\mathbf{H}}_t\!\int_0^tc^{\mathbf{H}}_u\!\ud\kappa_u+k(\expec[Z^{\mathbf{H}}_T|\mathcal{H}_t]-Z^{\mathbf{H}}_t)$, for all $t\in[0,T]$.
Note also that the optimal solution does not depend on the allowable leverage $k$ if NFLVR holds on $(\Omega,\mathbf{H},\prob;S)$.

The quantity $v^{\mathbf{H}}_k=-k(1-\|\expec[Z^{\mathbf{H}}_T|\mathcal{H}_0]\|_{\infty})$ introduced in Proposition \ref{prop:lambda} admits a clear economic interpretation. Indeed, it represents the maximum amount of liabilities with which an agent can start at $t=0$. If $v=v^{\mathbf{H}}_k$, then an agent can start trading by borrowing the amount $k\|\expec[Z^{\mathbf{H}}_T|\mathcal{H}_0]\|_{\infty}$, thus exhausting his credit line, and investing in the self-financing strategy which replicates the constant payoff $k$. This strategy ensures the full repayment of all liabilities at date $T$ and requires an investment of $k\expec[Z^{\mathbf{H}}_T|\mathcal{H}_0]$ at $t=0$ (see the proof of Theorem \ref{thm:results_NUPBR}). The remaining resources $k(\|\expec[Z^{\mathbf{H}}_T|\mathcal{H}_0]\|_{\infty}-\expec[Z^{\mathbf{H}}_T|\mathcal{H}_0])$ can possibly be used to finance consumption.
For $v<v^{\mathbf{H}}_k$, there does not exist a strategy which can fully ensure the agent against his liabilities, so that $\Cons^{\mathbf{H},k}_+(v)=\emptyset$.

\begin{rem}	\label{rem:credit_cons}
Let $0\leq k_1<k_2$ and suppose that there exist two $\mathcal{H}_0$-measurable random variables $\Lambda^{\mathbf{H},k_1}(v)$ and $\Lambda^{\mathbf{H},k_2}(v)$ satisfying \eqref{eq:condition_lambda}, for some $v\geq v^{\mathbf{H}}_{k_1}$. Since $\prob(\kappa_T>0|\sigma(L))>0$ a.s., it can be shown that $\Lambda^{\mathbf{H},k_1}(v)\geq\Lambda^{\mathbf{H},k_2}(v)$ a.s., with strict inequality holding on $\{\expec[Z^{\mathbf{H}}_T|\mathcal{H}_0]<1\}$.
This means that, in the presence of arbitrage, a deeper credit line yields a higher consumption rate. In turn, this implies that $u^{\mathbf{H},k}(v)$ is strictly increasing in $k$ if $\expec[Z^{\mathbf{H}}_T]<1$ (see also Remark \ref{rem:credit_arbitrage}).
\end{rem}

\begin{rem}
The existence of an $\mathcal{H}_0$-measurable random variable $\Lambda^{\mathbf{H},k}(v)$ solving equation \eqref{eq:condition_lambda} is ensured if $\int_0^TZ^{\mathbf{H}}_uI(u,yZ^{\mathbf{H}}_u)\ud\kappa_u\in L^1(\prob)$, for all $y>0$. This corresponds to a classical condition in the theory of expected utility maximization (see, e.g., \cite{KLSX,KS98} and  \cite[Lemma 5.2]{Amen}).
\end{rem}


\begin{rem}	\label{rem:density_clock}
The structure of problems \eqref{eq:uH}-\eqref{eq:uH_2} together with the results of the present section make clear that, if there exists an $\bF$-stopping time $\tau$ such that $\kappa_{\tau}=\kappa_T$ a.s., then Assumptions \ref{ass:Jacod}-\ref{ass:NoJumpZero} can be relaxed and only assumed to hold on  $\dbra{0,\tau}$.
In particular, this allows to consider situations where Assumption \ref{ass:Jacod} holds on $[0,T)$ but fails at the terminal date $T$ (for instance when $L$ is an $\F_T$-measurable continuous random variable, in which case the problem of maximizing expected utility from wealth at date $T$ does not have a solution in $\bG$, see \cite{PK96,GP98,AIS98}).
\end{rem}

\subsection{Explicit solutions}	\label{sec:explicit_solutions}

In this section, we derive explicit solutions to the optimal consumption-investment problem in the case of logarithmic, power, and exponential utility functions. Besides allowing for intermediate consumption, this section generalizes \cite[Corollary 4.7]{ABS} to the case where the inside information can generate arbitrage opportunities. 
The results of this section will be used in Section \ref{sec:indiff_price} for the explicit computation of the value of informational arbitrage.
Corollaries \ref{cor:log_pwr_utility} and \ref{cor:exp_utility} can have some interest on their own, as they provide explicit solutions to optimal portfolio problems in general complete financial markets admitting arbitrage opportunities, in the presence of non-trivial initial information and leverage.

\begin{cor}	\label{cor:log_pwr_utility}
Let $\mathbf{H}\in\{\bF,\bG\}$, $k\in\R_+$ and $v>v^{\mathbf{H}}_k$. 
The optimal expected utilities in problem \eqref{eq:uH} for logarithmic and power utility functions are explicitly given as follows:
\begin{enumerate}
\item[(i)]
Let $U(\omega,t,x)=\log(x)$, for all $(\omega,t,x)\in\Omega\times[0,T]\times(0,+\infty)$.\\
If $\int_0^T\log(1/Z^{\mathbf{H}}_u)\ud\kappa_u\in L^1(\prob)$, then
\be	\label{eq:opt_ut_log}
u^{\mathbf{H},k}(v)
= \expec\left[\log\left(v+k(1-\expec[Z^{\mathbf{H}}_T|\mathcal{H}_0])\right)\kappa_T\right]
- \expec\bigl[\log\bigl(\expec[\kappa_T|\mathcal{H}_0]\bigr)\kappa_T\bigr] 
+ \expec\left[\int_0^T\log\left(\frac{1}{Z^{\mathbf{H}}_u}\right)\ud\kappa_u\right].
\ee
\item[(ii)]
Let $U(\omega,t,x)=x^p/p$, for some $p\in(0,1)$, for all $(\omega,t,x)\in\Omega\times[0,T]\times(0,+\infty)$.\\
If $\expec[\int_0^T(Z^{\mathbf{H}}_u)^{p/(p-1)}\ud\kappa_u|\mathcal{H}_0]<+\infty$ a.s., then
\be	\label{eq:opt_ut_pwr}
u^{\mathbf{H},k}(v)
= \frac{1}{p} \expec\left[ 
\left(v+k(1-\expec[Z^{\mathbf{H}}_T|\mathcal{H}_0])\right)^p
\expec\left[ \int_0^T \left(Z^{\mathbf{H}}_u \right) ^{\frac{p}{p-1}}\ud\kappa_u  \bigg| \mathcal{H}_0 \right]^{1-p} \right]
\ee
and $u^{\mathbf{H},k}(v)<+\infty$ if $\expec[\int_0^T(Z^{\mathbf{H}}_u)^{\frac{p}{p-1}}\ud\kappa_u |\mathcal{H}_0]^{1-p} \in L^1(\prob)$. 
\end{enumerate}
\end{cor}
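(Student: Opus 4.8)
The strategy is to apply Proposition \ref{prop:lambda}, identifying the Lagrange multiplier $\Lambda^{\mathbf{H},k}(v)$ explicitly for each utility function and then plugging the resulting optimal consumption rate back into the objective. First I would recall that, by Lemma \ref{lem:SuperMart=Mart}, we may write $w := v + k(1-\expec[Z^{\mathbf{H}}_T|\mathcal{H}_0])$ for the effective initial capital appearing on the right-hand side of the budget constraint \eqref{eq:condition_lambda}, and observe that $w > 0$ a.s.\ because $v > v^{\mathbf{H}}_k$. The computation then splits according to the explicit form of $I(\omega,t,y) = (U')^{-1}(\omega,t,y)$.

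For the logarithmic case (i), $U(x) = \log x$ gives $U'(x) = 1/x$ and hence $I(t,y) = 1/y$, so the budget constraint \eqref{eq:condition_lambda} reads $\expec[\int_0^T Z^{\mathbf{H}}_u \, (\Lambda^{\mathbf{H},k}(v) Z^{\mathbf{H}}_u)^{-1} \ud\kappa_u | \mathcal{H}_0] = \expec[\kappa_T|\mathcal{H}_0]/\Lambda^{\mathbf{H},k}(v) = w$, which I solve to get $\Lambda^{\mathbf{H},k}(v) = \expec[\kappa_T|\mathcal{H}_0]/w$. The optimal consumption is then $c^{\mathbf{H}}_t = 1/(\Lambda^{\mathbf{H},k}(v) Z^{\mathbf{H}}_t) = w/(Z^{\mathbf{H}}_t \expec[\kappa_T|\mathcal{H}_0])$, and substituting into $\expec[\int_0^T \log(c^{\mathbf{H}}_u)\ud\kappa_u]$ splits the logarithm into the three terms $\log w$, $-\log\expec[\kappa_T|\mathcal{H}_0]$ and $\log(1/Z^{\mathbf{H}}_u)$, each integrated against $\ud\kappa_u$ and then against $\prob$; using the tower property on the first two (which are $\mathcal{H}_0$-measurable and can be pulled out of the $\ud\kappa_u$-integral up to the factor $\kappa_T$) yields \eqref{eq:opt_ut_log}. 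The integrability hypothesis $\int_0^T \log(1/Z^{\mathbf{H}}_u)\ud\kappa_u \in L^1(\prob)$ is exactly what is needed to verify the negative-part condition in Proposition \ref{prop:lambda} and to guarantee finiteness of the objective.

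For the power case (ii), $U(x) = x^p/p$ gives $U'(x) = x^{p-1}$ and $I(t,y) = y^{1/(p-1)}$, so the budget constraint becomes $\Lambda^{\mathbf{H},k}(v)^{1/(p-1)} \expec[\int_0^T (Z^{\mathbf{H}}_u)^{p/(p-1)}\ud\kappa_u | \mathcal{H}_0] = w$; writing $A := \expec[\int_0^T (Z^{\mathbf{H}}_u)^{p/(p-1)}\ud\kappa_u | \mathcal{H}_0]$, I solve for $\Lambda^{\mathbf{H},k}(v) = (w/A)^{p-1}$ and obtain $c^{\mathbf{H}}_t = (\Lambda^{\mathbf{H},k}(v) Z^{\mathbf{H}}_t)^{1/(p-1)} = (w/A)\, (Z^{\mathbf{H}}_t)^{1/(p-1)}$. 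Plugging into $\expec[\frac1p\int_0^T (c^{\mathbf{H}}_u)^p \ud\kappa_u]$ produces $\frac1p \expec[ (w/A)^p \int_0^T (Z^{\mathbf{H}}_u)^{p/(p-1)}\ud\kappa_u]$; conditioning on $\mathcal{H}_0$, the factor $(w/A)^p$ is $\mathcal{H}_0$-measurable and the remaining conditional expectation is precisely $A$, so the expression collapses to $\frac1p\expec[ w^p A^{1-p}]$, which is \eqref{eq:opt_ut_pwr}. The stated sufficient condition $\expec[\int_0^T (Z^{\mathbf{H}}_u)^{p/(p-1)}\ud\kappa_u|\mathcal{H}_0]^{1-p} \in L^1(\prob)$ then guarantees $u^{\mathbf{H},k}(v) < +\infty$ after combining with the boundedness of $w$ in the relevant direction.

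The main obstacle is not the algebra but the verification that the candidate $\Lambda^{\mathbf{H},k}(v)$ is well-defined and strictly positive $\mathcal{H}_0$-a.s.\ and that the negative-part integrability condition of Proposition \ref{prop:lambda} is met, so that the proposition genuinely applies; this requires checking that $\expec[\kappa_T|\mathcal{H}_0] > 0$ a.s.\ (which follows from the standing assumption $\prob(\kappa_T > 0|\sigma(L)) > 0$ a.s., since $\sigma(L) \subseteq \G_0$ and, in the $\bF$-case, from $\prob(\kappa_T>0)>0$), that $A$ is a.s.\ finite and positive in the power case (finiteness is the stated hypothesis; positivity follows since $Z^{\mathbf{H}} > 0$), and, in the logarithmic case, that $\expec[\int_0^T U^-(u, c^{\mathbf{H}}_u)\ud\kappa_u]$ is finite — which reduces to the assumed $L^1$-integrability of $\int_0^T \log(1/Z^{\mathbf{H}}_u)\ud\kappa_u$ together with integrability of the $\mathcal{H}_0$-measurable pieces. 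Once these measurability and integrability points are dispatched, the formulas \eqref{eq:opt_ut_log} and \eqref{eq:opt_ut_pwr} follow by the direct substitutions described above.
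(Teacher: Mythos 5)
Your proof is correct and follows essentially the same route as the paper's: in each case you invert the marginal utility to get $I$, solve the budget constraint \eqref{eq:condition_lambda} explicitly for $\Lambda^{\mathbf{H},k}(v)$, substitute the optimal consumption $c^{\mathbf{H}}_t=I(t,\Lambda^{\mathbf{H},k}(v)Z^{\mathbf{H}}_t)$ back into the objective, and use the tower property to simplify. The added remarks on positivity of $w$ and $\expec[\kappa_T|\mathcal{H}_0]$ and on the negative-part integrability are sound verifications that the paper leaves implicit.
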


Observe that the optimal expected utilities do not depend on $k$ if and only if there are no arbitrage opportunities in $(\Omega,\mathbf{H},\prob;S)$, in line with Remark \ref{rem:credit_arbitrage}. On the other hand, in the presence of arbitrage the optimal expected utility is strictly increasing in $k$, reflecting the fact that higher levels of consumption can be financed by taking more leveraged positions in the arbitrage strategy.

\begin{rem}	\label{rem:utility_gain}
Consider the classical setting where $\ud\kappa_u=\delta_T(\!\ud u)$ and $U(\omega,t,x)=\log(x)$, corresponding to maximization of expected logarithmic utility from terminal wealth.
Suppose that $u^{\bG,k}(v)<+\infty$, for some $k\in\R_+$ and $v>0$. Corollary \ref{cor:log_pwr_utility} then implies that
\begin{align}
u^{\bG,k}(v) - u^{\bF,k}(v)
&= \expec\left[\log\bigl(v+k(1-\expec[Z^{\bG}_T|\mathcal{G}_0])\bigr)\right]
+ \expec\left[\log\left(1/Z^{\bG}_T\right)\right] 
- \log(v) -\expec\left[\log\left(1/Z^{\bF}_T\right)\right] \notag\\
&= 
\expec\left[\log\left(1+\frac{k}{v}\Bigl(1-\qprob(q^x_T>0)\bigr|_{x=L}\Bigr)\right)\right]
+ \expec\bigl[\log(q^L_T)\bigr],
\label{eq:utility_gain}
\end{align}
representing the  {\em utility gain} of an informed agent with allowable leverage $k$. 
This result generalizes \cite[Theorem 3.7]{AIS98}, where relation \eqref{eq:utility_gain} has been obtained under the additional assumptions that the densities $q^x$ are a.s. strictly positive and continuous (and $k=0$).
Note also that 
\[
u^{\bG,k}(v)-u^{\bF,k}(v)
\geq u^{\bG,0}(v) - u^{\bF,0}(v)
\geq -\log(\expec[1/q^L_T])\geq0.
\] 
By Theorem \ref{thm:results_NUPBR}, this shows that the utility gain of an informed agent is always strictly positive if the inside information represented by $L$ allows for arbitrage in $(\Omega,\bG,\prob;S)$.
Moreover, if $L$ is a discrete $\F_T$-measurable random variable, as considered in Sections \ref{sec:example_Brownian}-\ref{sec:example_Poisson}, and $k=0$, the utility gain of an informed agent equals the entropy of $L$, i.e., $\expec[\log(q^L_T)]=-\sum_{x\in E}\prob(L=x)\log(\prob(L=x))$.
\end{rem}

Let us now consider the case of exponential preferences. 
Even though exponential utility does not satisfy Assumption \ref{ass:U} (the Inada condition at $0$ fails), the optimal consumption process can  be characterized similarly as in Proposition \ref{prop:lambda}.
The next corollary can be seen as a semimartingale version of  \cite[Theorem 2.4]{CH89} (see also \cite[Theorem 3.2]{MW12} in a discrete-time setting).

\begin{cor}	\label{cor:exp_utility}
Let $\mathbf{H}\in\{\bF,\bG\}$, $k\in\R_+$ and $v> v^{\mathbf{H}}_k$. Suppose that $U(\omega,t,x)=-e^{-\alpha x}$, for all $(\omega,t,x)\in\Omega\times[0,T]\times\R_+$, for $\alpha>0$. 
Then the optimal expected utility in problem \eqref{eq:uH} is given by
\be	\label{eq:opt_exp_utility}
u^{\mathbf{H},k}(v) = 
-\frac{1}{\alpha}\expec\left[\int_0^T\bigl(\Lambda^{\mathbf{H},k}(v)Z^{\mathbf{H}}_u\wedge \alpha\bigr)\ud\kappa_u\right],
\ee
where the $\mathcal{H}_0$-measurable random variable $\Lambda^{\mathbf{H},k}(v)$ is the a.s. unique solution to the equation
\be	\label{eq:exp_utility_lambda}
\frac{1}{\alpha}\expec\left[\int_0^TZ^{\mathbf{H}}_u\left(\log\left(\frac{\alpha}{\Lambda^{\mathbf{H},k}(v)Z^{\mathbf{H}}_u}\right)\right)^+\!\ud\kappa_u\bigg|\mathcal{H}_0\right] = v+k\left(1-\expec[Z^{\mathbf{H}}_T|\mathcal{H}_0]\right).
\ee
\end{cor}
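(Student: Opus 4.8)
The plan is to derive the exponential case as a limiting/analogue version of the argument behind Proposition \ref{prop:lambda}, since the exponential utility $U(x)=-e^{-\alpha x}$ is strictly concave and strictly increasing but fails the Inada condition at $0$ (one has $U'(x)=\alpha e^{-\alpha x}\to\alpha$ as $x\downarrow0$, not $+\infty$). Concretely, the first-order condition $U'(c_t)=yZ^{\mathbf{H}}_t$ formally gives $c_t = -\frac1\alpha\log(yZ^{\mathbf{H}}_t/\alpha)$, but this is only non-negative where $yZ^{\mathbf{H}}_t\le\alpha$; enforcing the constraint $c\ge0$ we are led to the candidate $c^{\mathbf{H}}_t = \frac1\alpha\bigl(\log(\alpha/(\Lambda^{\mathbf{H},k}(v)Z^{\mathbf{H}}_t))\bigr)^+$. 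So the first step is to show that this truncated log-formula is indeed the pointwise maximiser of $U(c)-yZ^{\mathbf{H}}_t c$ over $c\ge0$, and then run the Lagrangian/duality argument: one picks the multiplier $\Lambda^{\mathbf{H},k}(v)$ precisely so that the budget constraint of Lemma \ref{lem:SuperMart=Mart}(ii) holds with equality, namely $\expec[\int_0^TZ^{\mathbf{H}}_u c^{\mathbf{H}}_u\ud\kappa_u|\mathcal{H}_0]=v+k(1-\expec[Z^{\mathbf{H}}_T|\mathcal{H}_0])$, which is exactly equation \eqref{eq:exp_utility_lambda}.

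Next I would verify existence and a.s.\ uniqueness of the $\mathcal{H}_0$-measurable solution $\Lambda^{\mathbf{H},k}(v)$ to \eqref{eq:exp_utility_lambda}. The map $\Lambda\mapsto \frac1\alpha\expec[\int_0^TZ^{\mathbf{H}}_u(\log(\alpha/(\Lambda Z^{\mathbf{H}}_u)))^+\ud\kappa_u|\mathcal{H}_0]$ is, on the event where the conditional expectation is finite, continuous and strictly decreasing in $\Lambda$ (strict decrease using $\prob(\kappa_T>0|\sigma(L))>0$ a.s., which guarantees the clock is genuinely active), tends to $0$ as $\Lambda\uparrow+\infty$ and to $+\infty$ as $\Lambda\downarrow0$; since the right-hand side $v+k(1-\expec[Z^{\mathbf{H}}_T|\mathcal{H}_0])$ is $\mathcal{H}_0$-measurable and, for $v>v^{\mathbf{H}}_k$, a.s.\ strictly positive (recall $v^{\mathbf{H}}_k=-k(1-\|\expec[Z^{\mathbf{H}}_T|\mathcal{H}_0]\|_\infty)$), the intermediate value theorem applied $\omega$-wise together with a measurable selection argument yields a unique $\mathcal{H}_0$-measurable $\Lambda^{\mathbf{H},k}(v)$. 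One should also check the integrability needed so that $c^{\mathbf{H}}\in\Cons^{\mathbf{H},k}_m(v)$ and the corresponding expected utility is well-defined and finite — here $U^-\equiv0$ since $U\le0$, so the convention in \eqref{eq:uH} is automatically satisfied and $u^{\mathbf{H},k}(v)\le0$ is immediate.

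Then comes the optimality verification, which is the standard concavity/duality step: for any competing $c\in\Cons^{\mathbf{H},k}_+(v)$, write
\[
U(u,c_u) - U(u,c^{\mathbf{H}}_u) \ \le\ U'(u,c^{\mathbf{H}}_u)\,(c_u-c^{\mathbf{H}}_u)\ \le\ \Lambda^{\mathbf{H},k}(v)\,Z^{\mathbf{H}}_u\,(c_u-c^{\mathbf{H}}_u),
\]
where the first inequality is concavity of $U$ and the second uses $U'(u,c^{\mathbf{H}}_u)=\Lambda^{\mathbf{H},k}(v)Z^{\mathbf{H}}_u$ on $\{c^{\mathbf{H}}_u>0\}$ and $U'(u,c^{\mathbf{H}}_u)=\alpha\le \Lambda^{\mathbf{H},k}(v)Z^{\mathbf{H}}_u$ on $\{c^{\mathbf{H}}_u=0\}$ together with $c_u-c^{\mathbf{H}}_u\ge0$ there. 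Integrating against $\ud\kappa_u$, taking $\expec[\,\cdot\,|\mathcal{H}_0]$, and invoking Lemma \ref{lem:SuperMart=Mart}(i)–(ii) (the budget inequality for $c$, the budget equality for $c^{\mathbf{H}}$) makes the right-hand side non-positive, so $c^{\mathbf{H}}$ maximises \eqref{eq:uH_2} and hence \eqref{eq:uH}. Finally, plugging $c^{\mathbf{H}}_u=\frac1\alpha(\log(\alpha/(\Lambda^{\mathbf{H},k}(v)Z^{\mathbf{H}}_u)))^+$ into $U(u,c^{\mathbf{H}}_u)=-e^{-\alpha c^{\mathbf{H}}_u}=-(\Lambda^{\mathbf{H},k}(v)Z^{\mathbf{H}}_u/\alpha\wedge 1)$ and using $\expec[\int_0^T\ud\kappa_u]<\infty$ gives, after multiplying through by $1/\alpha$, the expression \eqref{eq:opt_exp_utility}.

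I expect the main obstacle to be the treatment of the kink at $c=0$ — i.e.\ making rigorous that the truncated formula is genuinely optimal despite the failure of the Inada condition, and controlling the integrability of $(\log(\alpha/(\Lambda Z^{\mathbf{H}}_u)))^+$ uniformly enough to carry out the measurable selection for $\Lambda^{\mathbf{H},k}(v)$ and to justify exchanging limits/expectations in the optimality inequality. Everything else is a routine adaptation of the proof of Proposition \ref{prop:lambda}, using that $U\le 0$ removes the usual delicate negative-part bookkeeping.
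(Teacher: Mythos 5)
Your proposal is correct and follows essentially the same route as the paper: the paper also derives $c^{\mathbf{H}}_t=\tfrac{1}{\alpha}(\log(\alpha/(\Lambda^{\mathbf{H},k}(v)Z^{\mathbf{H}}_t)))^+$, proves existence and a.s.\ uniqueness of the $\mathcal{H}_0$-measurable multiplier by showing the map $\lambda\mapsto g(\lambda)$ is continuous, decreasing (strictly on $\{g>0\}$) with the stated limits and then invoking a measurable selection lemma, verifies the candidate lies in $\Cons_m^{\mathbf{H},k}(v)$ via Lemma \ref{lem:SuperMart=Mart}, and concludes by the same concavity/complementary-slackness estimate. The only cosmetic difference is that the paper packages your two-case bound at the kink into the single Fenchel-type identity $\sup_{x\ge0}(U(x)-xy)=U(I(y))-yI(y)$ with $I(y)=\tfrac1\alpha(\log(\alpha/y))^+$, which is equivalent to your argument.
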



Equation \ref{eq:exp_utility_lambda} can be explicitly solved in some simple models. In particular, if $\ud\kappa_u=\delta_T(\!\ud u)$ and $k=0$, a sufficient condition is that $\log(Z^{\mathbf{H}}_T)\leq\expec[Z^{\mathbf{H}}_T\log(Z^{\mathbf{H}}_T)|\mathcal{H}_0]/\expec[Z^{\mathbf{H}}_T|\mathcal{H}_0]$ a.s. The latter condition is always satisfied if $\qprob=\prob$ and $L$ is a discrete $\F_T$-measurable random variable generating arbitrage opportunities in $(\Omega,\bG,\prob;S)$
(compare also with the examples given in Sections \ref{sec:example_Brownian}-\ref{sec:example_Poisson}).

\section{The utility indifference value of inside information}		\label{sec:indiff_price}

By relying on the results established in the previous section, we are now in a position to study and compute the value of an inside information which potentially enables an informed agent to achieve arbitrage opportunities. Inspired by \cite{ABS}, we introduce the following definition.

\begin{defn}	\label{def:uip}
For $k\in\R_+$ and $v>0$, the {\em utility indifference value of the inside information $L$} is defined as a solution $\pi=\pi^{U,k}(v)\in\R_+$ to the following equation:
\begin{equation}\label{eq:price_arb}
u^{\FF,k}(v) \ = \ u^{\GG,k}(v - \pi).
\end{equation}
\end{defn}

As explained in the introduction, the value $\pi^{U,k}(v)$ is such that an investor is indifferent between two alternatives: 
(i) invest optimally the total initial wealth $v$ on the basis of the publicly available information; 
(ii) acquire the inside information $L$ at the price $\pi^{U,k}(v)$ and invest optimally the residual wealth $v-\pi^{U,k}(v)$, possibly exploiting the arbitrages generated by the knowledge of $L$.
If the inside information $L$ allows an investor to achieve arbitrage, as considered in the examples of Section \ref{sec:examples}, then we call the quantity $\pi^{U,k}(v)$ the {\em indifference value of informational arbitrage}.

The utility indifference value $\pi^{U,k}(v)$ exists and is unique under natural assumptions on the optimal consumption-investment problem,  as long as the expected utility maximization problem of an informed agent is well-posed (i.e., it does not lead to infinite utility).

\begin{thm}	\label{thm:IndPrice}
Suppose that $u^{\bF,0}(v)>-\infty$, for every $v>0$, and that the assumptions of Proposition \ref{prop:lambda} are satisfied, for every $k\in\R_+$, $v>v^{\mathbf{H}}_k$ and $\mathbf{H}\in\{\bF,\bG\}$.
Assume furthermore that $u^{\bG,k}(v_0)<+\infty$, for some $v_0>v^{\bG}_k$.
Then, for every $v>0$, the following hold:
\begin{enumerate}
\item[(i)]
If $\lim_{w\searrow v^{\bG}_k}u^{\bG,k}(w)\!<\!u^{\bF,0}(v)$, then the utility indifference value $\pi^{U,k}(v)$ exists and is unique.
\item[(ii)]
The map $k\mapsto\pi^{U,k}(v)$ is strictly increasing if and only if $\expec[1/q^L_T]<1$.
\item[(iii)]
If $\int_E(\expec[\int_0^T\!\ind_{\{q^x_t=0\}}\!\ud\kappa_t]+k\prob(q^x_T=0))\lambda(\!\ud x)>0$, then it always holds that $\pi^{U,k}(v)>0$.
\end{enumerate}
\end{thm}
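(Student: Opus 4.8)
The plan is to study the function $w \mapsto u^{\bG,k}(w)$ on the interval $(v^{\bG}_k, +\infty)$ and to establish the monotonicity, continuity and range properties that allow equation \eqref{eq:price_arb} to be inverted. First I would record that, by Lemma \ref{lem:SuperMart=Mart}, for $\mathbf{H}\in\{\bF,\bG\}$ the value $u^{\mathbf{H},k}(w)$ is the supremum of $\expec[\int_0^T U(u,c_u)\ud\kappa_u]$ over the set $\Cons_m^{\mathbf{H},k}(w)$, whose elements satisfy the budget identity $\expec[\int_0^T Z^{\mathbf{H}}_u c_u \ud\kappa_u \mid \mathcal{H}_0] = w + k(1 - \expec[Z^{\mathbf{H}}_T \mid \mathcal{H}_0])$ a.s. From this representation it is immediate that $w \mapsto u^{\mathbf{H},k}(w)$ is non-decreasing, and in fact \emph{strictly} increasing on $(v^{\mathbf{H}}_k,+\infty)$: enlarging the right-hand side of the budget constraint strictly enlarges the admissible consumption set, and because $U$ is strictly increasing (Assumption \ref{ass:U}) with $\prob(\kappa_T>0\mid\sigma(L))>0$ a.s., a strictly larger budget can be used to increase consumption on a set of positive $\ud\kappa\otimes\prob$ measure, strictly raising the expected utility (here one uses the Inada condition and strict concavity exactly as in the construction of $\Lambda^{\mathbf{H},k}$ in Proposition \ref{prop:lambda}). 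Concavity of $w\mapsto u^{\mathbf{H},k}(w)$ follows from concavity of $U$ together with the fact that the constraint set depends affinely on $w$; hence it is continuous on the open interval $(v^{\bG}_k,+\infty)$.

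Next I would pin down the two endpoints of the range of $u^{\bG,k}$ on $(v^{\bG}_k,+\infty)$. By hypothesis $u^{\bG,k}(v_0)<+\infty$ for some $v_0>v^{\bG}_k$, and combined with strict monotonicity and concavity this gives $u^{\bG,k}(w)<+\infty$ for all $w$ in the interval; strict concavity also yields that $u^{\bG,k}(w)\to+\infty$ as $w\to+\infty$ whenever $u^{\bG,k}$ is finite and increasing (a concave increasing function that is not eventually constant — and it cannot be constant by strict monotonicity — diverges). At the lower endpoint, $\lim_{w\searrow v^{\bG}_k}u^{\bG,k}(w)$ exists in $[-\infty,+\infty)$ by monotonicity. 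For part (i): under the standing hypothesis $\lim_{w\searrow v^{\bG}_k}u^{\bG,k}(w)<u^{\bF,0}(v)$, and since trivially $u^{\bF,0}(v)\le u^{\bF,k}(v)=u^{\bG,k}(v')$ for a suitable $v'$ — more directly, since $u^{\bG,k}$ is continuous, strictly increasing, with infimum strictly below $u^{\bF,k}(v)$ and supremum $+\infty$ — the intermediate value theorem produces a unique $w^*\in(v^{\bG}_k,+\infty)$ with $u^{\bG,k}(w^*)=u^{\bF,k}(v)$, whence $\pi^{U,k}(v):=v-w^*$ solves \eqref{eq:price_arb} and is the unique solution. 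One must also check $\pi^{U,k}(v)\ge 0$, i.e. $w^*\le v$: this follows because $u^{\bF,k}(v)=u^{\bG,k}(w^*)$ and the informed value dominates the ordinary value, $u^{\bG,k}(v)\ge u^{\bF,k}(v)$ (a consequence of $\A^{\bF,k}_+(v)\subseteq\A^{\bG,k}_+(v)$, Lemma \ref{lem:AFinAG}(ii)), so $u^{\bG,k}(w^*)\le u^{\bG,k}(v)$ and strict monotonicity gives $w^*\le v$.

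For part (ii), I would use the explicit budget constraint and the role of $k$: by Remark \ref{rem:credit_cons}, the Lagrange multiplier $\Lambda^{\bG,k}(w)$ is (weakly) decreasing in $k$ and strictly decreasing on $\{\expec[Z^{\bG}_T\mid\mathcal{G}_0]<1\}$, which by Theorem \ref{thm:results_NUPBR} is a non-null set precisely when $\expec[1/q^L_T]<1$. When $\expec[1/q^L_T]=1$ we have $Z^{\bG}\in\M(\prob,\bG)$ (Theorem \ref{thm:results_NUPBR}(vi)), so $v^{\bG}_k=v^{\bG}_0$ and the whole problem \eqref{eq:uH} is independent of $k$ (Remark \ref{rem:credit_arbitrage}), while $u^{\bF,k}$ is likewise $k$-independent since NFLVR also holds in $\bF$; hence $\pi^{U,k}(v)=\pi^{U,0}(v)$ for all $k$, giving the "only if". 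For the "if": suppose $\expec[1/q^L_T]<1$ and $k_1<k_2$. Then $u^{\bF,k_1}(v)=u^{\bF,k_2}(v)=u^{\bF,0}(v)$ (NFLVR in $\bF$), so $\pi^{U,k_i}(v)$ is defined by $u^{\bG,k_i}(v-\pi^{U,k_i}(v))=u^{\bF,0}(v)$. By Remark \ref{rem:credit_arbitrage} applied in $\bG$, $u^{\bG,k}(w)$ is strictly increasing in $k$ for fixed $w$; therefore $u^{\bG,k_2}(v-\pi^{U,k_1}(v))>u^{\bG,k_1}(v-\pi^{U,k_1}(v))=u^{\bF,0}(v)=u^{\bG,k_2}(v-\pi^{U,k_2}(v))$, and strict monotonicity of $u^{\bG,k_2}(\cdot)$ forces $v-\pi^{U,k_2}(v)<v-\pi^{U,k_1}(v)$, i.e. $\pi^{U,k_2}(v)>\pi^{U,k_1}(v)$.

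For part (iii), the hypothesis $\int_E\big(\expec[\int_0^T\ind_{\{q^x_t=0\}}\ud\kappa_t]+k\,\prob(q^x_T=0)\big)\lambda(\ud x)>0$ says exactly that either the consumption clock charges, with positive probability, times at which some density $q^x$ has already vanished, or $k>0$ and $\prob(q^x_T=0)>0$ on a non-null set of $x$'s — in either case, by Theorem \ref{thm:results_NUPBR}(iv) and Fubini, $\expec[1/q^L_T]<1$, so $\pi^{U,k}(v)>0$ would follow from part (ii) together with $\pi^{U,0}(v)\ge 0$ — but to get \emph{strict} positivity even at $k=0$ I would argue directly: compare the budget constraints in $\bF$ and $\bG$. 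An informed agent at wealth $v$ faces the constraint with $Z^{\bG}=Z/q^L$ and credit-adjusted budget $v+k(1-\expec[Z^{\bG}_T\mid\mathcal{G}_0])$, and because $\{q^L=0\}$ has positive $\ud\kappa\otimes\prob$-measure on the set in the hypothesis, one can exhibit a strictly cheaper way (in $Z^{\bG}$-price) of financing any given consumption stream — equivalently, $u^{\bG,k}(v)>u^{\bF,k}(v)=u^{\bF,0}(v)$ strictly — whence by continuity and strict monotonicity of $u^{\bG,k}$ the solution $w^*$ of $u^{\bG,k}(w^*)=u^{\bF,0}(v)$ satisfies $w^*<v$, i.e. $\pi^{U,k}(v)>0$. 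The cleanest route is to show $u^{\bG,0}(v)\ge u^{\bG,0}$ evaluated at a strictly larger effective budget than in $\bF$: concretely, the optimal $\bF$-consumption $c^{\bF}$ costs $\expec[\int_0^T Z^{\bG}_u c^{\bF}_u\ud\kappa_u]<\expec[\int_0^T Z^{\bF}_u c^{\bF}_u\ud\kappa_u]=v$ strictly in $\bG$-prices, using $Z^{\bG}_u=Z_u/q^L_u$ and $q^L_u>1$ with positive probability precisely on the clock-charged zero-set (here Jensen and $\expec[q^L_t]=1$ are used), leaving strictly positive residual budget to boost consumption.

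\textbf{Main obstacle.} The delicate point is the \emph{strict} inequality arguments — strict monotonicity of $u^{\mathbf{H},k}$ in $w$ and the strict gain $u^{\bG,k}(v)>u^{\bF,k}(v)$ in part (iii). Monotonicity and continuity of the value function are standard convex-duality facts, but turning "a non-null set where $q^x$ vanishes" into a \emph{quantitatively} strictly cheaper replication cost, and then into a strict utility improvement, requires carefully invoking the Inada condition and strict concavity so that the freed-up budget produces a genuine (not marginal-zero) utility increase; this is where the assumption $\prob(\kappa_T>0\mid\sigma(L))>0$ and the finiteness $u^{\bG,k}(v_0)<+\infty$ (to keep the comparison meaningful) are essential.
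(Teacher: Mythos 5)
Parts (i) and (ii) follow the paper's own route almost exactly and are essentially correct. In (i) you (like the paper) use concavity for continuity, strict monotonicity of $w\mapsto u^{\bG,k}(w)$ via the budget constraint and Assumption \ref{ass:U}, the hypothesis on $\lim_{w\searrow v^{\bG}_k}u^{\bG,k}(w)$ for the lower endpoint, and $u^{\bG,k}(v)\geq u^{\bF,k}(v)=u^{\bF,0}(v)$ (Lemma \ref{lem:AFinAG}) for non-negativity and to close the IVT argument. One incidental remark: your parenthetical claim that a strictly increasing concave function necessarily diverges is false ($x\mapsto 1-e^{-x}$ is a counterexample), but this claim is not actually needed — the inequality $u^{\bG,k}(v)\geq u^{\bF,0}(v)$ already places $u^{\bF,0}(v)$ in the range and is what the paper relies on. Your (ii) reproduces the paper's argument, with the ``if'' direction spelled out in slightly more detail.

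Part (iii) is where your argument has a genuine gap. You write that ``$\{q^L=0\}$ has positive $\ud\kappa\otimes\prob$-measure'', but Lemma \ref{lem:prelim_G} gives $\prob(q^L_t>0)=1$ for all $t$, so $\{q^L=0\}$ is a $\ud\kappa\otimes\prob$-nullset; the hypothesis of (iii) concerns the sets $\{q^x=0\}$ for $\lambda$-a.e.\ $x\in E$, not $q^L$, and these must not be conflated. Similarly, the claimed mechanism ``$q^L_u>1$ with positive probability precisely on the clock-charged zero-set'' does not identify the true source of the cost reduction. The actual reason the $\bF$-optimal consumption $c^{\bF}$ is strictly cheaper in $\bG$-prices is that the $\G_0$-conditional replication cost picks up an indicator $\ind_{\{q^x_u>0\}}$: using formula \eqref{expec_init} and \cite[Theorem 5.32]{MR1219534}, one has
\[
\expec\left[\int_0^T Z^{\bG}_u c^{\bF}_u\ud\kappa_u + kZ^{\bG}_T\,\bigg|\,\G_0\right]
= \expec\left[\int_0^T Z^{\bF}_u\ind_{\{q^x_u>0\}}c^{\bF}_u\ud\kappa_u + kZ^{\bF}_T\ind_{\{q^x_T>0\}}\right]\bigg|_{x=L}
\quad\text{a.s.,}
\]
which is $\leq v+k$ with strict inequality on a positive-probability event precisely under the integral condition of part (iii); the freed-up ($\G_0$-measurable) residual budget is then injected into consumption as in the last part of Lemma \ref{lem:SuperMart=Mart}. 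Note also that this argument must be carried out $\G_0$-conditionally, not just in expectation, because admissibility of the boosted consumption in $\bG$ is a conditional budget constraint; the unconditional inequality $\expec[\int_0^T Z^{\bG}_u c^{\bF}_u\ud\kappa_u]<v$ alone would not suffice without first knowing the conditional cost is a.s.\ bounded above by $v+k$ (which does follow from Lemma \ref{lem:AFinAG}(ii), but needs saying). So your high-level plan for (iii) — show $u^{\bG,k}(v)>u^{\bF,0}(v)$ by exhibiting a cheaper $\bG$-replication of $c^{\bF}$ — matches the paper, but the core computation that makes the strict inequality work is missing, and the stated mechanism would not survive scrutiny.
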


The condition appearing in part (i) of the above theorem is always satisfied in the absence of leverage (i.e., if $k=0$). 
Indeed, $\Lambda^{\bG,0}(0):=\lim_{w\searrow0}\Lambda^{\bG,0}(w)$ exists by monotonicity (see the proof of Theorem \ref{thm:IndPrice} given in Section \ref{sec:proofs_3}), with values in $(0,+\infty]$.
By continuity of $I(\omega,t,\cdot)$ and dominated convergence, condition \eqref{eq:condition_lambda} gives that
\[
\expec\left[\int_0^TZ^{\bG}_uI\bigl(u,\Lambda^{\bG,0}(0)Z^{\bG}_u\bigr)\!\ud\kappa_u\biggr|\G_0\right] = 0.
\]
Therefore, recalling that $I(\omega,t,+\infty):=\lim_{y\rightarrow+\infty}I(\omega,t,y)=0$, for all $(\omega,t)\in\Omega\times[0,T]$, as a consequence of the Inada conditions (see Assumption \ref{ass:U}), it must hold that $\Lambda^{\bG,0}(0)=+\infty$ a.s. 
By Proposition \ref{prop:lambda} and the reverse Fatou lemma, this implies that, for every $v>0$,
\begin{align*}
\lim_{w\searrow0}u^{\bG,0}(w)
&= \lim_{w\searrow0}\expec\left[\int_0^TU\bigl(u,I(u,\Lambda^{\bG,0}(w)Z^{\bG}_u)\bigr)\!\ud\kappa_u\right]	\\
&\leq \expec\left[\int_0^TU\bigl(u,I(u,\Lambda^{\bG,0}(0)Z^{\bG}_u)\bigr)\!\ud\kappa_u\right]	\\
&= \expec\left[\int_0^TU\bigl(u,0\bigr)\ud\kappa_u\right]	
< \expec\left[\int_0^TU\Bigl(u,\frac{v}{\expec[\kappa_T]Z^{\bF}_u}\Bigr)\ud\kappa_u\right]
\leq u^{\bF,0}(v),
\end{align*}
where we have used the fact that the consumption process $(v/(\expec[\kappa_T]Z^{\bF}_t))_{t\in[0,T]}$ is strictly positive and belongs to $\Cons^{\bF,0}_+(v)$ (see Lemma \ref{lem:SuperMart=Mart}).
On the other hand, if $k>0$, then the condition $\lim_{w\searrow v^{\bG}_k}u^{\bG,k}(w)<u^{\bF,0}(v)$ may not necessarily hold, as the combined possibility of leverage and arbitrage may lead an informed agent to always outperform an uninformed agent, even when starting from a liability position at $t=0$.
In this situation, an agent with initial wealth $v$ would strictly prefer to acquire the inside information at any price not greater than $v-v^{\bG}_k$.
Note also that the assumption $u^{\bF,0}(v)>-\infty$, for every $v>0$, always holds if the utility stochastic field $U$ is bounded from below by a real-valued function (in particular, if $U$ is deterministic).

One of the implications of Theorem \ref{thm:IndPrice} is that, whenever the inside information $L$ yields arbitrage, then the indifference value of informational arbitrage is strictly increasing in the allowable leverage $k$. The economic intuition is that, having access to a deeper line of credit, an informed agent can  take  more leveraged positions in the arbitrage strategies, yielding arbitrage profits which can be scaled up to the limit of the allowable leverage.
In turn, this enables an informed agent to finance higher levels of consumption. Note that, if $k>0$, then it may happen that $\pi^{U,k}(v)>v$.

The condition $\int_E(\expec[\int_0^T\!\ind_{\{q^x_t=0\}}\!\ud\kappa_t]+k\prob(q^x_T=0))\lambda(\!\ud x)>0$ implies that an informed agent can finance any consumption plan $c\in\Cons_m^{\bF,k}(v)$ at a cost smaller than $v$, using the remaining resources to increase consumption. 
This is possible since an informed agent does not need to finance consumption in the states which are incompatible with the observed realization of $L$.
In particular, suppose that $\prob(\Delta\kappa_T>0)=1$ (or $k>0$) and that $\prob(q^x_T=0)>0$, for all $x$ belonging to some set $B$ with $\lambda(B)>0$. This corresponds to the situation where a strictly positive weight is placed on wealth at the terminal date $T$ and the random variable $L$ generates arbitrage (see Theorem \ref{thm:results_NUPBR}).
In this case, part (iii) of Theorem \ref{thm:IndPrice} implies that the indifference value of informational arbitrage is always strictly positive: an investor will always be willing to pay a strictly positive price to learn the inside information, regardless of the specific preference structure. 

The conclusions of Theorem \ref{thm:IndPrice} always hold for the utility functions considered in Section \ref{sec:explicit_solutions}, under suitable integrability conditions. This enables us to obtain explicit expressions for the utility indifference value of the inside information $L$ in the case $k=0$ for logarithmic and power utility functions, as shown in the next proposition, which generalizes \cite[Theorem 5.3]{ABS} to the case of an inside information yielding arbitrage opportunities and intermediate consumption.

\begin{prop}	\label{prop:pi}
Suppose that $k=0$. Then the utility indifference value of the inside information $L$ is explicitly given as follows:
\begin{enumerate}
\item[(i)]
Let $U(\omega,t,x)=\log(x)$, for all $(\omega,t,x)\in\Omega\times[0,T]\times(0,+\infty)$.\\
If $\int_0^T\log(q^L_u/Z_u)\ud\kappa_u\in L^1(\prob)$, then, for every $v>0$,
\be	\label{eq:ind_price_log}
\pi^{\rm log}(v) = v\left(1-\exp\left(\frac{1}{\expec[\kappa_T]}\left(\chi^{\bG}-\chi^{\bF}-\expec\left[\int_0^T\log(q^L_u)\ud\kappa_u\right]\right)\right)\right),
\ee
where $\chi^{\mathbf{H}}:=\expec[\log(\expec[\kappa_T|\mathcal{H}_0])\kappa_T]$, for $\mathbf{H}\in\{\bF,\bG\}$.
\item[(ii)]
Let $U(\omega,t,x)=x^p/p$, for some $p\in(0,1)$, for all $(\omega,t,x)\in\Omega\times[0,T]\times(0,+\infty)$.\\
If $\expec[\int_0^T(Z_u/q^L_u)^{\frac{p}{p-1}}\ud\kappa_u|\sigma(L)]^{1-p}\in L^1(\prob)$, then, for every $v>0$,
\be	\label{eq:ind_price_pwr}
\pi^{\rm pwr}(v) = v\left(1-\frac{\expec\biggl[\int_0^TZ_u^{\frac{p}{p-1}}\ud\kappa_u\biggr]^{\frac{1-p}{p}}}{\expec\Biggl[\expec\biggl[\int_0^T(Z_u/q^L_u)^{\frac{p}{p-1}}\ud\kappa_u\Bigr|\G_0\biggr]^{1-p}\Biggr]^{1/p}}\right).
\ee
\end{enumerate}
\end{prop}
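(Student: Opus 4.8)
The plan is to specialize the explicit formulas of Corollary \ref{cor:log_pwr_utility} to the two financial markets $\mathbf{H}=\bF$ and $\mathbf{H}=\bG$ with $k=0$, and then solve equation \eqref{eq:price_arb} for $\pi$. Recall that for $k=0$ we have $v^{\mathbf{H}}_k=0$, so all formulas are valid for every $v>0$. For the logarithmic case, write out $u^{\bF,0}(v)$ and $u^{\bG,0}(v-\pi)$ using \eqref{eq:opt_ut_log}: since $Z^{\bF}=Z$ and $Z^{\bG}=Z/q^L$, the integral term for $\bG$ is $\expec[\int_0^T\log(q^L_u/Z_u)\ud\kappa_u]$, which decomposes as $\expec[\int_0^T\log(1/Z_u)\ud\kappa_u]+\expec[\int_0^T\log(q^L_u)\ud\kappa_u]$. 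Crucially, $\kappa$ is $\bF$-adapted, hence $\kappa_T$ is $\F_0$-measurable; since $\F_0$ is trivial, $\expec[\kappa_T|\F_0]=\expec[\kappa_T]$ is constant, so for $\mathbf{H}=\bF$ the second term of \eqref{eq:opt_ut_log} reads $\expec[\log(\expec[\kappa_T])\kappa_T]=\chi^{\bF}$, while for $\mathbf{H}=\bG$ it equals $\expec[\log(\expec[\kappa_T|\G_0])\kappa_T]=\chi^{\bG}$ (here $\expec[\kappa_T|\G_0]=\expec[\kappa_T|\sigma(L)]$ need not be constant). The first term of \eqref{eq:opt_ut_log} for $\bF$ is $\log(v)\expec[\kappa_T]$ and for $\bG$ is $\log(v-\pi)\expec[\kappa_T]$ (no conditional expectation since $k=0$).

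So equation \eqref{eq:price_arb} becomes
\[
\log(v)\,\expec[\kappa_T] - \chi^{\bF} + \expec\Bigl[\int_0^T\log(1/Z_u)\ud\kappa_u\Bigr]
=\log(v-\pi)\,\expec[\kappa_T] - \chi^{\bG} + \expec\Bigl[\int_0^T\log(1/Z_u)\ud\kappa_u\Bigr] + \expec\Bigl[\int_0^T\log(q^L_u)\ud\kappa_u\Bigr].
\]
The $\expec[\int_0^T\log(1/Z_u)\ud\kappa_u]$ terms cancel, and solving for $\log(v-\pi)-\log(v)$ yields
\[
\log\Bigl(\frac{v-\pi}{v}\Bigr)=\frac{1}{\expec[\kappa_T]}\Bigl(\chi^{\bG}-\chi^{\bF}-\expec\Bigl[\int_0^T\log(q^L_u)\ud\kappa_u\Bigr]\Bigr),
\]
which on exponentiating and rearranging gives exactly \eqref{eq:ind_price_log}. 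One must check that the hypothesis $\int_0^T\log(q^L_u/Z_u)\ud\kappa_u\in L^1(\prob)$ guarantees the integrability condition $\int_0^T\log(1/Z^{\bG}_u)\ud\kappa_u\in L^1(\prob)$ required to apply Corollary \ref{cor:log_pwr_utility}(i) in $\bG$, and that the corresponding condition in $\bF$, $\int_0^T\log(1/Z_u)\ud\kappa_u\in L^1(\prob)$, follows as well (since $\log q^L_u$ is handled by the difference); one should also note that $u^{\bF,0}(v)$ and $u^{\bG,0}(v)$ are then finite and the value $\pi$ defined by \eqref{eq:ind_price_log} indeed lies in $[0,v)$ by the nonnegativity part of Theorem \ref{thm:IndPrice} (or directly, since $\expec[\log q^L_u]\ge 0$ as $q^L$ is a positive martingale with $q^L_0=1$, combined with Jensen for $\chi$).

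For the power case the argument is parallel but multiplicative. From \eqref{eq:opt_ut_pwr} with $k=0$, in $\bF$ one gets $u^{\bF,0}(v)=\frac1p v^p\,\expec[(\expec[\int_0^TZ_u^{p/(p-1)}\ud\kappa_u|\F_0])^{1-p}]=\frac1p v^p\,\expec[\int_0^TZ_u^{p/(p-1)}\ud\kappa_u]^{1-p}$ using triviality of $\F_0$; in $\bG$ one gets $u^{\bG,0}(v-\pi)=\frac1p(v-\pi)^p\,\expec[\,\expec[\int_0^T(Z_u/q^L_u)^{p/(p-1)}\ud\kappa_u|\G_0]^{1-p}\,]$. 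Setting these equal, the $\frac1p$ cancels, and solving for $(v-\pi)/v$ gives
\[
\Bigl(\frac{v-\pi}{v}\Bigr)^p=\frac{\expec[\int_0^TZ_u^{p/(p-1)}\ud\kappa_u]^{1-p}}{\expec[\,\expec[\int_0^T(Z_u/q^L_u)^{p/(p-1)}\ud\kappa_u\mid\G_0]^{1-p}\,]},
\]
whence taking $p$-th roots and rearranging produces \eqref{eq:ind_price_pwr}. Again one verifies that the stated hypothesis $\expec[\int_0^T(Z_u/q^L_u)^{p/(p-1)}\ud\kappa_u|\sigma(L)]^{1-p}\in L^1(\prob)$ is precisely the finiteness condition from Corollary \ref{cor:log_pwr_utility}(ii) in $\bG$, and implies (via $q^L_u>0$ and a conditional-Jensen/martingale argument) the corresponding condition in $\bF$. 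The only genuine subtlety — the ``hard part'' such as it is — is the careful bookkeeping of which sigma-field the various conditional expectations are taken with respect to, and exploiting the triviality of $\F_0$ versus the nontriviality of $\G_0=\sigma(L)$ so that the $\bF$-formulas collapse to unconditional expectations while the $\bG$-formulas retain the outer expectation over $\sigma(L)$; everything else is algebraic rearrangement of the closed-form expressions from Corollary \ref{cor:log_pwr_utility}.
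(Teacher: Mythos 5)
Your proposal is correct and follows essentially the same route as the paper: apply Corollary \ref{cor:log_pwr_utility} with $k=0$ in both $\bF$ and $\bG$, exploit the triviality of $\F_0$ to collapse the $\bF$-conditional expectations, and solve the indifference equation \eqref{eq:price_arb} by elementary algebra. A small imprecision in your parenthetical aside: $q^L$ need not be a $\bG$-martingale, so the bound $\expec[\log q^L_u]\ge 0$ should instead be obtained from the $\bG$-supermartingale property of $1/q^L$ and Jensen's inequality, $\expec[\log q^L_u]\ge -\log\expec[1/q^L_u]\ge 0$.
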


In general, for $k>0$ the utility indifference value of the inside information cannot be computed in an explicit form for logarithmic and power preferences, as can be seen from \eqref{eq:opt_ut_log}-\eqref{eq:opt_ut_pwr}.\footnote{In view of Corollary \ref{cor:log_pwr_utility}, a fully explicit representation of the utility indifference value can be obtained when the random variable $k\,\expec[Z_T/q^L_T|\G_0]$ is a.s. constant or, equivalently, when $k\,\qprob(q^x_T>0)$ does not depend on $x$.} However, by part (ii) of Theorem \ref{thm:IndPrice}, formulae \eqref{eq:ind_price_log} and \eqref{eq:ind_price_pwr} represent lower bounds for the indifference value of the inside information for $k>0$ for logarithmic and power utility functions, respectively.

Proposition \ref{prop:pi} yields several interesting results on the value of informational arbitrage in the case of logarithmic and power utility functions. In particular:
\begin{itemize}
\item 
If $\int_E\expec[\int_0^T\!\ind_{\{q^x_t=0\}}\!\ud\kappa_t]\lambda(\!\ud x)>0$, then $\pi^{{\rm log}}(v)$ and $\pi^{{\rm pwr}}(v)$ are always strictly increasing with respect to $v$. This property is a direct consequence of Theorem \ref{thm:IndPrice} and formulae \eqref{eq:ind_price_log}-\eqref{eq:ind_price_pwr}. In other words, the value of informational arbitrage is strictly increasing with respect to initial wealth, in line with the analysis of \cite{Liu_et_al} in the case of a CRRA utility.
\item
In the case of logarithmic utility, the indifference value $\pi^{\log}(v)$ is lower when preferences are defined over intermediate consumption rather than terminal wealth only, confirming the empirical findings of \cite{Liu_et_al}.
This follows from the observation that
\[
\expec\left[\int_0^T\log(q^L_u)\!\ud\kappa_u\right]
\leq \expec\left[\int_0^T\log(q^L_T)\!\ud\kappa_u\right]
= \expec\left[\log(q^L_T)\kappa_T\right],
\]
where the inequality follows by taking the $\bG$-optional projection of $\log(q^L_T)$ and noting that, by Jensen's inequality and the $\bG$-supermartingale property of $1/q^L$,
\[
\expec\bigl[\log(q^L_T)|\G_t\bigr]
\geq \log\left(\expec[1/q^L_T|\G_t]^{-1}\right)
\geq \log(q^L_t)\;\text{ a.s. }
\qquad\text{ for all }t\in[0,T].
\]


\item
Jensen's inequality applied to the convex function $x\mapsto x\log x$ implies that the term $\chi^{\bG}-\chi^{\bF}$ appearing in \eqref{eq:ind_price_log} is non-negative, with $\chi^{\bG}=\chi^{\bF}$ if and only if $\expec[\kappa_T|\sigma(L)]=\expec[\kappa_T]$ a.s.
In turn, this means that if the inside information $L$ has predictive power on $\kappa_T$, then the indifference value $\pi^{{\rm log}}(v)$ is lower than in the case where $L$ has no predictive power on $\kappa_T$. 
While this result seems counterintuitive at first sight, it can be explained by the features of logarithmic preferences. 
We can think of $\kappa_T(\omega)$ as the total weight assigned to utility from consumption over $[0,T]$ in state $\omega$. By the specific structure of the optimal consumption process for logarithmic utility (see the proof of Corollary \ref{cor:log_pwr_utility}), if $\prob(\expec[\kappa_T|\sigma(L)]\neq\expec[\kappa_T])>0$, then an informed agent consumes more in the states $\omega$ which are more likely to be weighted less, and vice versa. 
However, as a consequence of risk aversion, an agent would a priori prefer to smooth consumption over different states. This intuitively explains why the presence of predictive power of $L$ on $\kappa_T$ leads to a lower indifference value of information.
Note that $\chi^{\bG}=\chi^{\bF}$ if $\kappa_T$ is deterministic, as in the case of utility from terminal wealth.


\end{itemize}


\begin{rem}
In the case of utility from terminal wealth (corresponding to $\ud\kappa_u=\delta_T(\!\ud u)$), it can be easily verified that formulae \eqref{eq:ind_price_log}-\eqref{eq:ind_price_pwr}  reduce to the expressions stated in \cite[Theorem 5.3]{ABS} whenever one of the equivalent conditions of the second part of Theorem \ref{thm:results_NUPBR} holds, i.e., whenever the inside information does not lead to arbitrage opportunities in $(\Omega,\bG,\prob;S)$.
For $\ud\kappa_u=\delta_T(\!\ud u)$, formula \eqref{eq:ind_price_log} reduces to $\pi^{\rm log}(v)=v(1-\exp(-\expec[\log(q^L_T)]))$. In line with Theorem \ref{thm:IndPrice} (see also Remark \ref{rem:utility_gain}), this confirms that the indifference value is always strictly positive if the inside information $L$ allows for arbitrage in $(\Omega,\bG,\prob;S)$. 
Moreover, the logarithmic indifference value is fully determined by the entropy of $L$ whenever $L$ is a discrete $\F_T$-measurable random variable.
\end{rem}

\subsection*{Universal results on the indifference value of inside information}

In general, the indifference value of the inside information depends on the stochastic utility field considered. However, in some special cases (for instance, in the example given in the introduction), the indifference value is a {\em universal} value, which does not depend on the preference structure. This situation is clarified by the next theorem.
We denote by $\mathcal{U}$ the class of all strictly increasing and concave deterministic utility functions $U:\R_+\rightarrow\R\cup\{-\infty\}$. In the statement of the following theorem, we denote by $u^{\mathbf{H},k}(v)$ the value function associated to problem \eqref{eq:uH} in the case of expected utility from consumption only at date $T$ (i.e., terminal wealth) with utility function $U$.

\begin{thm}	\label{thm:universal_uip}
Suppose that $\qprob=\prob$ in Assumption \ref{ass:Q} and that $\ud\kappa_u=\delta_T(\!\ud u)$.
Then, the following three conditions are equivalent:
\begin{enumerate}
\item[(i)] it holds that $\prob(q^L_T=q)=1$, for some constant $q\geq1$;
\item[(ii)]
for every $k\in\R_+$ and $v>0$, there exists a universal value $\pi^k(v)\in[0,v+k)$ such that
\[
u^{\bG,k}\bigl(v-\pi^k(v)\bigr) = u^{\bF,k}(v),
\qquad \text{ for all }U\in\mathcal{U};
\]
\item[(iii)] for every $v>0$, there exists a universal value $\pi^0(v)\in[0,v)$ such that
\[
u^{\bG,0}\bigl(v-\pi^0(v)\bigr) = u^{\bF,0}(v),
\qquad \text{ for all }U\in\mathcal{U}.
\]
\end{enumerate}
In those cases, for every $U\in\mathcal{U}$, $k\in\R_+$ and $v>0$, the indifference value $\pi^k(v)$ is always given by
\be	\label{eq:universal_uip}
\pi^k(v) = (v+k)\left(1-\frac{1}{q}\right)
\ee
and the optimal wealth process $V^{\bG}=(V^{\bG}_t)_{t\in[0,T]}$ in problem \eqref{eq:uH} for $\mathbf{H}=\bG$ is always given by
\be	\label{eq:universal_wealth}
V^{\bG}_t = (v+k)\frac{q^L_t}{q^L_T}-k,
\qquad\text{ for all }t\in[0,T].
\ee
\end{thm}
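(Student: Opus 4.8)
\textbf{Proof strategy for Theorem \ref{thm:universal_uip}.}

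The plan is to prove the cycle (i) $\Rightarrow$ (ii) $\Rightarrow$ (iii) $\Rightarrow$ (i), and then derive formulae \eqref{eq:universal_uip}--\eqref{eq:universal_wealth} as a byproduct of the argument for (i) $\Rightarrow$ (ii).

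\emph{Step 1: (i) $\Rightarrow$ (ii).} Assume $q^L_T \equiv q$ a.s. for some constant $q \geq 1$. Since $\qprob = \prob$ we have $Z \equiv 1$, hence $Z^{\bF}_T \equiv 1$ and $Z^{\bG}_T = 1/q^L_T \equiv 1/q$. Because $\kappa$ is concentrated at $T$, Lemma \ref{lem:SuperMart=Mart}(ii) says that a terminal wealth $c_T = V^{v,\vartheta,c}_T$ is attainable with equality in $\bG$ from initial capital $w$ and credit line $k$ iff $\expec[Z^{\bG}_T c_T \mid \G_0] = w + k(1 - \expec[Z^{\bG}_T \mid \G_0])$, i.e. iff $\frac1q \expec[c_T \mid \G_0] = w + k(1 - \tfrac1q)$, i.e. $\expec[c_T \mid \G_0] = q w + k(q-1) = (qw+kq-k)$. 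Writing $v' := v - \pi$ for the informed agent, the set of attainable terminal wealths in $\bG$ coincides with $\{c_T \geq 0 : \expec[c_T \mid \G_0] = q v' + k(q-1)\}$, whereas in $\bF$ (with $Z \equiv 1$ and $\F_0$ trivial) the attainable set from $v$ is $\{c_T \geq 0 : \expec[c_T] = v\}$. The key observation is that if we choose $\pi = \pi^k(v)$ so that $q v' + k(q-1) = v + k$ — equivalently $v' = (v+k)/q - k$, equivalently $\pi^k(v) = (v+k)(1 - 1/q)$ — then the informed constraint becomes $\expec[c_T \mid \G_0] = v+k$. Now I would argue that the two value functions $u^{\bG,k}(v')$ and $u^{\bF,k}(v)$ coincide for \emph{every} $U \in \mathcal{U}$: given an $\bF$-attainable $c_T^{\bF}$ with $\expec[c_T^{\bF}]=v$, the shifted claim $c_T^{\bF} + k$ is $\bG$-attainable from $v'$ (its $\G_0$-conditional mean is $v + k$, using that the $\bG$-attainable claims are exactly those with prescribed $\G_0$-conditional mean and that $\expec[c_T^{\bF}+k\mid\G_0]$ need not be constant — here I must be careful, so see the obstacle below), and conversely; moreover the credit-line budget in $\bF$ from $v$ is the constraint $\expec[c_T]\le v + k(1-1) = v$ for $c_T\ge 0$ with $c_T\ge -k$ pathwise, but since $\kappa$ sits at $T$ the intermediate constraint is vacuous, so the $\bF$-problem from $v$ with credit line $k$ has the same value as without credit line. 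Matching the \emph{constant} $v+k$ on both sides via the change of variable $c_T \mapsto c_T + k$ gives the equality of value functions, with $\pi^k(v) = (v+k)(1-1/q) \in [0, v+k)$ as required, and the optimal $\bG$-wealth is $V^{\bG}_T = c^{\bF}_T + k$ with the displayed path \eqref{eq:universal_wealth} obtained from the $\bG$-martingale representation of Proposition \ref{prop:MRT} applied to the optimal deflated wealth-plus-consumption process, which here is simply $Z^{\bG}_t V^{v'+k,\vartheta,c}_t = (V^{\bG}_t + k)/q^L_t$, a $\prob$-martingale, so $V^{\bG}_t + k = (v+k) q^L_t / q^L_T$.

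\emph{Step 2: (ii) $\Rightarrow$ (iii).} This is immediate by taking $k = 0$, observing $\pi^0(v) \in [0, v)$.

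\emph{Step 3: (iii) $\Rightarrow$ (i).} This is where the real work lies. Suppose a universal $\pi^0(v) \in [0,v)$ exists. Apply the assumption to the power utility family $U_p(x) = x^p/p$, $p \in (0,1)$: by Proposition \ref{prop:pi}(ii) with $k=0$, $Z \equiv 1$ and $\ud\kappa = \delta_T$, the indifference value is $\pi^{\rm pwr}_p(v) = v(1 - \expec[(q^L_T)^{p/(1-p)}]^{(1-p)/p} \big/ 1)$ — more precisely $\pi^{\rm pwr}_p(v) = v\bigl(1 - \expec[\,\expec[(q^L_T)^{p/(1-p)}\mid\G_0]^{1-p}\,]^{-1/p}\bigr)$ after simplification using $\G_0 = \sigma(L)$ and measurability of $q^L_T$ — and requiring this to be the \emph{same} value for all $p$ forces $\expec[(q^L_T)^{r}]^{1/r}$ to be constant in $r$ over a range, which by the strict-monotonicity of $L^r$-norms (Lyapunov/Jensen, strict unless the random variable is a.s. constant) forces $q^L_T$ to be a.s. constant, say $q^L_T \equiv q$; and $q \geq 1$ follows from $\expec[1/q^L_T] \leq 1$ (Theorem \ref{thm:results_NUPBR}, as $1/q^L$ is a nonnegative $\bG$-supermartingale starting at $1$), giving $1/q \leq 1$. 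Alternatively, and perhaps more robustly, I would use the logarithmic utility value \eqref{eq:ind_price_log}, which for $\ud\kappa=\delta_T$, $Z\equiv 1$ gives $\pi^{\log}(v) = v(1 - \exp(-\expec[\log q^L_T]))$, and power utility $\pi^{\rm pwr}_p$; universality forces $\exp(-\expec[\log q^L_T]) = \expec[(q^L_T)^{p/(1-p)}]^{-(1-p)/p}$ for all $p\in(0,1)$, i.e. $\expec[\log q^L_T] = \tfrac{1-p}{p}\log\expec[(q^L_T)^{p/(1-p)}]$; letting $r = p/(1-p) \downarrow 0$ the right side tends to $\expec[\log q^L_T]$ automatically (no information), but taking two \emph{different} values of $r$ and equating yields $(\log \expec[(q^L_T)^{r_1}])/r_1 = (\log\expec[(q^L_T)^{r_2}])/r_2$, which again by strict convexity of $r\mapsto \log\expec[(q^L_T)^r]$ (unless $q^L_T$ is degenerate) is impossible unless $q^L_T$ is a.s. constant. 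Then $q\ge 1$ as above.

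\emph{Main obstacle.} The delicate point is the matching argument in Step 1: I am asserting that the family of $\bG$-attainable terminal wealths (from capital $v'$, credit $k$) is in bijection with the family of $\bF$-attainable terminal wealths (from $v$, credit $k$) via $c_T \mapsto c_T + k$, \emph{and} that this bijection preserves expected utility for every $U \in \mathcal{U}$ and is optimality-preserving. What makes this work is precisely that $q^L_T$ is \emph{constant}, so the $\G_0$-conditional budget constraint $\expec[c_T \mid \G_0] = v+k$ is \emph{equivalent} to the unconditional one $\expec[c_T] = v+k$ together with no further $\sigma(L)$-measurable degree of freedom; if $q^L_T$ were genuinely random the informed agent could reallocate wealth across the realizations of $L$ in an $L$-dependent way, breaking the bijection. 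I would need to spell out carefully, using Lemma \ref{lem:SuperMart=Mart} and the completeness of $(\Omega,\bG,\prob;S)$ (which follows from $\Z = \{Z/q^L\}$ being a singleton via \cite[Corollary 2.1]{SY98}), that when $q^L_T$ is constant the informed optimization over $\G_0$-conditional budgets collapses to a single unconditional budget problem identical (after the shift by $k$) to the ordinary agent's. The $q \ge 1$ claim throughout is a standing consequence of Theorem \ref{thm:results_NUPBR} (the supermartingale property of $1/q^L$), so it needs only a one-line justification.
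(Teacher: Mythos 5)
The high-level plan is the same as the paper's (cycle (i)$\Rightarrow$(ii)$\Rightarrow$(iii)$\Rightarrow$(i), with logarithmic and power utilities providing the test family for (iii)$\Rightarrow$(i), and $q\geq 1$ from the $\bG$-supermartingale property of $1/q^L$), and your final formulae \eqref{eq:universal_uip}--\eqref{eq:universal_wealth} are correct. However, both of the non-trivial implications as you have written them contain real gaps.

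\textbf{(i)$\Rightarrow$(ii).} First, there is an algebraic error: with $v'=(v+k)/q-k$ one gets $qv'+k(q-1)=v$, not $v+k$, so the binding budget constraint on the informed side reads $\expec[c_T\mid\G_0]=v$. More importantly, the bijection $c_T\mapsto c_T+k$ does not send $\bF$-attainable claims into the $\bG$-budget set: for an $\F_T$-measurable $c_T^{\bF}$ with $\expec[c_T^{\bF}]=v$, the conditional mean $\expec[c_T^{\bF}\mid\G_0]=\expec[c_T^{\bF}\mid\sigma(L)]$ is a nondegenerate function of $L$ in general, so neither $c_T^{\bF}$ nor $c_T^{\bF}+k$ satisfies the pointwise constraint $\expec[\cdot\mid\G_0]\le v$. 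You flag this as the main obstacle, but the fix you outline (``the informed optimization collapses to a single unconditional budget problem identical, after the shift by $k$, to the ordinary agent's'') is not quite true: the feasible sets are genuinely different. What is true is only that the \emph{values} coincide. The paper's route avoids the bijection entirely: one checks that the \emph{constant} consumption $c^{\bG}_T=v$ lies in $\Cons_m^{\bG,k}((v+k)/q-k)$ (giving $u^{\bG,k}(v')\ge U(v)$), and for the upper bound one applies Jensen to the value, not to the claim: for any $c\in\Cons_+^{\bG,k}(v')$, part (i) of Lemma \ref{lem:SuperMart=Mart} gives $\expec[c_T\mid\G_0]\le v$ a.s., hence $\expec[c_T]\le v$, hence $\expec[U(c_T)]\le U(\expec[c_T])\le U(v)$. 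No change of variable by $k$ is needed.

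\textbf{(iii)$\Rightarrow$(i).} Both of your proposed routes silently replace $\expec\bigl[\expec[(q^L_T)^{p/(1-p)}\mid\G_0]^{1-p}\bigr]^{1/p}$ by $\expec[(q^L_T)^{p/(1-p)}]^{(1-p)/p}$. This is not a simplification ``using $\G_0=\sigma(L)$ and measurability of $q^L_T$'': $q^L_T$ is $\G_T$-measurable but not $\G_0$-measurable, and in fact $\expec[(q^L_T)^{p/(1-p)}\mid\G_0]=\expec[(q^x_T)^{1/(1-p)}]\big|_{x=L}$ by formula \eqref{expec_init}, which is nonconstant in general. Once you drop the conditional expectation, the identity you equate to $\exp(-\expec[\log q^L_T])$ is simply not the value from Proposition \ref{prop:pi}(ii), so the ``strict monotonicity of $L^r$-norms'' or ``strict convexity of $r\mapsto\log\expec[(q^L_T)^r]$'' conclusions do not apply. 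The correct argument keeps the conditioning and sandwiches it: Jensen for the concave $\log$ gives $\exp(\expec[\log q^L_T])\le\expec[q^L_T]$, while conditional Jensen for the convex map $x\mapsto x^{1/(1-p)}$ gives $\expec\bigl[\expec[(q^L_T)^{p/(1-p)}\mid\G_0]^{1-p}\bigr]^{1/p}\ge\expec[(q^L_T)^p]^{1/p}$. Combining with the universality hypothesis, $\expec[(q^L_T)^p]^{1/p}\le\exp(\expec[\log q^L_T])\le\expec[q^L_T]$ for all $p\in(0,1)$, and letting $p\to1$ forces $\exp(\expec[\log q^L_T])=\expec[q^L_T]$, whence $q^L_T$ is a.s. constant by strict concavity of $\log$. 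Your derivation of $q\ge1$ from the supermartingale property of $1/q^L$, and the computation of $V^{\bG}$ from the $\prob$-martingale property of $(V^{\bG}+k)/q^L$, are both fine once the correct budget level $v$ (not $v+k$) is used.
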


In the setting of the above theorem, the optimal strategy for an informed agent is given by a multiple of the process $\phi\in L(S,\bG)$ appearing in the stochastic integral representation $q^L=1+\phi\cdot S$ (see Remark \ref{rem:num_port_G}). This shows an interesting property: under the conditions of Theorem \ref{thm:universal_uip}, the optimal portfolio for an informed agent will always be the {\em num\'eraire portfolio} in the financial market $(\Omega,\bG,\prob;S)$, regardless of the preference structure.
Equivalently, the constant payoff $v=v-\pi^k(v)+(v+k-\pi^k(v))(\phi\cdot S)_T$ dominates according to the second order stochastic dominance criterion (see, e.g., \cite[Chapter 5]{Ingersoll}) all possible outcomes of admissible portfolios.

\begin{rem}	\label{rem:universal_discrete}
The random variable $q^L_T$ is always deterministic whenever $L$ is an $\F_T$-measurable discrete random variable with uniform distribution on a finite set $E$, so that $\prob(L=x)=1/|E|$ 
for all $x\in E$. Indeed, in this case it holds that $q^x_T=\ind_{\{L=x\}}|E|$, for all $x\in E$, so that $q^L_T=|E|$. This is also the case of Example \ref{ex:intro}, as we shall explain in detail in Section \ref{sec:example_Brownian}.
\end{rem}

\begin{rem}
If there are no arbitrage opportunities in $(\Omega,\bG,\prob;S)$, then the only case in which condition (i) of Theorem \ref{thm:universal_uip} holds is when the random variable $L$ is independent of $\F_T$.
Indeed, if NFLVR holds in $(\Omega,\bG,\prob;S)$ and  the random variable $q^L_T$ is a.s. constant, then $q^L_T=1$ a.s., as a consequence of Theorem \ref{thm:results_NUPBR}. Therefore, it holds that $q^x_T=1$ $(\prob\otimes\lambda)$-a.e., which by formula \eqref{expec_init} implies that $\expec[h(L)\ind_A]=\expec[h(L)]\prob(A)$, for every $\cE$-measurable bounded function $h:E\rightarrow\R$ and $A\in\F_T$, so that $L$ is independent of $\F_T$.
Conversely, if $L$ is independent of $\F_T$, then clearly $q^x_T=1$ for all $x\in E$.
In this case, formula \eqref{eq:universal_uip} implies that
it will never be attractive to buy the informational content of the random variable $L$, simply because the latter does not provide any useful information on the financial market.
\end{rem}

The assumptions of Theorem \ref{thm:universal_uip} cannot be easily relaxed. Indeed, if $\ud\kappa_u=\delta_T(\!\ud u)$ but $\qprob\neq\prob$, then condition (i) does not suffice to ensure the existence of a universal indifference value, as can be shown by a simple modification of the example given in Section \ref{sec:example_Brownian}.
Similarly, even if $\qprob=\prob$, in the presence of intermediate consumption the utility indifference value can depend on the preference structure even if $q^L_T$ is deterministic (apart from the trivial case where $L$ is independent of $\F_T$).

Under the same assumptions of Theorem \ref{thm:universal_uip}, we can establish some universal bounds for the indifference value of informational arbitrage, as shown in the following proposition.

\begin{prop}	\label{prop:uip_bounds}
Suppose that $\qprob=\prob$ in Assumption \ref{ass:Q} and that $\ud\kappa_u=\delta_T(\!\ud u)$.
Assume furthermore that there exist two strictly positive constants $q_{\min}$ and $q_{\max}$ with $q_{\min}\leq q_{\max}$ such that $\prob(q^L_T\in[q_{\min},q_{\max}])=1$.
Then, for every utility function $U\in\mathcal{U}$, $k\in\R_+$ and $v>0$, it holds that
\be	\label{eq:uip_bounds}
(v+k)\left(1-\frac{1}{q_{\min}}\right)^+ \leq \pi^{U,k}(v) \leq (v+k)\left(1-\frac{1}{q_{\max}}\right).
\ee
\end{prop}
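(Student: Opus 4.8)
The plan is to reduce the two bounds to Theorem \ref{thm:universal_uip} by a monotonicity/comparison argument. Fix $U\in\mathcal U$, $k\in\R_+$ and $v>0$. Under the hypothesis $\qprob=\prob$ and $\ud\kappa_u=\delta_T(\!\ud u)$ we have $Z^{\bF}\equiv 1$ and $Z^{\bG}_t=1/q^L_t$, so that $u^{\bF,k}(v)=U(v+k)$ does not depend on the arbitrage structure (there is no leverage benefit in $\bF$ since $\qprob=\prob$), while $u^{\bG,k}(w)$ is the value function of an informed agent financing terminal wealth with the single ELMD $Z^{\bG}$. The key observation is a \emph{comparison principle}: if $q^L_T$ is replaced (pathwise) by a smaller a.s.-constant value $q'\le q^L_T$, the informed agent's budget constraint (Lemma \ref{lem:SuperMart=Mart}(i), which here reads $\expec[(1/q^L_T)\,c\,|\,\G_0]\le w+k(1-\expec[1/q^L_T|\G_0])$) becomes \emph{more permissive}, hence $u^{\bG,k}$ increases; symmetrically a larger constant makes it more restrictive. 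Thus, writing $u^{\bG,k}_{q}$ for the value function in the auxiliary model with $q^L_T\equiv q$ constant, one gets
\[
u^{\bG,k}_{q_{\max}}(w)\ \le\ u^{\bG,k}(w)\ \le\ u^{\bG,k}_{q_{\min}}(w)
\qquad\text{for all }w>v^{\bG}_k.
\]

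Next I would combine this sandwich with the explicit constant-$q$ solution from Theorem \ref{thm:universal_uip}. In the auxiliary model with $q^L_T\equiv q$, that theorem gives $u^{\bG,k}_q(w)=u^{\bF,k}\big(\tfrac{q}{1}\,(w+k)-k\big)=U\big(q(w+k)-k\big)$ (using that $v\mapsto u^{\bF,k}(v)=U(v+k)$ and formula \eqref{eq:universal_wealth} for the optimal wealth, whose time-$T$ value is $(w+k)q-k$). Hence the indifference equation $u^{\bF,k}(v)=u^{\bG,k}(v-\pi^{U,k}(v))$, i.e.\ $U(v+k)=u^{\bG,k}(v-\pi^{U,k}(v))$, together with the sandwich and the strict monotonicity of $U$, forces
\[
U\big(q_{\max}(v-\pi^{U,k}(v)+k)-k\big)\ \le\ U(v+k)\ \le\ U\big(q_{\min}(v-\pi^{U,k}(v)+k)-k\big).
\]
Solving each inequality for $\pi^{U,k}(v)$ — again by monotonicity of $U$ we may strip $U$ off both sides — yields $q_{\max}(v+k-\pi^{U,k}(v))\ge v+k$ and $q_{\min}(v+k-\pi^{U,k}(v))\le v+k$, which rearrange exactly to the asserted bounds $(v+k)(1-1/q_{\min})^+\le \pi^{U,k}(v)\le (v+k)(1-1/q_{\max})$, the positive part on the lower bound coming from $\pi^{U,k}(v)\in\R_+$ by Definition \ref{def:uip}.

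The main obstacle is making the comparison principle rigorous \emph{uniformly in $U$}: one must check that the set of financeable terminal wealths in the true model is sandwiched between those of the two constant-$q$ auxiliary models, and that this nesting of feasible sets is preserved when passing to conditional budget constraints given $\G_0$ (since $q^L_T\in[q_{\min},q_{\max}]$ a.s.\ but need not be $\G_0$-measurable, one works with the conditional expectations $\expec[1/q^L_T|\G_0]$, which lie in $[1/q_{\max},1/q_{\min}]$ a.s.). A secondary subtlety is ensuring $v-\pi^{U,k}(v)>v^{\bG}_k$ so that the auxiliary value functions are finite and Theorem \ref{thm:universal_uip} applies to them; this follows from $\pi^{U,k}(v)\le v+k-v+k\cdot(\text{something})$, or more directly from the upper bound itself once it is established, so a short bootstrapping remark closes the argument. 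Everything else is monotonicity of $U$ and the explicit constant-$q$ formula already proved.
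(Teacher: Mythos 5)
Your overall strategy---sandwich $u^{\bG,k}$ between auxiliary ``constant-$q$'' value functions and then invoke the explicit formula of Theorem~\ref{thm:universal_uip}---is essentially the same idea the paper uses (the paper's lower bound is precisely the feasibility of $c_T=v$ in the $q_{\min}$-auxiliary problem, and its upper bound is precisely Jensen applied with the $q_{\max}$-auxiliary constraint). The problem is that your sandwich runs in the wrong direction. The budget constraint of Lemma~\ref{lem:SuperMart=Mart}(i) in this setting rewrites as $\expec[(c_T+k)/q^L_T\,|\,\G_0]\le w+k$. If $q^L_T$ is replaced by a \emph{smaller} constant, $1/q^L_T$ gets \emph{larger}, the left-hand side increases, and the constraint gets \emph{harder} to satisfy --- so the feasible set shrinks and $u^{\bG,k}$ decreases, the opposite of what you assert. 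Concretely, $q_{\min}\le q^L_T\le q_{\max}$ a.s.\ gives the conditional-feasibility inclusions $\Cons_+^{\bG,k}(w)$-with-$q_{\min}$ $\subseteq$ true $\Cons_+^{\bG,k}(w)$ $\subseteq$ $\Cons_+^{\bG,k}(w)$-with-$q_{\max}$, hence $u^{\bG,k}_{q_{\min}}(w)\le u^{\bG,k}(w)\le u^{\bG,k}_{q_{\max}}(w)$, not the reverse order you wrote.

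There are two further slips that cancel out to land you on the correct bounds by coincidence, which is why the error is easy to miss. First, $u^{\bF,k}(v)=U(v)$, not $U(v+k)$: with $\qprob=\prob$, Lemma~\ref{lem:SuperMart=Mart}(i) gives the $\bF$-budget constraint $\expec[c_T]\le v$ regardless of $k$ (you yourself note ``there is no leverage benefit in $\bF$,'' which contradicts $U(v+k)$). Second, your final ``solving for $\pi$'' step does not follow from the display preceding it: from $q_{\max}(v-\pi+k)-k\le v+k$ one gets $q_{\max}(v+k-\pi)\le v+2k$, not $q_{\max}(v+k-\pi)\ge v+k$ as you wrote. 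Once you flip the sandwich to $u^{\bG,k}_{q_{\min}}\le u^{\bG,k}\le u^{\bG,k}_{q_{\max}}$ and use $u^{\bF,k}(v)=U(v)$ together with $u^{\bG,k}_q(w)=U(q(w+k)-k)$, the inequalities $U(q_{\min}(v-\pi+k)-k)\le U(v)\le U(q_{\max}(v-\pi+k)-k)$ strip cleanly to exactly the two claimed bounds, and the argument closes as you intended.
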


The universal bounds derived in the above proposition will be illustrated in the context of the examples discussed in Sections \ref{sec:example_Brownian} and \ref{sec:example_Poisson}.

\section{Examples}\label{sec:examples}

In this section, we illustrate some of the main concepts and results  in the context of three examples. The first example (Section \ref{sec:example_Brownian}) consists of a generalization of Example \ref{ex:intro}. The second example (Section \ref{sec:example_Poisson}) considers a two-dimensional discontinuous financial market, where the inside information corresponds to the ratio of the terminal values of two assets. In these two examples, the random variable $L$ is discrete. In the third example (Section \ref{sec:example_cont}) we consider a continuous random variable $L$ generating informational arbitrage.

\subsection{One-dimensional geometric Brownian motion}	\label{sec:example_Brownian}

Let $W=(W_t)_{t\in[0,T]}$ be a one-dimensional Brownian motion on the filtered probability space $(\Omega,\mathcal{A},\bF,\prob)$, where $\bF=(\F_t)_{t\in[0,T]}$ is the $\prob$-augmentation of the natural filtration of $W$. We consider a financial market where a single risky asset is traded, with discounted price process $S=(S_t)_{t\in[0,T]}$ satisfying
\be	\label{eq:GBM}
\ud S_t= S_t\,\sigma_t\ud W_t, \qquad S_0 \ > \ 0,
\ee
where $\sigma=(\sigma_t)_{t\in[0,T]}$ is a strictly positive $\FF$-predictable process such that $\int_0^T\sigma_t^2\ud t<+\infty$ a.s.
According to the notation introduced in Section \ref{sec:ordinary}, the tuple $(\Omega,\bF,\prob;S)$ represents the ordinary financial market and Assumption \ref{ass:Q} is satisfied with $\qprob=\prob$.

Similarly as in \cite[Example 4.6]{PK96}, we suppose that the inside information is generated by the random variable $L:=\ind_{\{W_T\geq c\}}$, where $c$ is a constant such that $\prob(W_T \ge c) = r \in (0,1)$. In this setting, $E=\{0,1\}$ and the unconditional law of $L$ is given by $\lambda(\{0\})=1-r$ and $\lambda(\{1\})=r$. 
Since $L$ is discrete, Assumption \ref{ass:Jacod} is automatically satisfied. In particular, it holds that
\begin{align*}
q_t^0 
= \frac{\prob(L=0|\F_t)}{\prob(L=0)}
= \frac{1}{1-r}\,\Phi\bigg(\frac{c-W_t}{\sqrt{T-t}}\bigg), \qquad\qquad
q_t^1 
= \frac{\prob(L=1|\F_t)}{\prob(L=1)} 
= \frac{1}{r}\,\Phi\bigg(\frac{W_t-c}{\sqrt{T-t}}\bigg),
\end{align*}
for every $t\in[0,T)$, where $\Phi(x):=\int_{-\infty}^x \frac{1}{\sqrt{2\pi}}e^{-z^2/2}\ud z$.
For $t=T$, we have that
\[
q_T^0 
= \frac{1}{1-r}\,\ind_{\{W_T<c\}}, \qquad\qquad 
q_T^1 = \frac{1}{r}\,\ind_{\{W_T\geq c\}}.
\]
Since $q^0$ and $q^1$ have continuous paths, Assumption \ref{ass:NoJumpZero} is satisfied. 
Moreover, it holds that 
\[
q^L_T = \frac{1}{1-r}\,\ind_{\{W_T < c\}} + \frac{1}{r}\,\ind_{\{W_T \ge c\}}.
\]

In view of Theorem \ref{thm:results_NUPBR}, NUPBR holds in the insider financial market $(\Omega,\bG,\prob;S)$ and $1/q^L$ is the associated ELMD. However, since $\expec[1/q^L_T]<1$, the inside information leads to arbitrage and NFLVR does not hold.
The boundedness of $q^L_T$ ensures that all the assumptions of Proposition \ref{prop:pi} are satisfied and, therefore, we can compute explicitly the {\em indifference value of informational arbitrage}. For simplicity of presentation, let us consider the problem of maximizing expected utility of terminal wealth (i.e., $\ud\kappa_u=\delta_T(\!\ud u)$) for $k=0$. In this case, for every $v>0$, it holds that
\[
\pi^{\rm log}(v) = v\left(1-(1-r)^{1-r}r^{r}\right)
\qquad\text{and}\qquad
\pi^{\rm pwr}(v) = v\left(1-\bigl((1-r)^{1-p}+r^{1-p}\bigr)^{-1/p}\right).
\]
Observe that $\pi^{{\rm pwr}}(v)$ is increasing with respect to $p$, meaning that the indifference value of informational arbitrage is decreasing with respect to risk aversion. Furthermore, $\pi^{{\rm pwr}}(v)$ converges to $\pi^{{\rm log}}(v)$ as $p\rightarrow0$, for every $v>0$.

In the case of an exponential utility function with risk aversion $\alpha>0$, an application of Corollary \ref{cor:exp_utility} shows that
\[
u^{\bG,0}(v) = -\expec\bigl[e^{-\alpha vq^L_T}\bigr]
= -(1-r)e^{-\frac{\alpha v}{1-r}} - re^{-\frac{\alpha v}{r}},
\]
for every $v\in\R_+$. Therefore, the indifference value of informational arbitrage in the case of exponential utility is given by the unique solution $\pi=\pi^{\rm exp}(v)$ to the following  equation:
\[
e^{-\alpha v} = (1-r) e^{- \frac{\alpha}{1-r} (v - \pi)}+r e^{- \frac{\alpha}{r} (v - \pi) }.
\]

Note also that, in the context of the present example, for every strictly increasing and concave deterministic utility function $U:\R_+\rightarrow\R\cup\{-\infty\}$ and for every $k\in\R_+$, the indifference value of informational arbitrage $\pi^{U,k}(v)$ satisfies the following bounds, as a consequence of Proposition \ref{prop:uip_bounds}:
\[
\min\{r,1-r\} \leq \frac{\pi^{U,k}(v)}{v+k} \leq \max\{r,1-r\},
\qquad\text{ for all }v>0.
\]

\subsubsection*{Analysis of Example \ref{ex:intro}}

If $c=0$ (and, hence, $r=1/2$), the random variable $q^L_T$ reduces to  the constant $q^L_T=2$. In this case, in line with the result of Theorem \ref{thm:universal_uip} (see also Remark \ref{rem:universal_discrete}), the value of informational arbitrage for $k=0$ is equal to the universal value 
$
\pi^0(v)=v/2.
$
In view of formula \eqref{eq:universal_wealth}, the corresponding optimal wealth process $V^{\bG}=(V^{\bG}_t)_{t\in[0,T]}$ is given by
\[
V^{\bG}_t = v\frac{q^L_t}{q^L_T}
= v\left(\Phi\bigg(\frac{-W_t}{\sqrt{T-t}}\bigg)\ind_{\{W_T < 0\}} 
+  \Phi\bigg(\frac{W_t}{\sqrt{T-t}}\bigg)\ind_{\{W_T \ge 0\}}\right),
\]
for all $t\in[0,T]$.
An application of It\^o's formula yields $\Phi(\frac{W_t}{\sqrt{T-t}})=\frac{1}{2}+\frac{1}{\sqrt{2\pi}}\int_0^t\frac{1}{\sqrt{T-u}}\exp(-\frac{W^2_u}{2(T-u)})\!\ud W_u$, so that the optimal strategy $\vartheta^{\bG}=(\vartheta^{\bG}_t)_{t\in[0,T]}$ for the informed agent is explicitly given by
\be	\label{eq:universal_strategy}
\vartheta^{\bG}_t
= \left(\ind_{\{W_T\geq0\}}-\ind_{\{W_T<0\}}\right)\frac{v}{\sigma_t S_t}\frac{1}{\sqrt{2\pi(T-t)}}\exp\left(-\frac{W^2_t}{2(T-t)}\right),
\qquad\text{ for all }t\in[0,T),
\ee
regardless of the utility function being considered.
In particular, the strategy $\vartheta^{\bG}$ is an arbitrage strategy for an informed agent. Indeed, it holds that $(\vartheta^{\bG}\cdot S)_t=V^{\bG}_t-v/2>-v/2$, for all $t\in[0,T]$, and $(\vartheta^{\bG}\cdot S)_T=v/2>0$.
This shows that, by acquiring the inside information $L$ at the price $\pi^0(v)=v/2$ and following the strategy $\vartheta^{\bG}$, an informed agent can achieve exactly the terminal wealth $v$, which also corresponds to the optimal terminal wealth for an ordinary agent.

\begin{rem}[On the universal optimal strategy $\vartheta^{\bG}$]
In the setting of the present example, the optimal strategy $\vartheta^{\bG}$ calculated in \eqref{eq:universal_strategy} has several interesting features:
\begin{enumerate}
\item[(i)] 
the strategy is always long or short in the risky asset depending on the inside information revealed at the beginning of trading; 
\item[(ii)]
the strategy is a bet on the risky asset: the position on the risky asset increases if the asset price decreases and, vice versa, decreases if the asset price increases;
\item[(iii)] 
it holds that $\lim_{t\rightarrow T}\vartheta^{\bG}_t=0$, meaning that the position in the risky asset is completely liquidated at the end of the investment horizon; 
\item[(iv)] 
$\vartheta^{\bG}$ is a multiple of the trading strategy which realizes the optimal arbitrage in $\bG$ (see Remark \ref{rem:opt_arb} and compare with the proof of the implication (i)$\Rightarrow$(v) in Theorem \ref{thm:results_NUPBR}).
\end{enumerate}
\end{rem}

\subsection{Two-dimensional Poisson process}	\label{sec:example_Poisson}

The example of Section \ref{sec:example_Brownian} considers a single risky asset with continuous paths. We now present an example of inside information leading to arbitrage in a financial market with two risky assets with discontinuous paths. 

Let $N^1=(N^1_t)_{t\in[0,T]}$ and $N^2=(N^2_t)_{t\in[0,T]}$ be two independent Poisson processes with common intensity $1$ on a filtered probability space $(\Omega,\mathcal{A},\bF,\prob)$, where $\bF=(\F_t)_{t\in[0,T]}$ is the $\prob$-augmentation of the natural filtration of $(N^1,N^2)$. We consider two risky assets, with discounted price processes $S^1=(S^1_t)_{t\in[0,T]}$ and $S^2=(S^2_t)_{t\in[0,T]}$ satisfying
\begin{align*}
\ud S^1_t &= S^1_{t-}(\!\ud N^1_t -\ud t),
\qquad S^1_0>0;\\
\ud S^2_t &= S^2_{t-}(\!\ud N^2_t-\ud t),
\qquad S^2_0>0,
\end{align*}
with explicit solutions $S^i_t=S^i_0\,e^{-t}\,2^{N^i_t}$, for $i\in\{1,2\}$ and $t\in[0,T]$.
The tuple $(\Omega,\bF,\prob;(S^1,S^2))$ represents the ordinary financial market and, since $(S^1,S^2)$ has the martingale representation property on $\basisp$, Assumption \ref{ass:Q} is satisfied with $\qprob=\prob$.

Let us define the process $N=(N_t)_{t\in[0,T]}$ by $N_t:=N^1_t-N^2_t$, for all $t\in[0,T]$. We suppose that the inside information is generated by the observation of the random variable $L:=N_T$, corresponding to the knowledge of the ratio $S^1_T/S^2_T$ of the terminal prices of the two assets. 
The distribution of the random variable $L$ can been explicitly computed and is given by 
\[
\prob(L=x)
= e^{-2T}\mathcal{I}_{|x|}(2T)
= e^{-2T}\sum_{k\in\N}\frac{T^{2k+|x|}}{k!(k+|x|)!},
\qquad\text{ for all } x\in\mathbb{Z},
\] 
where $\mathcal{I}_{|x|}(2T)$ denotes the modified Bessel function of the first kind.
Since $L$ is discrete, Assumption \ref{ass:Jacod} is automatically satisfied and, similarly as in \cite{CRT}, it can be computed that
\[
q^x_t = \frac{\prob(L = x |\F_t)}{\prob(L = x)} 
= \frac{ \sum_{k\in\N} e^{-(T-t)} \frac{(T-t)^k}{k!} e^{-(T-t)} \frac{(T-t)^{k+x-N_t}}{(k+x-N_t)!} \ind_{\{k + x - N_t \ge 0\}}  }{ \sum_{k\in\N}e^{-2T}\frac{T^{2k+|x|}}{k!(k+|x|)!} }, 
\]
for all $x\in\mathbb{Z}$ and $t\in[0,T)$. For $t=T$, we have that
\[ 
q^x_T = \frac{\ind_{\{L = x\}}}{\prob(L = x)} 
= \frac{\ind_{\{L = x\}}}{ e^{-2T} \sum_{k\in\N} \frac{T^{2k+|x|}}{k!(k+|x|)!}},
\qquad\text{ for all }x\in\mathbb{Z}.
\]
Note that $q^x_t>0$, for all $t\in[0,T)$. Moreover, the filtration $\bF$ is quasi-left-continuous and, hence, it holds that $\bigvee_{t<T}\F_t=\F_{T-}=\F_T$. By the L\'evy upward theorem, as $t\rightarrow T$ it holds that 
\[
\expec[\ind_{\{L=x\}}|\F_t] \longrightarrow
\expec[\ind_{\{L=x\}}|\F_{T-}]
= \expec[\ind_{\{L=x\}}|\F_T]
= \ind_{\{L=x\}}\qquad\text{ a.s.}
\]
thus showing that $q^x$ does not jump to zero. Assumption \ref{ass:NoJumpZero} is therefore satisfied and the insider financial market $(\Omega,\bG,\prob;S)$ satisfies NUPBR (see Theorem \ref{thm:results_NUPBR}).
The inside information $L$ generates arbitrage opportunities for an informed agent, since $\expec[1/q^L_T]=\sum_{x\in\mathbb{Z}}\prob(L=x)^2<1$.
However, due to admissibility constraints, such arbitrage opportunities cannot be realized by naive long-and-short strategies in the two assets $S^1$ and $S^2$, as the latter are unbounded from above.

The indifference value of informational arbitrage can be explicitly computed in the case of logarithmic and power utility functions by relying on Proposition \ref{prop:pi} (for simplicity of presentation, we only consider the case $\ud\kappa_u=\delta_T(\!\ud u)$ and $k=0$):
\begin{gather*}
\pi^{\rm log}(v)
= v\left(1-\exp\biggl( - \sum_{x\in \mathbb{Z}} \prob(L=x)\log \prob(L=x)\biggr)  \right), 	\\
\pi^{\rm pwr}(v)
= v\left( 1- \expec\left[ \left( \sum_{x\in \mathbb{Z}} \ind_{\{L=x\}}\prob(L=x)^{p/(p-1)} \right) ^{1-p}\right] ^{-1/p} \right)
= v\left( 1-  \left( \sum_{x\in \mathbb{Z}} \prob(L=x)^{1-p} \right) ^{-1/p} \right),
\end{gather*}
for every $v>0$.
In particular, note that in the case of a logarithmic utility function the value of informational arbitrage is determined by the entropy of the random variable $L$ (see Remark \ref{rem:utility_gain}).

In the case of an exponential utility function with risk aversion $\alpha>0$, an application of Corollary \ref{cor:exp_utility} shows that $\pi^{\rm exp}(v)$ is given by the unique solution $\pi=\pi^{\rm exp}(v)$ to the following equation:
\[
e^{-\alpha v}  
= \expec\bigg[   \exp\Bigl( - \alpha q^L_T(v - \pi)  \Bigr)   \bigg]	
= \sum_{x \in \mathbb{Z}} \prob(L=x)e^{-\frac{\alpha(v-\pi)}{\prob(L=x)}}.
\]
Note also that, as a consequence of Proposition \ref{prop:uip_bounds}, for every strictly increasing and concave utility function $U$, the indifference value of informational arbitrage $\pi^{U,k}(v)$ is bounded from below by the quantity $(v+k)\prob(L\neq0)$, for every $v>0$. This follows from the fact that  $0=\argmax_{x\in\mathbb{Z}}\mathcal{I}_{|x|}(2T)$.

\subsection{Informational arbitrage induced by a continuous random variable}	\label{sec:example_cont}


The examples considered in Sections \ref{sec:example_Brownian}-\ref{sec:example_Poisson} involve discrete random variables. We now present an example of a filtration initially enlarged with respect to a continuous random variable $L$ satisfying the absolute continuity relation of Assumption \ref{ass:Jacod} and generating arbitrage opportunities for an informed agent.

Let $W=(W_t)_{t\in[0,T]}$ be a one-dimensional Brownian motion on $(\Omega,\A,\bF,\prob)$, where $\bF=(\F_t)_{t\in[0,T]}$ is the $\prob$-augmentation of the natural filtration of $W$. Let $U$ be a random variable with uniform distribution on $[0,1]$, independent of the Brownian motion $W$, and assume that $\A=\F_T\vee\sigma(U)$.
We consider a financial market with a single risky asset, with discounted price process $S=(S_t)_{t\in[0,T]}$ given as in \eqref{eq:GBM}.
Similarly as in Section \ref{sec:example_Brownian}, the tuple $(\Omega,\bF,\prob;S)$ represents the ordinary financial market and Assumption \ref{ass:Q} is satisfied with $\qprob=\prob$.

We define the random variable $L$ by
\[
L := \frac{W^*_T}{2(1+W^*_T)} + \frac{U}{1+W^*_T},
\]
where $W^*_T:=\sup_{t\in[0,T]}W_t$. 
The random variable $L$ takes values in $[0,1]$ and, conditionally on $\F_T$, is uniformly distributed on $[a(W^*_T),b(W^*_T)]$, where $a(y):=y/(2+2y)$ and $b(y):=(2+y)/(2+2y)$, for $y\in\R_+$.
Due to the reflection principle of Brownian motion, the unconditional law $\lambda$ of $L$ can be computed as
\[
\lambda([0,x]) = \prob(L\leq x)
= \expec\bigl[f(x,W^*_T)\bigr]
= \sqrt{\frac{2}{\pi T}}\int_0^{+\infty}f(x,z)\,e^{-\frac{z^2}{2T}}\ud z,
\]
for $x\in[0,1]$, where $f(x,z):=(z(x-1/2)+x)^+\wedge 1$, for all $(x,z)\in[0,1]\times\R_+$.
Similarly, for every $t<T$ and $x\in[0,1]$, it can be shown that (see, e.g., \cite[Exercise 3.1.6.7]{JYC09})
\begin{align*}
\nu_t([0,x]) = \prob(L\leq x|\F_t)
&= \expec\bigl[f(x,W^*_T)|\F_t\bigr]	\\
&= \sqrt{\frac{2}{\pi(T-t)}}\left(f(x,W^*_t)\int_0^{W^*_t-W_t}e^{-\frac{z^2}{2(T-t)}}\ud z + \int_{W^*_t}^{+\infty}f(x,z)\,e^{-\frac{(z-W_t)^2}{2(T-t)}}\ud z\right).
\end{align*}
For $z\in\R_+$, let us define the function $g(\cdot,z):[0,1]\rightarrow\R_+$ by $g(x,z):=(1+z)\ind_{[a(z),b(z)]}(x)$, for all $x\in[0,1]$. The $\F_t$-conditional density $q^x_t$ can then be expressed as
\[
q^x_t = \sqrt{\frac{T}{T-t}}\frac{g(x,W^*_t)\int_0^{W^*_t-W_t}e^{-\frac{z^2}{2(T-t)}}\ud z+ \int_{W^*_t}^{+\infty}g(x,z)e^{-\frac{(z-W_t)^2}{2(T-t)}}\ud z}{\int_0^{+\infty}g(x,z)\,e^{-\frac{z^2}{2T}}dz},
\qquad\text{ for all }x\in[0,1],
\]
for every $t\in[0,T)$ and, for $t=T$,
\[
q^x_T = \sqrt{\frac{\pi T}{2}}\frac{g(x,W^*_T)}{\int_0^{+\infty}g(x,z)e^{-\frac{z^2}{2T}}\ud z},
\qquad\text{ for all }x\in[0,1].
\]
The densities $q^x$ can be computed more explicitly by introducing the function $\gamma:[0,1]\setminus\{1/2\}\rightarrow\R_+$ given by $\gamma(x):=2x/(1-2x)$ for $x\in[0,1/2)$ and $\gamma(x):=(2-2x)/(2x-1)$ for $x\in(1/2,1]$.
Note that $g(x,z)=(1+z)\ind_{[0,\gamma(x)]}(z)$, for all $x\neq1/2$.
With this notation, for all $t\in[0,T)$, it holds that
\begin{align*}
q_t^x &= \ind_{\{W^*_t\leq\gamma(x)\}}\left(
\frac{(1+W^*_t)\sqrt{2\pi}\left(\Phi\left(\frac{W^*_t-W_t}{\sqrt{T-t}}\right)-\frac{1}{2}\right)}
{\sqrt{2\pi}\left(\Phi\left(\frac{\gamma(x)}{\sqrt{T}}\right)-\frac{1}{2}\right)
+ \sqrt{T}\Big(1 - e^{-\frac{\gamma(x)^2}{2T}}\Big)}\right.\\
&\left.\qquad\qquad
+\frac{(1+W_t)\sqrt{2\pi}\left(\Phi\left(\frac{\gamma(x)-W_t}{\sqrt{T-t}}\right)-\Phi\left(\frac{W^*_t-W_t}{\sqrt{T-t}}\right)\right)
+ \sqrt{T-t}\Big(e^{-\frac{(W^*_t-W_t)^2}{2(T-t)}} - e^{-\frac{(\gamma(x)-W_t)^2}{2(T-t)}}\Big)}
{\sqrt{2\pi}\left(\Phi\left(\frac{\gamma(x)}{\sqrt{T}}\right)-\frac{1}{2}\right)
+ \sqrt{T}\Big(1 - e^{-\frac{\gamma(x)^2}{2T}}\Big)}\right),
\end{align*}
for all $x\neq1/2$, while for $x=1/2$
\[
q_t^x = \frac{
\sqrt{2\pi}\left((W^*_t-W_t)\Phi\left(\frac{W^*_t-W_t}{\sqrt{T-t}}\right)+\frac{1}{2}+W_t-\frac{W^*_t}{2}\right)
+ \sqrt{T-t}\,e^{-\frac{(W^*_t-W_t)^2}{2(T-t)}}}
{\sqrt{\frac{\pi}{2}} + \sqrt{T}}.
\]
At the terminal date $t=T$, it holds that
\[
q^x_T = \ind_{\{W^*_T\leq\gamma(x)\}}
\frac{1+W^*_T}
{\left(2\Phi\left(\frac{\gamma(x)}{\sqrt{T}}\right)-1\right)
+ \sqrt{\frac{2T}{\pi}}\Big(1 - e^{-\frac{\gamma(x)^2}{2T}}\Big)},
\]
for all $x\neq1/2$, and, for $x=1/2$,
\[
q^x_T = \frac{1+W^*_T}{1 + \sqrt{\frac{2T}{\pi}}}.
\]
Therefore, we have that
\[
q^L_T = \frac{1+W^*_T}
{\left(2\Phi\left(\frac{\gamma(L)}{\sqrt{T}}\right)-1\right)
+ \sqrt{\frac{2T}{\pi}}\Big(1 - e^{-\frac{\gamma(L)^2}{2T}}\Big)}
\;\text{ a.s., }\quad\text{ with }\quad
\gamma(L) = \frac{1+W^*_T}{|1-2U|}-1.
\]


In this example, $\nu_t\ll\lambda$ holds a.s. for all $t\in[0,T]$, so that Assumption \ref{ass:Jacod} is satisfied.
However, $\nu_t$ and $\lambda$ fail to be equivalent, for every $t\in(0,T]$. This simply follows from the observation that $\nu_t$ is null outside of the interval $[a(W^*_t),b(W^*_t)]$, together with the fact that the process $(W^*_t)_{t\in[0,T]}$ is increasing and the functions $a(\cdot)$ and $b(\cdot)$ are increasing and decreasing, respectively.
Moreover, the continuity of the filtration $\bF$ implies that Assumption \ref{ass:NoJumpZero} is satisfied. In view of Theorem \ref{thm:results_NUPBR}, the insider financial market $(\Omega,\bG,\prob;S)$ satisfies NUPBR, but arbitrage opportunities do exist.
Note that the observation of the realization $L(\omega)$ corresponds to the knowledge that $W^*_T(\omega)\geq 2L(\omega)/(1-2L(\omega))$, if $L(\omega)<1/2$, or that $W^*_T(\omega)\leq 2(1-L(\omega))/(2L(\omega)-1)$, if $L(\omega)>1/2$. This information represents the informational advantage of an informed agent at time $t=0$.

The present example can be generalized to an absolutely continuous random variable $U$ with arbitrary cumulative distribution function $F_U$ and density $g_U$, independent of the Brownian motion $W$.
Let $H:\R\times\R_+\rightarrow\R$ be a function such that $H(u,z)=x$ if and only if $u=h(x,z)$, for all $(u,z)\in\R\times\R_+$, for some function $h:\R\times\R_+\rightarrow\R$ admitting the partial derivative $h_x:=\partial_x h$.
If we define the random variable $L:=H(U,W^*_T)$, it can be easily verified that the $\F_t$-conditional densities of $L$ can be obtained by the same computations as above, replacing $f(x,z)$ with $F_U(h(x,z))$ and $g(x,z)$ with $g_U(h(x,z))h_x(x,z)$.
 

\section{Proofs}		\label{sec:proofs}

\subsection{Proofs of the results of Section \ref{sec:setting}}
We start by presenting the proofs of the enlargement of filtration results stated in Sections \ref{sec:G} and \ref{sec:arbitrage_in_G}.

\begin{proof}[Proof of Lemma \ref{lem:prelim_G}]
The right-continuity of $\bG$ follows from \cite[Lemma 4.2]{Fon2015}, while the stability of the semimartingale property is established in \cite[Theorem 1.1]{Jac85}. The existence of the densities $\{q^x;x\in E\}$ can be proven similarly as in \cite[Lemma 1.8]{Jac85} and \cite[Lemma 2.2]{Amen} (see \cite[Lemma 2.3]{Fon2015} for details). The last part of the lemma follows from \cite[Corollary 1.11]{Jac85}.
\end{proof}

The following implication of Lemma~\ref{lem:prelim_G} will be constantly used: for every $t\in[0,T]$ and $(\cE\otimes\F_t)$-measurable function $E \times \Omega \ni (x, \omega)\mapsto f^x_t(\omega) \in \R_+$, it holds that
\be	\label{expec_init}
\expec\left[f_t^L\right] 
= \expec\left[\int_Ef_t^x\,q^x_t\,\lambda(\!\ud x)\right]
= \int_E\expec\left[f_t^x\,q^x_t\right]\lambda(\!\ud x).
\ee
As usual, formula \eqref{expec_init} can be extended to real-valued integrable $(\cE\otimes\F_t)$-measurable functions.

\begin{proof}[Proof of Proposition \ref{prop:MRT}]
Let us define the $\R^{d+1}$-valued semimartingale $X:=(1,S)$ and note that $ZX\in\Mloc(\prob,\bF)$. We first prove that $ZX$ has the martingale representation property on $\basisp$. To this effect, let $N=(N_t)_{t\in[0,T]}$ be a bounded martingale on $\basisp$ with $N_0=0$ such that $NZX\in\Mloc(\prob,\bF)$, meaning that $NZ\in\Mloc(\prob,\bF)$ and $NZS\in\Mloc(\prob,\bF)$. Since $\qprob\sim\prob$, it holds that $N\in\Mloc(\qprob,\bF)$ and $NS\in\Mloc(\qprob,\bF)$. In view of \cite[Theorem 11.3]{MR542115}, Assumption \ref{ass:Q} implies that $N$ is trivial $\qprob$-a.s. and, hence, $\prob$-a.s. Again by \cite[Theorem 11.3]{MR542115}, we conclude that $ZX$ has the martingale representation property on $\basisp$.
Let $M=(M_t)_{t\in[0,T]}\in\Mloc(\prob,\bG)$.
By \cite[Proposition 4.10]{Fon2015}, there exists a process $H\in L(ZX,\bG)$ such that
\be	\label{eq:proof_MRT}
M_t = \frac{1}{q^L_t}\bigl(M_0 + (H\cdot (ZX))_t\bigr)
= \frac{Z_t}{q^L_t}\frac{M_0 + (H\cdot (ZX))_t}{Z_t}
\qquad\text{ a.s. for all $t\in[0,T]$.}
\ee
Furthermore, due to the martingale representation property of $S$ on $\basisq$, there exists a process $\theta\in L(S,\bF)$ such that $1/Z=1+\theta\cdot S$. For each $n\in\N$, let us define $H^n:=H\ind_{\{\|H\|\leq n\}}$. Using integration by parts and the associativity of the stochastic integral, we have that
\[	\ba
\frac{M_0+H^n\cdot (ZX)}{Z}
&= M_0 + \bigl(M_0+(H^n\cdot (ZX))_-\bigr)\cdot \frac{1}{Z} + \frac{H^n}{Z_-}\cdot (ZX) + H^n\cdot \left[ZX,\frac{1}{Z}\right] \\
&= M_0 + \Bigl(\bigl(M_0+(H^n\cdot (ZX))_-\bigr)\theta\Bigr)\cdot S + H^n\cdot X-\bigl((H^n)^{\top}X_-Z_-\bigr)\cdot \frac{1}{Z}	\\
&= M_0 + K^n\cdot S,
\ea	\]
where the $\R^d$-valued process $K^n=(K^n_t)_{t\in[0,T]}$ is defined by 
\[
K^{n,i}_t:=(M_0+(H^n\cdot (ZX))_{t-}-(H^n)^{\top}_tX_{t-}Z_{t-})\theta^i_t+H^{n,i+1}_t,
\] 
for all $i=1,\ldots,d$ and $t\in[0,T]$. 
Arguing similarly as in \cite[Proposition 8]{RS97}, the fact that $H\in L(ZX,\bG)$ implies that $H^n\cdot (ZX)$ converges to $H\cdot (ZX)$ in the semimartingale topology as $n\rightarrow+\infty$. Hence, in view of \cite[Proposition III.6.26]{MR1943877}, $K^n\cdot S=(M_0+H^n\cdot (ZX))/Z-M_0$ also converges in the semimartingale topology to $K\cdot S$, for some $K\in L(S,\bG)$, thus proving that $(M_0+H\cdot(ZX))/Z=M_0+K\cdot S$. Together with \eqref{eq:proof_MRT}, this completes the proof. 
\end{proof}

We now prove our main result on the (no-)arbitrage properties of the insider financial market $(\Omega,\bG,\prob;S)$ (Theorem \ref{thm:results_NUPBR}).
As a preliminary, we recall the following result on the behavior of $\bF$-local martingales under an initial filtration enlargement. Under the standing Assumptions \ref{ass:Jacod}-\ref{ass:NoJumpZero}, this is a direct consequence of \cite[Proposition 9]{ACJ15} (see also \cite[Proposition 3.6]{AFK}).

\begin{lem}	\label{lem:mgG}
Let $M=(M_t)_{t\in[0,T]}$ be a local martingale on $\basisp$. Then $M/q^L\in\mMloc(\prob,\bG)$.
\end{lem}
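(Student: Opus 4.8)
Here is the argument I would give (the result is \cite[Proposition~9]{ACJ15}, see also \cite[Proposition~3.6]{AFK}). The plan is to check the $\bG$-martingale property of $M/q^L$ directly, after a localization that keeps the densities bounded away from zero, by transporting the $\bF$-martingale property of $M$ through the disintegration identity \eqref{expec_init}. To this end, since any $\bF$-stopping time is a $\bG$-stopping time, localizing $M$ along an $\bF$-localizing sequence I may first assume $M$ is a bounded $\bF$-martingale. For the denominator, Lemma~\ref{lem:prelim_G} gives $q^L_t>0$ a.s.\ for every $t\in[0,T]$; since a non-negative martingale is absorbed at $0$, this propagates to $q^L>0$ and $q^L_->0$ on all of $[0,T]$, whence $\inf_{[0,T]}q^L>0$ a.s.\ and the $\bG$-stopping times $\rho_n:=\inf\{t\in[0,T]\colon q^L_t\le 1/n\}$ satisfy $\rho_n=T$ eventually, pathwise. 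It therefore suffices to show that each stopped process $(M/q^L)^{\rho_n}$ is a true $\bG$-martingale.

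The core is the following computation. Write $\rho_n=\rho^x_n\big|_{x=L}$, with $\rho^x_n:=\inf\{t\colon q^x_t\le 1/n\}$ an $\bF$-stopping time depending measurably on $x$; by Assumption~\ref{ass:NoJumpZero} the density $q^x$ cannot jump to zero, so it stays strictly positive up to and including $\rho^x_n$. Fix $s<t$ in $[0,T]$, $F\in\F_s$ and $B\in\cE$, and apply \eqref{expec_init} to the $(\cE\otimes\F_t)$-measurable map $(x,\omega)\mapsto\bigl(M_{\rho^x_n\wedge t}/q^x_{\rho^x_n\wedge t}\bigr)\ind_F\ind_B(x)$ (the required integrability is immediate from \eqref{expec_init} itself, since the strict positivity of $q^x$ at $\rho^x_n\wedge t$ yields $\expec[1/q^L_{\rho_n\wedge t}]=1$). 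Since $\expec[q^x_t\mid\F_{\rho^x_n\wedge t}]=q^x_{\rho^x_n\wedge t}$ by optional stopping for the $\bF$-martingale $q^x$, the density cancels and one obtains
\[
\expec\!\left[(M/q^L)^{\rho_n}_t\,\ind_F\ind_{\{L\in B\}}\right]
=\int_B\expec\!\left[M_{\rho^x_n\wedge t}\,\ind_F\right]\lambda(\!\ud x).
\]
Because $M^{\rho^x_n}$ is a bounded $\bF$-martingale and $F\in\F_s$, the inner expectation equals $\expec[M_{\rho^x_n\wedge s}\ind_F]$, and running the same computation with $s$ in place of $t$ identifies the right-hand side with $\expec[(M/q^L)^{\rho_n}_s\ind_F\ind_{\{L\in B\}}]$. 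As sets of the form $F\cap\{L\in B\}$ generate $\G_s$, this gives $(M/q^L)^{\rho_n}\in\M(\prob,\bG)$, and letting $n\to\infty$ yields $M/q^L\in\mMloc(\prob,\bG)$.

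The step I would expect to require the most care is the interplay between the zero sets of the densities and the localization: for a fixed $x$, $q^x_T$ may vanish on a set of positive $\prob$-measure --- by Theorem~\ref{thm:results_NUPBR} this is precisely what makes $L$ generate arbitrage in $\bG$ --- so without the truncation at $\rho_n$ the displayed identity would pick up spurious contributions from $\{q^x=0\}$; Assumption~\ref{ass:NoJumpZero} is exactly what excludes these, by forcing $q^x$ to reach each level $1/n>0$ before it can hit $0$. The measurable selection $x\mapsto\rho^x_n$ and the joint measurability needed to invoke \eqref{expec_init} are routine. As an alternative to the disintegration argument, one could instead start from Jacod's $\bG$-canonical decomposition of $M$ and integrate by parts against $1/q^L$, checking that the induced $\bG$-drift terms cancel; this is the route of \cite{Jac85,AFK}.
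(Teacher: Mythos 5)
Your overall plan --- reduce by localization, use the disintegration identity \eqref{expec_init}, and verify the martingale property on cylinder sets --- is precisely the route of \cite[Proposition~9]{ACJ15} and \cite[Proposition~3.6]{AFK}, which the paper simply cites. The localization at $\rho_n=\inf\{t:q^L_t\le 1/n\}$ and the observation that Assumption~\ref{ass:NoJumpZero} forces $q^x_{\rho^x_n\wedge t}>0$ are both correct. However, the key display is not: the density does \emph{not} simply cancel when you condition on $\F_{\rho^x_n\wedge t}$, because $F$ is only $\F_s$-measurable and $\rho^x_n$ may well occur strictly before $s$. Writing $\tau:=\rho^x_n\wedge t$, the claim
\[
\expec\!\left[\frac{M_{\tau}}{q^x_{\tau}}\,\ind_F\, q^x_t\right]=\expec\bigl[M_{\tau}\,\ind_F\bigr]
\]
requires pulling $\ind_F$ out of the conditional expectation given $\F_{\tau}$, which is not legitimate since in general $F\notin\F_{\tau}$. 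A concrete counterexample: take $M=q^x$ and $F=\{q^x_s>2\}$. Then the left-hand side is $\expec[\ind_F\,q^x_s]$ while the right-hand side equals $\expec[q^x_{\rho^x_n\wedge s}\ind_F]$, and the difference $\expec[(q^x_s-q^x_{\rho^x_n})\ind_F\ind_{\{\rho^x_n\le s\}}]$ is strictly positive whenever $\prob(F\cap\{\rho^x_n\le s\})>0$, since on that event $q^x_s>2$ but $q^x_{\rho^x_n}\le 1/n$.

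The fix is to split on $\{\rho^x_n\le s\}$ and $\{\rho^x_n>s\}$ rather than aiming for the cleaner but false identity. On $\{\rho^x_n\le s\}$ the random variable $M_{\rho^x_n}/q^x_{\rho^x_n}$ is $\F_s$-measurable, so $\expec[q^x_t\mid\F_s]=q^x_s$ alone gives
\[
\expec\!\left[\frac{M_{\rho^x_n}}{q^x_{\rho^x_n}}\ind_F\ind_{\{\rho^x_n\le s\}}\,q^x_t\right]
=\expec\!\left[\frac{M_{\rho^x_n\wedge s}}{q^x_{\rho^x_n\wedge s}}\ind_F\ind_{\{\rho^x_n\le s\}}\,q^x_s\right],
\]
no martingale property of $M$ needed. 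On $\{\rho^x_n>s\}$ the set $F\cap\{\rho^x_n>s\}$ \emph{does} belong to $\F_{\rho^x_n\wedge t}$ (check $A\cap\{\rho^x_n\wedge t\le u\}\in\F_u$ separately for $u<s$ and $u\ge s$), so there the density cancels as you intended and optional sampling for $M^{\rho^x_n}$ gives $\expec[M_{\tau}\ind_F\ind_{\{\rho^x_n>s\}}]=\expec[M_s\ind_F\ind_{\{\rho^x_n>s\}}]$, matching the corresponding term at time $s$. Integrating over $B$ against $\lambda$ then yields the $\bG$-martingale property of $(M/q^L)^{\rho_n}$. A secondary, minor point: a general $\bF$-local martingale cannot be localized to be \emph{bounded} (unbounded jumps obstruct this); say instead that it can be localized to be uniformly integrable (or in $\mathcal H^1$), which is all the above argument needs --- the integrability $\expec[|M_{\rho_n\wedge t}|/q^L_{\rho_n\wedge t}]<\infty$ is obtained exactly as you did, via \eqref{expec_init} and $\expec[q^x_t\mid\F_{\rho^x_n\wedge t}]=q^x_{\rho^x_n\wedge t}$, without any boundedness.
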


\begin{proof}[Proof of Theorem \ref{thm:results_NUPBR}]
The sufficiency part of the first assertion of the theorem follows directly from \cite[Theorem 1.12]{AFK}. In order to prove the necessity, let use define the $\bF$-stopping times 
\[
\zeta^x:=\inf\{t\in[0,T]:q^x_t=0\}
\qquad\text{and}\qquad
\eta^x:=\zeta^x\ind_{\{q^x_{\zeta^x-}>0\}}+(+\infty)\ind_{\{q^x_{\zeta^x-}=0\}},
\qquad\text{for $x\in E$,}
\] 
and suppose that there exists a set $B\in\cE$ with $\lambda(B)>0$ such that $\prob(\eta^x<+\infty)>0$, for all $x\in B$.
By \cite[Corollary 1.11]{Jac85}, it holds that $\eta^L=\zeta^L=+\infty$ a.s.
For each $x\in B$, define the $\bF$-martingale $M^x=(M^x_t)_{t\in[0,T]}$ by
$
M^x := -(\ind_{\dbra{\eta^x,T}}-(\ind_{\dbra{\eta^x,T}})^p)
$,
where $(\ind_{\dbra{\eta^x,T}})^p$ denotes the dual $\bF$-predictable projection of the process $\ind_{\dbra{\eta^x,T}}$. Since $L^1(\Omega,\F_T,\prob)$ is separable, \cite[Proposition 4]{SY} ensures the existence of a $(\cP(\bF)\otimes\cE)$-measurable version of $(\ind_{\dbra{\eta^x,T}})^p$.
As a consequence of Assumption \ref{ass:Q} together with \cite[Proposition 4.9]{Fon2015}, there exists a $(\cP(\bF)\otimes\cE)$-measurable process $H^x\in L(S,\bF)$ such that $M^x=H^x\cdot S$, for every $x\in E$. The same arguments used in the proof of \cite[Proposition 4.10]{Fon2015} allow to show that the $\bG$-predictable process $H^L$ belongs to $L(S,\bG)$ and it holds that $H^L\cdot S=M^L=(\ind_{\dbra{\eta^x,T}})^p\bigr|_{x=L}$. The process $(\ind_{\dbra{\eta^x,T}})^p\bigr|_{x=L}$ is non-negative, non-decreasing and it holds that
\begin{align*}
\expec\Bigl[(\ind_{\dbra{\eta^x,T}})^p_T\bigr|_{x=L}\Bigr]
&= \int_E\expec\bigl[q^x_T(\ind_{\dbra{\eta^x,T}})_T^p\bigr]\lambda(\!\ud x)
= \int_E\expec\left[\int_0^Tq^x_{u-}\ud(\ind_{\dbra{\eta^x,T}})^p_u\right]\lambda(\!\ud x)	\\
&= \int_E\expec\left[q^x_{\eta^x-}\ind_{\{\eta^x\leq T\}}\right]\lambda(\!\ud x) > 0,
\end{align*}
where the second and third equalities follow from \cite[Theorems 5.32-5.33]{MR1219534}. This contradicts the validity of NUPBR on $\basisgp$, thus proving the first assertion of the theorem.

Under the assumption that NUPBR holds on $\basisgp$, let us prove that $\Z=\{Z/q^L\}$.
Since $S\in\Mloc(\qprob,\bF)$ and $Z$ is the density process of $\qprob$ with respect to $\prob$ on $\bF$, the processes $Z$ and $ZS$ are local martingales on $\basisp$. Hence, Lemma \ref{lem:mgG} implies that $Z/q^L\in\Mloc(\prob,\bG)$ and $ZS/q^L\in\Mloc(\prob,\bG)$, meaning that $Z/q^L\in\Z$. 
In order to prove that $\Z=\{Z/q^L\}$, let $D=(D_t)_{t\in[0,T]}$ be an arbitrary element of $\Z$ and $(\tau_n)_{n\in\N}$ a sequence of $\bG$-stopping times increasing a.s. to infinity such that $(Z/q^L)^{\tau_n}\in\M(\prob,\bG)$ and $D^{\tau_n}\in\M(\prob,\bG)$, for all $n\in\N$. For each $n\in\N$, define the filtration $\bG^n=(\mathcal G_{t\wedge\tau_n})_{t\in[0,T]}$ and the probability measures $\qprob^{\bG,n}$ and $\widehat{\qprob}^n$ by $\ud\qprob^{\bG,n}=Z_{T\wedge\tau_n}/q^L_{T\wedge\tau_n}\ud\prob$ and $\ud\widehat{\qprob}^n=D_{T\wedge\tau_n}\ud\prob$, respectively. 
It holds that $S^{\tau_n}\in\Mloc(\qprob^{\bG,n},\bG^n)\cap\Mloc(\widehat{\qprob}^n,\bG^n)$. 
Let $N=(N_t)_{t\in[0,T]}\in\M(\qprob^{\bG,n},\bG^n)$, so that $N(Z/q^L)^{\tau_n}\in\M(\prob,\bG^n)$. By Proposition \ref{prop:MRT} and \cite[Lemma 13.8]{MR1219534}, it holds that 
$N = N_0 + \gamma\cdot S^{\tau_n}$,
for some $\gamma\in L(S^{\tau_n},\bG^n)$.
It follows that $S^{\tau_n}$ has the martingale representation property on $(\Omega,\bG^n,\qprob^{\bG,n})$. In view of \cite[Corollary 11.4]{MR542115} (extended to a non-trivial initial sigma-field), this implies that $\qprob^{\bG,n}=\widehat{\qprob}^n$, for every $n\in\N$, equivalently $Z_{t\wedge\tau_n}/q^L_{t\wedge\tau_n}=D_{t\wedge\tau_n}$ a.s., for every $t\in[0,T]$ and $n\in\N$.
Since the stopping times $(\tau_n)_{n\in\N}$ increase a.s. to infinity, we obtain $Z/q^L=D$ up to an evanescent set, thus proving that $\Z=\{Z/q^L\}$.

Let us now prove the second part of the theorem:\\
(ii) $\Leftrightarrow$ (iii): This implication is evident (compare with \cite[Lemma 2.2]{Amen}).\\
(iii) $\Leftrightarrow$ (iv): It suffices to note that, as a consequence of Lemma \ref{lem:prelim_G} and formula \eqref{expec_init},
\[
\expec\left[\frac{1}{q^L_T}\right]
= \expec\left[\frac{1}{q^L_T}\ind_{\{q^L_T>0\}}\right]
= \int_E\prob(q^x_T>0)\lambda(\!\ud x).
\]
(iii) $\Leftrightarrow$ (v): Similarly as above, it holds that
\[
\expec\left[\frac{Z_T}{q^L_T}\right]
= \expec\left[\frac{Z_T}{q^L_T}\ind_{\{q^L_T>0\}}\right] 
= \int_E\expec[Z_T\ind_{\{q^x_T>0\}}]\lambda(\!\ud x)
= \int_E\qprob(q^x_T>0)\lambda(\!\ud x).
\]
Since $\qprob\sim\prob$, property (iii) is equivalent to $\qprob(q^x_T>0)=1$ for $\lambda$-a.e. $x\in E$, thus proving the claim. 	\\
(iv) $\Leftrightarrow$ (vi): Lemma \ref{lem:mgG} implies that $1/q^L\in\Mloc(\prob,\bG)$. Being a strictly positive local martingale, and therefore a supermartingale, $1/q^L$ is a true martingale if and only if $\expec[1/q^L_T]=1$.\\
(iii) $\Rightarrow$ (vii): Let define the set $B:=\{x\in E : \prob(q^x_T=0)>0\}$, with $\lambda(B)=0$. Let $N\in\M(\prob,\bF)$ and take arbitrary $0\leq s\leq t\leq T$, an $\F_s$-measurable set $A_s$ and a bounded $\cE$-measurable function $h:E\rightarrow\R$. 
By relying on formula \eqref{expec_init}, we can compute:
\[	\ba
\expec\left[\frac{N_t}{q^L_t}h(L)\ind_{A_s}\right]
&= \expec\left[\frac{N_t}{q^L_t}h(L)\ind_{A_s}\ind_{\{q^L_t>0\}}\right]
= \int_Eh(x)\expec[N_t\ind_{A_s}\ind_{\{q^x_t>0\}}]\lambda(\!\ud x)	\\
&=  \int_{E\setminus B}h(x)\expec[N_t\ind_{A_s}]\lambda(\!\ud x)
= \int_{E\setminus B}h(x)\expec[N_s\ind_{A_s}]\lambda(\!\ud x)	\\
&= \int_Eh(x)\expec[N_s\ind_{A_s}\ind_{\{q^x_s>0\}}]\lambda(\!\ud x)
= \expec\left[\frac{N_s}{q^L_s}h(L)\ind_{A_s}\right].
\ea	\]
Since the the sigma-field $\G_s$ is generated by random variables of the form $h(L)\ind_{A_s}$, this proves that $N/q^L\in\M(\prob,\bG)$.\\
(vii) $\Rightarrow$ (i): We already know that $Z/q^L\in\Z$. Since $Z\in\M(\prob,\bF)$, property (vii) implies that $Z/q^L\in\M(\prob,\bG)$. Hence, the probability measure $\qprob^{\bG}$ defined by $\ud\qprob^{\bG}:=Z_T/q^L_T\ud\prob$ is an equivalent local martingale measure for $S$ on $\basisgp$. By \cite{DS98b}, $S$ satisfies NFLVR on $\basisgp$.\\
(i) $\Rightarrow$ (v): We argue by contradiction and construct  an arbitrage opportunity explicitly. Suppose that $\expec[Z_T/q^L_T]\neq 1$. Since $Z/q^L$ is a supermartingale on $\basisgp$, it must be that $\expec[Z_T/q^L_T]<1$. Define $M=(M_t)_{t\in[0,T]}\in\M(\prob,\bG)$ by $M_t:=\expec[Z_T/q^L_T|\G_t]$, for all $t\in[0,T]$. By Proposition \ref{prop:MRT}, there exists $K\in L(S,\bG)$ such that $M_t=Z_t/q^L_t(M_0+(K\cdot S)_t)$ a.s. for all $t\in[0,T]$. Note that
\[
(K\cdot S)_t = \frac{q^L_t}{Z_t}M_t - M_0
\geq -M_0 \geq -1
\qquad\text{ a.s. for all $t\in[0,T]$,}
\]
where the last inequality follows from the $\bG$-supermartingale property of $Z/q^L$. Therefore, the strategy $K$ is $1$-admissible, in the sense of \cite{DS94}. Moreover, it holds that $(K\cdot S)_T = 1-M_0 \geq 0$ a.s. and $\expec[M_0]<1$ implies that $\prob((K\cdot S)_T>0)>0$. This shows that $K$ is an arbitrage opportunity, thus contradicting the validity of NFLVR in $\basisgp$.
\end{proof}

\subsection{Proofs of the results of Section \ref{sec:portfolio}}

We first prove Lemmata \ref{lem:AFinAG} and \ref{lem:SuperMart=Mart}, which together provide a complete duality description of the set of admissible portfolios.

\begin{proof}[Proof of Lemma \ref{lem:AFinAG}]
(i): 
Let $\mathbf{H}\in\{\bF,\bG\}$ and $(\vartheta,c)\in\A_+^{\mathbf{H},k}(v)$.
For simplicity of notation, let us denote $V:=V^{v+k,\vartheta,c}$, $C:=\int_0^{\cdot}c_u\ud\kappa_u$ and $\widetilde{C}:=\int_0^{\cdot}Z^{\mathbf{H}}_u\ud C_u$.  
By the integration by parts formula (see, e.g., \cite[Proposition I.4.49]{MR1943877}), we have that, for all $t\in[0,T]$,
\[
Z^{\mathbf{H}}_tV_t + \widetilde{C}_t
= Z^{\mathbf{H}}_t\bigl(v+k+(\vartheta\cdot S)_t\bigr) - Z^{\mathbf{H}}_tC_t + \int_0^tZ^{\mathbf{H}}_u\ud C_u
= Z^{\mathbf{H}}_t\bigl(v+k+(\vartheta\cdot S)_t\bigr) - (C_-\cdot Z^{\mathbf{H}})_t.
\]
Since $Z^{\mathbf{H}}\in\Mloc(\prob,\mathbf{H})$ and $Z^{\mathbf{H}}S\in\Mloc(\prob,\mathbf{H})$, this implies that $Z^{\mathbf{H}}V + \widetilde{C}$ is a sigma-martingale on $(\Omega,\mathbf{H},\prob)$ (see, e.g., \cite[Lemma 4.2]{Fon15}). Being non-negative, it is also  a supermartingale (see \cite[Proposition 3.1]{MR2013413}).
In order to show that $(\vartheta,c)\in\A_{sm}^{\mathbf{H},k}(v)$, it remains to prove that $\widetilde{C}_T\in L^1(\prob)$. Arguing similarly as in \cite[Lemma 1]{CCFM2015}, let $(\tau_n)_{n\in\N}$ be a localizing sequence of $\mathbf{H}$-stopping times for $C_-\cdot Z^{\mathbf{H}}\in\Mloc(\prob,\mathbf{H})$. Then, for every $n\in\N$, it holds that
\[
v+k 
\geq \expec\left[Z^{\mathbf{H}}_{T\wedge\tau_n}\bigl(v+k+(\vartheta\cdot S)_{T\wedge\tau_n}\bigr)\right]
\geq \expec\bigl[Z^{\mathbf{H}}_{T\wedge\tau_n}C_{T\wedge\tau_n}\bigr]
= \expec\left[\int_0^{T\wedge\tau_n}C_{u-}\ud Z^{\mathbf{H}}_u + \int_0^{T\wedge\tau_n}Z^{\mathbf{H}}_u\ud C_u\right],
\]
where we have used the supermartingale property of $Z^{\mathbf{H}}(v+k+\vartheta\cdot S)$ and integration by parts.
Since $(C_-\cdot Z^{\mathbf{H}})^{\tau_n}\in\M(\prob,\mathbf{H})$, for all $n\in\N$, the monotone convergence theorem yields that
\[
v+k 
\geq \lim_{n\rightarrow+\infty} \expec\bigl[\widetilde{C}_{T\wedge\tau_n}\bigr]
= \expec\bigl[\widetilde{C}_T\bigr].
\]
Conversely, let $(\vartheta,c)\in\A_{sm}^{\mathbf{H},k}(v)$. By definition, the process $Z^{\mathbf{H}}V^{v+k,\vartheta,c}+\int_0^{\cdot}Z^{\mathbf{H}}_uc_u\ud\kappa_u$ is a supermartingale on $(\Omega,\mathbf{H},\prob)$. 
Therefore, for all $t\in[0,T]$,
\[
Z^{\mathbf{H}}_tV^{v+k,\vartheta,c}_t + \int_0^tZ^{\mathbf{H}}_uc_u\ud\kappa_u
\geq \expec\left[Z^{\mathbf{H}}_TV^{v+k,\vartheta,c}_T+\int_0^TZ^{\mathbf{H}}_uc_u\ud\kappa_u\bigg|\mathcal{H}_t\right]
\geq \expec\left[\int_0^TZ^{\mathbf{H}}_uc_u\ud\kappa_u\bigg|\mathcal{H}_t\right],
\]
so that
$
Z^{\mathbf{H}}_tV^{v+k,\vartheta,c}_t 
\geq \expec[\int_t^TZ^{\mathbf{H}}_uc_u\ud\kappa_u|\mathcal{H}_t]\geq0.
$
This shows that $V^{v,\vartheta,c}_t\geq-k$ a.s. for all $t\in[0,T]$, thus proving that $(\vartheta,c)\in\A_+^{\mathbf{H},k}(v)$.\\
(ii): 
Lemma \ref{lem:prelim_G} together with \cite[Proposition 2.1]{MR604176} implies that $L(S,\bF)\subseteq L(S,\bG)$, from which the inclusion $\A^{\bF,k}_+(v)\subseteq\A^{\bG,k}_+(v)$ immediately follows. In turn, in view of part (i) of the lemma, this  implies that $\A^{\bF,k}_{sm}(v)\subseteq\A^{\bG,k}_{sm}(v)$.

Let us now turn to the proof of the last assertion of the lemma.
Suppose first that $\expec[1/q^L_T]\neq 1$, or, equivalently, $\expec[Z_T/q^L_T]\neq 1$ (see Theorem \ref{thm:results_NUPBR}). Since $Z/q^L$ is a strictly positive local martingale on $\basisgp$ (see Lemma \ref{lem:mgG}), it must be that $\expec[Z_T/q^L_T]<1$ and $Z/q^L\notin\M(\prob,\bG)$ (i.e., $Z/q^L$ is a {\em strict} local martingale).
Consider the pair $(0,0)\in\A^{\bF,k}_m(1)$, generating the constant value process $V^{1,0,0}=1$. Since $Z^{\bG}V^{1,0,0}=Z/q^L\notin\M(\prob,\bG)$, this suffices to show that $(0,0)\notin\A_m^{\bG,k}(1)$, thus proving that $\A^{\bF,k}_m(1)\nsubseteq\A^{\bG,k}_m(1)$. 
Conversely, suppose that $\expec[1/q^L_T]=1$ and let $(\vartheta,c)\in\A^{\bF,k}_m(v)$. Let us denote $V:=V^{v+k,\vartheta,c}$ and $\widetilde{C}:=\int_0^{\cdot}Z^{\bF}_uc_u\ud\kappa_u$ and note that
\[
\expec\left[\frac{\widetilde{C}_T}{q^L_T}\right]
= \expec\left[\frac{\widetilde{C}_T}{q^L_T}\ind_{\{q^L_T>0\}}\right]
= \int_E\expec\bigl[\widetilde{C}_T\ind_{\{q^x_T>0\}}\bigr]\lambda(\!\ud x)
= \expec\bigl[\widetilde{C}_T\bigr] < +\infty,
\]
where we have used formula \eqref{expec_init} and the equivalence (iii)$\Leftrightarrow$(iv) of Theorem \ref{thm:results_NUPBR}.
Again by Theorem \ref{thm:results_NUPBR}, it holds that $1/q^L\in\M(\prob,\bG)$. Therefore, taking the $\bG$-optional projection of $1/q^L_T$ (see, e.g., \cite[Theorem 5.16]{MR1219534}), we can write, for all $t\in[0,T]$,
\begin{align*}
\expec\left[\frac{\widetilde{C}_T}{q^L_T}\biggr|\G_t\right]
&= \expec\left[\int_t^T\frac{1}{q^L_T}\ud\widetilde{C}_u+\frac{\widetilde{C}_t}{q^L_T}\biggr|\G_t\right] 
= \expec\left[\int_t^T\frac{1}{q^L_u}\ud\widetilde{C}_u\biggr|\G_t\right] +\frac{\widetilde{C}_t}{q^L_t} \\
&= \expec\left[\int_0^T\frac{1}{q^L_u}\ud\widetilde{C}_u\biggr|\G_t\right] - \int_0^t\frac{1}{q^L_u}\ud\widetilde{C}_u + \frac{\widetilde{C}_t}{q^L_t},
\end{align*}
thus showing that $\widetilde{C}/q^L-\int_0^{\cdot}(1/q^L_u)\ud\widetilde{C}_u\in\M(\prob,\bG)$.
Since $(\vartheta,c)\in\A^{\bF,k}_m(v)$, it holds that $Z^{\bF}V+\widetilde{C}\in\M(\prob,\bF)$. By Theorem \ref{thm:results_NUPBR}, the fact that $\expec[1/q^L_T]=1$ then implies that $(Z^{\bF}V+\widetilde{C})/q^L\in\M(\prob,\bG)$. By the integration by parts formula, we have that
\[
Z^{\bG}_tV_t + \int_0^tZ^{\bG}_u\ud C_u
= \frac{Z^{\bF}_tV_t}{q^L_t} + \int_0^t\frac{1}{q^L_u}\ud\widetilde{C}_u
= \frac{Z^{\bF}_tV_t+\widetilde{C}_t}{q^L_t} - \frac{\widetilde{C}_t}{q^L_t} + \int_0^t\frac{1}{q^L_u}\ud\widetilde{C}_u,
\]
for all $t\in[0,T]$.
This proves that $Z^{\bG}V+\int_0^{\cdot}Z^{\bG}_uc_u\ud\kappa_u\in\M(\prob,\bG)$, so that $(\vartheta,c)\in\A^{\bG,k}_m(v)$.
\end{proof}

\begin{proof}[Proof of Lemma \ref{lem:SuperMart=Mart}]
The fact that $\Cons_+^{\mathbf{H},k}(v)=\Cons_{sm}^{\mathbf{H},k}(v)$ is a direct consequence of part (i) of Lemma \ref{lem:AFinAG}.
If $c\in\Cons_+^{\mathbf{H},k}(v)$, then there exists $\vartheta\in L(S,\mathbf{H})$ such that $(\vartheta,c)\in\A_+^{\mathbf{H},k}(v)=\A_{sm}^{\mathbf{H},k}(v)$ (see Lemma \ref{lem:AFinAG}). Therefore, due to the supermartingale property of the process $Z^{\mathbf{H}}V^{v+k,\vartheta,c}+\int_0^{\cdot}Z^{\mathbf{H}}_u\,c_u\ud\kappa_u$ and to the fact that $V^{v,\vartheta,c}_T\geq0$ a.s., it holds that
\[
v+k \geq \expec\left[Z^{\mathbf{H}}_TV^{v+k,\vartheta,c}_T+\int_0^TZ^{\mathbf{H}}_uc_u\ud\kappa_u\biggr|\mathcal{H}_0\right]
\geq  \expec\left[kZ^{\mathbf{H}}_T + \int_0^TZ^{\mathbf{H}}_uc_u\ud\kappa_u\biggr|\mathcal{H}_0\right],
\]
so that $\expec[\int_0^TZ^{\mathbf{H}}_uc_u\ud\kappa_u|\mathcal{H}_0]\leq v+k(1-\expec[Z^{\mathbf{H}}_T|\mathcal{H}_0])$ a.s.
Conversely, let $C:=\int_0^{\cdot}c_u\ud\kappa_u$ and suppose that $\expec[\int_0^TZ^{\mathbf{H}}_u\ud C_u|\mathcal{H}_0]\leq v+k(1-\expec[Z^{\mathbf{H}}_T|\mathcal{H}_0])$ a.s. 
Consider the process $\widehat{V}=(\widehat{V}_t)_{t\in[0,T]}$ defined by
\begin{align*}
\widehat{V}_t &:= v + Z^{\mathbf{H}}_tC_t - \int_0^tZ^{\mathbf{H}}_u\ud C_u  
+ \expec\bigg[ \int_0^TZ^{\mathbf{H}}_u\ud C_u\bigg|\mathcal{H}_t\bigg] 
- \expec\bigg[ \int_0^TZ^{\mathbf{H}}_u\ud C_u\bigg|\mathcal{H}_0\bigg]	\\
&\qquad +k\left(1-\expec[Z^{\mathbf{H}}_T|\mathcal{H}_0]+\expec[Z^{\mathbf{H}}_T|\mathcal{H}_t]-Z^{\mathbf{H}}_t\right),
\end{align*}
for all $t\in[0,T]$. The process $\widehat{V}$ is well-defined as an element of $\Mloc(\prob,\mathbf{H})$. As a consequence of Assumption \ref{ass:Q} (and of Proposition \ref{prop:MRT}, in the case $\mathbf{H}=\bG$), there exists $\psi\in L(S,\mathbf{H})$ such that
\[
\widehat{V}_t = Z^{\mathbf{H}}_t\bigl(v + (\psi\cdot S)_t\bigr)
\qquad\text{ a.s. for all }t\in[0,T].
\]
The process $V^{v+k,\psi,c}=(V^{v+k,\psi,c}_t)_{t\in[0,T]}$ associated to the pair $(\psi,c)$ satisfies
\begin{align*}
Z^{\mathbf{H}}_tV^{v+k,\psi,c}_t + \int_0^tZ^{\mathbf{H}}_u\ud C_u 
&= v+k +  \expec\bigg[\int_0^TZ^{\mathbf{H}}_u\ud C_u\bigg|\mathcal{H}_t\bigg] - \expec\bigg[\int_0^TZ^{\mathbf{H}}_u\ud C_u\bigg|\mathcal{H}_0\bigg]	\\
&\quad+k\left(\expec[Z^{\mathbf{H}}_T|\mathcal{H}_t]-\expec[Z^{\mathbf{H}}_T|\mathcal{H}_0]\right)
\qquad\qquad\text{ a.s. for all }t\in[0,T].
\end{align*}
By construction, it holds that $Z^{\mathbf{H}}_TV^{v,\psi,c}_T\geq0$ a.s.
This shows that $(\psi,c)\in\A^{\mathbf{H},k}_{m}(v)\subseteq\A_+^{\mathbf{H},k}(v)$ (see Lemma \ref{lem:AFinAG}), thus proving that $c\in\Cons_+^{\mathbf{H},k}(v)$.
Assertion (ii) follows by analogous arguments, using the definition of the set $\Cons_m^{\mathbf{H},k}(v)$ and replacing the supermartingale property with the martingale property.
It remains to prove \eqref{eq:utility_mart}.
Since $\A_m^{\mathbf{H},k}(v)\subseteq\A_+^{\mathbf{H},k}(v)$, it is clear that $u^{\mathbf{H},k}(v)\geq u_m^{\mathbf{H},k}(v)$. In order to prove the converse inequality, let $c\in\Cons_+^{\mathbf{H},k}(v)$. By part (i) of the lemma, it holds that $\expec[\int_0^TZ^{\mathbf{H}}_u\,c_u\ud\kappa_u|\mathcal{H}_0]\leq v+k(1-\expec[Z^{\mathbf{H}}_T|\mathcal{H}_0])$ a.s.
Let define the $\mathcal{H}_0$-measurable non-negative random variable $\tilde{v}:=v+k(1-\expec[Z^{\mathbf{H}}_T|\mathcal{H}_0])-\expec[\int_0^TZ^{\mathbf{H}}_u\,c_u\ud\kappa_u|\mathcal{H}_0]$ and define $\tilde{c}=(\tilde{c}_t)_{t\in[0,T]}\in\cO_+(\mathbf{H})$ by
\[
\tilde{c}_t := c_t + \frac{\tilde{v}}{Z^{\mathbf{H}}_t\,\expec[\kappa_T|\mathcal{H}_0]},
\qquad\text{ for all }t\in[0,T].
\]
By construction, it holds that $\expec[\int_0^TZ^{\mathbf{H}}_u\tilde{c}_u\ud\kappa_u|\mathcal{H}_0]=v+k(1-\expec[Z^{\mathbf{H}}_T|\mathcal{H}_0])$ a.s., so that $\tilde{c}\in\Cons_m^{\mathbf{H},k}(v)$.
Moreover, for every $t\in[0,T]$, we have that $\prob(\tilde{c}_t\geq c_t)=1$, with $\prob(\tilde{c}_t>c_t)>0$ if and only if $c\notin\Cons_m^{\mathbf{H},k}(v)$. Since $U$ is assumed to be strictly increasing (Assumption \ref{ass:U}), this implies that
\[
\expec\left[\int_0^TU(u,c_u)\ud\kappa_u\right]
\leq \expec\left[\int_0^TU(u,\tilde{c}_u)\ud\kappa_u\right],
\]
with strict inequality holding if and only if $c\notin\Cons_m^{\mathbf{H}}(v)$. 
By the arbitrariness of $c\in\Cons_+^{\mathbf{H},k}(v)$, we then have $u^{\mathbf{H},k}(v)\leq u_m^{\mathbf{H},k}(v)$, thus proving equality \eqref{eq:utility_mart}. 
\end{proof}

We are now in a position to prove Proposition \ref{prop:lambda}.
Even though the proof follows a well-known scheme, we give full details for the convenience of the reader.

\begin{proof}[Proof of Proposition \ref{prop:lambda}]
Under the present assumptions, the process $c^{\mathbf{H}}=(c^{\mathbf{H}}_t)_{t\in[0,T]}$ satisfies $\expec[\int_0^TZ^{\mathbf{H}}_uc^{\mathbf{H}}_u\ud\kappa_u|\mathcal{H}_0]=v+k(1-\expec[Z^{\mathbf{H}}|\mathcal{H}_0])$ a.s., so that $c^{\mathbf{H}}\in\Cons^{\mathbf{H},k}_m(v)$ by Lemma \ref{lem:SuperMart=Mart}.
Consider an arbitrary consumption process $c\in\Cons_m^{\mathbf{H},k}(v)$. By the concavity of $U$ (Assumption \ref{ass:U}), it holds that
\[
U(t,c^{\mathbf{H}}_t) 
\geq U(t,c_t) + U'(t,c^{\mathbf{H}}_t)(c^{\mathbf{H}}_t - c_t)  
= U(t,c_t) + \Lambda^{\mathbf{H}}(v)Z^{\mathbf{H}}_t(c^{\mathbf{H}}_t - c_t),
\quad\text{ for every }t\in[0,T].
\]
Therefore,
\begin{align*}
\expec\left[  \int_0^TU(u,c^{\mathbf{H}}_u)\ud\kappa_u 
\bigg|\mathcal{H}_0\right] 
&\geq \expec\left[  \int_0^TU(u,c_u)\ud\kappa_u \bigg|\mathcal{H}_0\right]  \\
&\quad + \Lambda^{\mathbf{H}}(v) \expec \left[ \int_0^T Z^{\mathbf{H}}_uc^{\mathbf{H}}_u\ud\kappa_u    \bigg|\mathcal{H}_0\right]
-  \Lambda^{\mathbf{H}}(v) \expec \left[ \int_0^TZ^{\mathbf{H}}_uc_u\ud\kappa_u  \bigg|\mathcal{H}_0\right]\\
&= \expec\left[  \int_0^TU(u,c_u)\ud\kappa_u \bigg|\mathcal{H}_0\right],
\end{align*}
where the equality follows from the fact that, in view of part (ii) of Lemma \ref{lem:SuperMart=Mart}, 
\[
v + k\left(1-\expec[Z^{\mathbf{H}}_T|\mathcal{H}_0]\right)
= \expec \left[ \int_0^TZ^{\mathbf{H}}_uc^{\mathbf{H}}_u\ud\kappa_u  \bigg|\mathcal{H}_0\right]
= \expec \left[ \int_0^TZ^{\mathbf{H}}_uc_u\ud\kappa_u  \bigg|\mathcal{H}_0\right]
\text{ a.s.}
\]
The claim follows by the arbitrariness of $c\in\Cons_m^{\mathbf{H},k}(v)$ together with equality \eqref{eq:utility_mart} in Lemma \ref{lem:SuperMart=Mart}.
\end{proof}

\begin{proof}[Proof of Corollary \ref{cor:log_pwr_utility}]
In view of Proposition \ref{prop:lambda}, in order to compute $u^{\mathbf{H},k}(v)$ it suffices to find explicitly the $\mathcal{H}_0$-measurable random variable $\Lambda^{\mathbf{H},k}(v)$ satisfying equation \eqref{eq:condition_lambda}. 
Note that, whenever it exists, the random variable $\Lambda^{\mathbf{H},k}(v)$ is uniquely determined (up to a $\prob$-nullset).

(i): If $U(\omega,t,x)=\log(x)$, then $I(\omega,t,y) = 1/y$, for all $(\omega,t,y)\in\Omega\times[0,T]\times(0,+\infty)$. Therefore, equation \eqref{eq:condition_lambda} can be explicitly solved and it holds that $\Lambda^{\mathbf{H},k}(v) = \expec[\kappa_T|\mathcal{H}_0]/(v+k(1-\expec[Z^{\mathbf{H}}_T|\mathcal{H}_0]))$. By Proposition \ref{prop:lambda}, the optimal solution $c^{\mathbf{H}}=(c^{\mathbf{H}}_t)_{t\in[0,T]}$ is then given by 
\[
c^{\mathbf{H}}_t
= \frac{1}{\Lambda^{\mathbf{H},k}(v)Z^{\mathbf{H}}_t}
= \frac{v+k(1-\expec[Z^{\mathbf{H}}_T|\mathcal{H}_0])}{Z^{\mathbf{H}}_t\,\expec[\kappa_T|\mathcal{H}_0]}, 
\qquad\text{ for all }t\in[0,T].
\]
Under the integrability assumption stated in the corollary, the optimal expected utility $u^{\mathbf{H},k}(v)$ as given by \eqref{eq:opt_ut_log} can be obtained by means of a straightforward computation.
Note that the first two terms on the right-hand side of \eqref{eq:opt_ut_log} are always finite, as a consequence of the boundedness of $\kappa_T$.

(ii): If $U(\omega,t,x)=x^p/p$, then $I(\omega,t,y) = y^{1/(p-1)}$, for all $(\omega,t,y)\in\Omega\times[0,T]\times(0,+\infty)$. 
By Proposition \ref{prop:lambda}, the $\mathcal{H}_0$-measurable random variable $\Lambda^{\mathbf{H},k}(v)$ must solve
\[
\expec\bigg[ \int_0^T(Z^{\mathbf{H}}_u)^{\frac{p}{p-1}} \big( \Lambda^{\mathbf{H},k}(v) \big)^{\frac{1}{p-1}} \ud\kappa_u \bigg| \mathcal{H}_0 \bigg] = v+k\left(1-\expec[Z^{\mathbf{H}}_T|\mathcal{H}_0]\right).
\]
Therefore, if $\expec[\int_0^T(Z^{\mathbf{H}}_u)^{p/(p-1)}\ud\kappa_u|\mathcal{H}_0]<+\infty$ a.s., then we have that
\[
\Lambda^{\mathbf{H},k}(v) = \left(v+k(1-\expec[Z^{\mathbf{H}}_T|\mathcal{H}_0])\right)^{p-1} \expec\left[ \int_0^T (Z^{\mathbf{H}}_u)^{\frac{p}{p-1}}\ud\kappa_u \bigg| \mathcal{H}_0 \right]^{1-p}.
\]
By Proposition \ref{prop:lambda}, the corresponding optimal consumption process $c^{\mathbf{H}}=(c^{\mathbf{H}}_t)_{t\in[0,T]}$ is given by
$
c_t^{\mathbf{H}}  
= (\Lambda^{\mathbf{H},k}(v) Z^{\mathbf{H}}_t)^{1/(p-1)} 
$,
for all $t\in[0,T]$.
If $\expec[\int_0^T(Z^{\mathbf{H}}_u)^{p/(p-1)}\ud\kappa_u|\mathcal{H}_0]^{1-p}  \in L^1(\prob)$, then the optimal expected utility $u^{\mathbf{H},k}(v)$ is finite and can be explicitly computed as in \eqref{eq:opt_ut_pwr}.
\end{proof}

\begin{proof}[Proof of Corollary \ref{cor:exp_utility}]
We first show that equation \eqref{eq:exp_utility_lambda} admits an a.s. unique solution, for every $v>v^{\mathbf{H}}_k$. To this effect, we define the $\mathcal{H}_0$-measurable function $g:\Omega\times(0,+\infty)\rightarrow\R_+$ by
\[
g(\lambda) := \frac{1}{\alpha}\expec\left[\int_0^TZ^{\mathbf{H}}_u\left(\log\left(\frac{\alpha}{\lambda Z^{\mathbf{H}}_u}\right)\right)^+\!\ud\kappa_u\bigg|\mathcal{H}_0\right],
\qquad\text{ for }\lambda\in(0,+\infty).
\]
Note that $g$ is well-defined, since 
\[
g(\lambda) = \frac{1}{\alpha}\expec\left[\int_0^TZ^{\mathbf{H}}_u\log\left(\frac{\alpha}{\lambda Z^{\mathbf{H}}_u}\right)\ind_{\{Z^{\mathbf{H}}_u\leq \alpha/\lambda\}}\ud\kappa_u\bigg|\mathcal{H}_0\right]
\leq \frac{\expec[\kappa_T|\mathcal{H}_0]}{\lambda} < +\infty
\text{ a.s.}
\]
Clearly, $g$ is a decreasing function. Furthermore, the dominated convergence theorem implies that $g$ is continuous. Again by  dominated convergence, it holds that $\lim_{\lambda\rightarrow+\infty}g(\lambda)=0$ a.s. and a straightforward application of Fatou's lemma yields that $\lim_{\lambda\downarrow0}g(\lambda)=+\infty$ a.s.
Moreover, for all $0<\lambda'<\lambda<+\infty$, it holds that $g(\lambda')>g(\lambda)$ a.s. on $\{g(\lambda)>0\}$.
Indeed, arguing by contradiction, if the $\mathcal{H}_0$-measurable set $G_{\lambda,\lambda'}:=\{g(\lambda)=g(\lambda'),g(\lambda)>0\}$ has strictly positive probability, then 
\[
\expec\left[\int_0^TZ^{\mathbf{H}}_u\left(\left(\log\left(\frac{\alpha}{\lambda' Z^{\mathbf{H}}_u}\right)\right)^+-\left(\log\left(\frac{\alpha}{\lambda Z^{\mathbf{H}}_u}\right)\right)^+\right)\!\ud\kappa_u\bigg|\mathcal{H}_0\right] = 0
\qquad\text{ on }G_{\lambda,\lambda'}.
\]
However, since $\log(\alpha/(\lambda'Z^{\mathbf{H}}_u))>\log(\alpha/(\lambda Z^{\mathbf{H}}_u))$ for all $u\in[0,T]$, this contradicts the assumption that $g(\lambda)>0$.
In view of these observations, $v+k(1-\expec[Z^{\mathbf{H}}_T|\mathcal{G}_0](\omega))\in\{g(\lambda)(\omega):\lambda\in(0,+\infty)\}$ for a.a. $\omega\in\Omega$.
Therefore, by \cite[Lemma 1]{Benes70}, equation \eqref{eq:exp_utility_lambda} admits a unique strictly positive $\mathcal{H}_0$-measurable solution $\Lambda^{\mathbf{H},k}(v)$, for every $v>v^{\mathbf{H}}_k$.
With some abuse of notation, let $U(x):=-e^{-\alpha x}$, for $x\in\R_+$, and $I(y):=(1/\alpha)(\log(\alpha/y))^+$, for $y\in(0,+\infty)$. It can be easily checked that
\be	\label{eq:duality_exp}
\sup_{x\in\R_+}\bigl(U(x)-xy\bigr)
= U\bigl(I(y)\bigr)-yI(y),
\qquad\text{ for all }y>0.
\ee
Let us then define the consumption process $c^{\mathbf{H}}=(c^{\mathbf{H}}_t)_{t\in[0,T]}$ by
\[
c^{\mathbf{H}}_t := \frac{1}{\alpha}\left(\log\left(\frac{\alpha}{\Lambda^{\mathbf{H},k}(v)Z^{\mathbf{H}}_t}\right)\right)^+,
\qquad\text{ for all }t\in[0,T],
\]
and note that $\expec[\int_0^TZ^{\mathbf{H}}_uc^{\mathbf{H}}_u\ud\kappa_u|\mathcal{H}_0]=v+k(1-\expec[Z^{\mathbf{H}}_T|\mathcal{H}_0])$ a.s., so that $c^{\mathbf{H}}\in\Cons_m^{\mathbf{H},k}(v)$.
Consider an arbitrary consumption process $c=(c_t)_{t\in[0,T]}\in\Cons_m^{\mathbf{H},k}(v)$ and note that, as a consequence of \eqref{eq:duality_exp}, 
\[
U(c^{\mathbf{H}}_t)
\geq U(c_t) + \Lambda^{\mathbf{H}}(v)Z^{\mathbf{H}}_t(c^{\mathbf{H}}_t-c_t),
\qquad\text{ for all }t\in[0,T].
\]
The same argument adopted in the proof of Proposition \ref{prop:lambda} allows then to conclude that $c^{\mathbf{H}}$ is the optimal consumption process.
Formula \eqref{eq:opt_exp_utility} then follows by direct computations.
\end{proof}

\subsection{Proofs of the results of Section \ref{sec:indiff_price}}
\label{sec:proofs_3}

\begin{proof}[Proof of Theorem \ref{thm:IndPrice}]
(i): Due to the concavity of $U$ (see Assumption \ref{ass:U}), the assumption that $u^{\bG,k}(v_0)<+\infty$ for some $v_0>v^{\bG}_k$ implies that the function $u^{\bG,k}$ is concave and $u^{\bG,k}(v)<+\infty$, for all $v\geq v^{\bG}_k$.
By Lemma \ref{lem:AFinAG}, it holds that $u^{\bG,k}(v)\geq u^{\bF,k}(v)=u^{\bF,0}(v)>-\infty$, for every $v>0$.
Therefore, for every $v>0$, equation \eqref{eq:price_arb} admits a unique non-negative solution $\pi^{U,k}(v)$ if the function $u^{\bG,k}$ is  continuous, strictly increasing and satisfies $\lim_{w\searrow v^{\bG}_k}u^{\bG,k}(w)<u^{\bF,0}(v)$.
Under the present assumptions, $u^{\bG,k}$ satisfies these properties.
Indeed, by concavity, the function $u^{\bG,k}$ is continuous on $(v^{\bG}_k,+\infty)$.
Moreover, by condition \eqref{eq:condition_lambda}, we have that, for every $v>v^{\bG}_k$ and $\delta>0$,
\[
\expec\biggl[\int_0^TZ^{\bG}_u\Bigl(I\bigl(u,\Lambda^{\bG,k}(v+\delta)Z^{\bG}_u\bigr)-I\bigl(u,\Lambda^{\bG,k}(v)Z^{\bG}_u\bigr)\Bigr)\!\ud\kappa_u\,\biggr|\G_0\biggr]
= \delta,
\]
for some $\G_0$-measurable random variables $\Lambda^{\bG,k}(v+\delta)$ and $\Lambda^{\bG,k}(v)$.
Since $Z^{\bG}>0$ and $I(\omega,t,\cdot)$ is strictly decreasing, for every $(\omega,t)\in\Omega\times[0,T]$, this implies that $\Lambda^{\bG,k}(v+\delta)<\Lambda^{\bG,k}(v)$ a.s.
Recalling that $u^{\bG,k}(v)=\expec[\int_0^TU(u,I(u,\Lambda^{\bG,k}(v)Z^{\bG}_u))\!\ud\kappa_u]$, as a consequence of Proposition \ref{prop:lambda}, this implies that $u^{\bG,k}$ is strictly increasing.

(ii): If $\expec[1/q^L_T]<1$, then $\expec[Z^{\bG}_T]<1$ (see Theorem \ref{thm:results_NUPBR}). As explained in Remark \ref{rem:credit_cons} (with $\mathbf{H}=\bG$), this entails that $k\mapsto u^{\bG,k}(v)$ is strictly increasing. In turn, in view of Definition \ref{def:uip}, this implies that $k\mapsto\pi^{U,k}(v)$ is strictly increasing, for every $v>0$.
Conversely, if the map $k\mapsto\pi^{U,k}(v)$ is strictly increasing, then it necessarily holds that $u^{\bG,k}(v)>u^{\bG,0}(v)$, for every $v>v^{\bG}_k$. In view of Remark \ref{rem:credit_arbitrage} together with Theorem \ref{thm:results_NUPBR}, this implies that $\expec[1/q^L_T]<1$.

(iii): It suffices to show that, if $\int_E(\expec[\int_0^T\!\ind_{\{q^x_t=0\}}\!\ud\kappa_t]+k\prob(q^x_T=0))\lambda(\!\ud x)>0$ holds, then $u^{\bG,k}(v)>u^{\bF,0}(v)$, for all $k\in\R_+$ and $v>0$. 
Under the present assumptions and in view of Lemma \ref{lem:SuperMart=Mart}, there exists a pair $(\vartheta^{\bF},c^{\bF})\in\A^{\bF,0}_m(v)$ such that $c^{\bF}$ solves problem \eqref{eq:uH} (with $\mathbf{H}=\bF$).
By Lemma \ref{lem:AFinAG}, it holds that $(\vartheta^{\bF},c^{\bF})\in\A^{\bF,k}_m(v)\subseteq\A^{\bG,k}_{sm}(v)$, so that
\[
M_0 := \expec\left[\int_0^TZ^{\bG}_uc^{\bF}_u\ud\kappa_u+kZ^{\bG}_T\bigg|\G_0\right]  \leq v+k
\quad\text{a.s.}
\] 
By formula \eqref{expec_init}, the $\G_0$-measurable random variable $M_0$ can be computed explicitly. Indeed, let $h:E\rightarrow\R$ be an arbitrary $\cE$-measurable bounded function. Then
\begin{align*}
\expec\left[h(L)\left(\int_0^TZ^{\bG}_uc^{\bF}_u\ud\kappa_u+kZ^{\bG}_T\right)\right]
&= \int_Eh(x)\expec\left[q^x_T\int_0^T\frac{Z^{\bF}_u}{q^x_u}\ind_{\{q^x_u>0\}}c^{\bF}_u\ud\kappa_u+kZ^{\bF}_T\ind_{\{q^x_T>0\}}\right]\lambda(\!\ud x)	\\
&= \int_Eh(x)\expec\left[\int_0^TZ^{\bF}_u\ind_{\{q^x_u>0\}}c^{\bF}_u\ud\kappa_u+kZ^{\bF}_T\ind_{\{q^x_T>0\}}\right]\lambda(\!\ud x),
\end{align*}
where the second equality follows from \cite[Theorem 5.32]{MR1219534}. We have thus shown that $M_0=\expec[\int_0^T\!Z^{\bF}_u\ind_{\{q^x_u>0\}}c^{\bF}_u\ud\kappa_u+kZ^{\bF}_T\ind_{\{q^x_T>0\}}]\bigr|_{x=L}$ a.s.
Since the process $c^{\bF}$ is strictly positive $(\!\ud\kappa\otimes\prob)$-a.e. (as a consequence of Assumption \ref{ass:U}), the condition $\int_E(\expec[\int_0^T\!\ind_{\{q^x_t=0\}}\!\ud\kappa_t]+k\prob(q^x_T=0))\lambda(\!\ud x)>0$ implies that $\prob(M_0<v+k)>0$.
Define then an $\cO(\bG)$-measurable process $\hat{c}=(\hat{c}_t)_{t\in[0,T]}$ by
\[
\hat{c}_t := c^{\bF}_t + \frac{v+k-M_0}{Z^{\bG}_t\,\expec[\kappa_T|\G_0]},
\qquad\text{ for all }t\in[0,T].
\]
Similarly as in the last part of the proof of Lemma \ref{lem:SuperMart=Mart}, it holds that $\hat{c}\in\Cons_m^{\bG,k}(v)$. Furthermore, since $\prob(\hat{c}_t>c^{\bF}_t)>0$ for every $t\in[0,T]$, we have that
\[
u^{\bG,k}(v) \geq \expec\left[\int_0^TU(u,\hat{c}_u)\ud\kappa_u\right]
> \expec\left[\int_0^TU(u,c^{\bF}_u)\ud\kappa_u\right]
= u^{\bF,k}(v),
\]
thus completing the proof.
\end{proof}


\begin{proof}[Proof of Proposition \ref{prop:pi}]
The result follows from the explicit expressions for the optimal expected utilities obtained in Corollary \ref{cor:log_pwr_utility}. More specifically, in view of equation \eqref{eq:opt_ut_log}, part (i) follows by solving with respect to $\pi^{{\rm log}}(v)$ the equation
\begin{align*}
0 &= u^{\bF}(v)-u^{\bG}\bigl(v-\pi^{\rm log}(v)\bigr)	\\
&= \log(v)\expec[\kappa_T] 
- \log\bigl(\expec[\kappa_T]\bigr)\expec[\kappa_T]
+ \expec\left[\int_0^T\log\left(\frac{1}{Z_u}\right)\ud\kappa_u\right]	\\
&\quad- \log\bigl(v-\pi^{\rm log}(v)\bigr)\expec[\kappa_T] 
+ \expec\bigl[\log\bigl(\expec[\kappa_T|\G_0]\bigr)\kappa_T\bigr]
- \expec\left[\int_0^T\log\left(\frac{q^L_u}{Z_u}\right)\ud\kappa_u\right].
\end{align*}
Similarly, in view of equation \eqref{eq:opt_ut_pwr}, part (ii) of the theorem follows by solving the equation
\begin{align*}
0 &= u^{\bF}(v)-u^{\bG}\bigl(v-\pi^{\rm pwr}(v)\bigr)	\\
&= \frac{v^p}{p} \expec \left[ \int_0^T \left(Z_u \right) ^{\frac{p}{p-1}}\ud\kappa_u\right]^{1-p}	
- \frac{\bigl(v-\pi^{\rm pwr}(v)\bigr)^p}{p} \expec\left[\expec \left[ \int_0^T \left(Z_u/q^L_u \right) ^{\frac{p}{p-1}}\ud\kappa_u \bigg| \G_0 \right]^{1-p} \right].
\end{align*}
\end{proof}

\begin{proof}[Proof of Theorem \ref{thm:universal_uip}]
Note first that, since $U$ is concave, Jensen's inequality and the assumption that $S\in\Mloc(\prob,\bF)$ imply that $u^{\bF,k}(v)=U(v)$, for every utility function $U\in\mathcal{U}$ and $(k,v)\in\R^2_+$.\\
(i)$\Rightarrow$(ii):
Let $U$ be an arbitrary element of $\mathcal{U}$, $k\in\R_+$ and $v>0$. 
Consider the consumption process $c^{\bG}=(c^{\bG}_t)_{t\in[0,T]}$ given by $c^{\bG}_t=v\ind_{\{t=T\}}$, for $t\in[0,T]$. Since $\ud\kappa_u=\delta_T(\!\ud u)$ and $\prob(q^L_T=q)=1$, with $q\geq1$, Lemma \ref{lem:SuperMart=Mart} implies that $c^{\bG}\in\Cons^{\bG,k}_m((v+k)/q-k)$.
As a consequence, we have that
\[
u^{\bG,k}((v+k)/q-k)
\geq \expec\bigl[U(c^{\bG}_T)\bigr]
= U(v),
\qquad\text{ for every }v>0.
\]
On the other hand, by Jensen's inequality, for any consumption process $c\in\Cons^{\bG,k}_+((v+k)/q-k)$, it holds that
\[
\expec\bigl[U(c_T)\bigr]
\leq U\bigl(\expec[c_T]\bigr)
= U\left(q\,\expec\left[c_T/q^L_T\right]\right)
\leq U\bigl(q((v+k)/q-k+k-k/q)\bigr) = U(v),
\]
where the second inequality follows from part (i) of Lemma \ref{lem:SuperMart=Mart}, since $\qprob=\prob$ and $\ud\kappa_u=\delta_T(\!\ud u)$. We have thus shown that $u^{\bG,k}((v+k)/q-k)=U(v)=u^{\bF,k}(v)$, for every $U\in\mathcal{U}$, thus proving that (ii) holds, with the indifference value $\pi^k(v)$ being given as in \eqref{eq:universal_uip}.\\
(ii)$\Rightarrow$(iii): This implication trivially follows by  taking $k=0$ in (ii).\\
(iii)$\Rightarrow$(i):
Consider the utility functions $U_1(x)=\log(x)$ and $U_2(x)=x^p/p$, for $p\in(0,1)$. For $\mathbf{H}\in\{\bF,\bG\}$ and $i\in\{1,2\}$, denote by $u^{\mathbf{H},0}_i(v)$ the value function of the corresponding expected utility maximization problem \eqref{eq:uH}, for $v>0$ and $k=0$. 
Suppose that, for every $v>0$, there exists a value $\pi^0(v)$ such that $u^{\bG,0}_i(v-\pi^0(v))=u^{\bF,0}_i(v)=U_i(v)$, for $i\in\{1,2\}$ and all $p\in(0,1)$. In particular, this implies that $u^{\bG,0}_i(v-\pi^0(v))<+\infty$, for $i\in\{1,2\}$, and $\pi^0(v)=\pi^{{\rm log}}(v)=\pi^{{\rm pwr}}(v)$, for all $p\in(0,1)$, using the notation introduced in Proposition \ref{prop:pi}. 
The assumptions of Proposition \ref{prop:pi} are therefore satisfied and, in view of formulae \eqref{eq:ind_price_log}-\eqref{eq:ind_price_pwr}, it holds that
\[
\exp\left(\expec\left[\log(q^L_T)\right]\right)
= \expec\left[\expec\left[(q^L_T)^{\frac{p}{1-p}}\Bigr|\G_0\right]^{1-p}\right]^{1/p},
\]
for all $p\in(0,1)$. By Jensen's inequality, it holds that $\exp(\expec[\log(q^L_T)])\leq\expec[q^L_T]$.
On the other hand, the function $x\mapsto x^{1/(1-p)}$ is convex and, again by Jensen's inequality,
\[
\expec\left[\expec\left[(q^L_T)^{\frac{p}{1-p}}\Bigr|\sigma(L)\right]^{1-p}\right]^{1/p}
\geq \expec\Bigl[\expec\left[(q^L_T)^p\bigr|\G_0\right]\Bigr]^{1/p}
= \expec\left[(q^L_T)^p\right]^{1/p}.
\]
We have thus shown that
\[
\expec\left[(q^L_T)^p\right]^{1/p}
\leq \expec\left[\expec\left[(q^L_T)^{\frac{p}{1-p}}\Bigr|\G_0\right]^{1-p}\right]^{1/p}
\leq \expec[q^L_T]
\]
and $\expec[(q^L_T)^p]^{1/p}<+\infty$, for all $p\in(0,1)$. Therefore, $\expec[\expec[(q^L_T)^{\frac{p}{1-p}}|\sigma(L)]^{1-p}]^{1/p}$ converges to $\expec[q^L_T]$ as $p\rightarrow1$. In turn, this implies that
\[
v\left(1-e^{-\expec[\log(q^L_T)]}\right)
= \pi^{{\rm log}}(v) = \pi^{{\rm pwr}}(v)
= v\left(1-\expec\left[\expec\left[(q^L_T)^{\frac{p}{1-p}}\Bigr|\G_0\right]^{1-p}\right]^{-1/p}\right)
\rightarrow v\left(1-\frac{1}{\expec[q^L_T]}\right)
\]
as $p\rightarrow1$. As a consequence, it holds that $\expec[\log(q^L_T)]=\log(\expec[q^L_T])$. Since the function $x\mapsto\log(x)$ is strictly concave, this implies that there exists a strictly positive constant $q$ such that $\prob(q^L_T=q)=1$. The fact that $q\geq1$ follows since $\expec[1/q^L_T]\leq1$, by the supermartingale property of $1/q^L$ on $\basisgp$.

It remains to show that, under conditions (i), (ii), or (iii), the optimal wealth process in problem \eqref{eq:uH} for $\mathbf{H}=\bG$, denoted by $V^{\bG}=(V^{\bG}_t)_{t\in[0,T]}$, is given as in \eqref{eq:universal_wealth}. The optimal consumption plan $c^{\bG}$ constructed in the first part of the proof belongs to $\Cons_m^{\bG,k}((v+k)/q-k)$. Therefore,
\[
\frac{v}{q} 
= \expec\left[\frac{c^{\bG}_T}{q^L_T}\bigg|\G_t\right]
= \frac{V^{\bG}_t+k}{q^L_t}-\expec\left[\frac{k}{q^L_T}\bigg|\G_t\right]
= \frac{V^{\bG}_t+k}{q^L_t}-\frac{k}{q}
\qquad\text{a.s. for all }t\in[0,T],
\]
where the second equality follows from the definition of the set $\Cons_m^{\bG,k}((v+k)/q-k)$.
\end{proof}

\begin{proof}[Proof of Proposition \ref{prop:uip_bounds}]
Similarly as in the proof of Theorem \ref{thm:universal_uip}, it holds that $u^{\bF,k}(v)=U(v)$, for every $U\in\mathcal{U}$, $k\in\R_+$ and $v\geq0$.
The consumption process $c^{\bG}=(c^{\bG}_t)_{t\in[0,T]}$ defined by $c^{\bG}_t=v\ind_{\{t=T\}}$, for $t\in[0,T]$, belongs to $\Cons_+^{\bG,k}((v+k)/q_{\min}-k)$. Indeed, under the present assumptions it holds that $\expec[(v+k)/q^L_T|\G_0]\leq (v+k)/q_{\min}$ a.s. Therefore, for all $k\in\R_+$ and $v>0$, we have that
\[
u^{\bF,k}(v) = U(v) = \expec[U(c^{\bG}_T)] \leq u^{\bG,k}((v+k)/q_{\min}-k),
\]
which implies that $v-\pi^{U,k}(v)\leq (v+k)/q_{\min}-k$, thus proving the first inequality in \eqref{eq:uip_bounds}.
Consider then an arbitrary consumption process $c=(c_t)_{t\in[0,T]}\in\Cons_+^{\bG,k}((v+k)/q_{\max}-k)$. By Jensen's inequality, it holds that
\[
\expec[U(c_T)] \leq U(\expec[c_T]) \leq U\left(q_{\max}\,\expec\left[\frac{c_T}{q^L_T}\right]\right) 
\leq U(v) = u^{\bF,k}(v),
\]
where the third inequality follows from Lemma \ref{lem:SuperMart=Mart}. By the arbitrariness of $c$, this implies that $u^{\bG,k}((v+k)/q_{\max}-k)\leq u^{\bF,k}(v)$, thus showing that $v-\pi^{U,k}(v)\geq (v+k)/q_{\max}-k$.
\end{proof}

\section{Conclusions}	\label{sec:conclusions}

In this paper, we have presented a general study of the value of informational arbitrage, in the context of a semimartingale model of a complete financial market with inside information.
In our analysis, the assumption of market completeness plays a central role. In particular, it can be naturally transferred to the initially enlarged filtration $\bG$, thus enabling us to obtain a precise characterization of the validity of NUPBR and NFLVR in the insider financial market $(\Omega,\bG,\prob;S)$. In turn, this provides a general and simple duality approach to the solution of optimal consumption-investment problems. In the case of typical utility functions, market completeness leads to fully explicit solutions, which reveal interesting features of the value of informational arbitrage.

The value of informational arbitrage can also be defined and studied in general incomplete markets. In particular, the existence and uniqueness result of Theorem \ref{thm:IndPrice} still holds in incomplete markets, as long as the optimal investment-consumption problem in $\bG$ is well-posed. More precisely, if the primal and dual value functions in $\bG$ are finite and $(\Omega,\bG,\prob;S)$ satisfies NUPBR (but not necessarily NFLVR), then the results of \cite{CCFM2015} imply that the value function is sufficiently regular to prove the existence and uniqueness of the value of informational arbitrage.
However, except for specific models, one cannot obtain an explicit description of the value of informational arbitrage. 
Furthermore, in general incomplete markets, there does not exist a simple criterion for determining whether the inside information generates arbitrage in $\bG$ (compare with Theorem \ref{thm:results_NUPBR}).

In the present work, we have considered a frictionless financial market where the traded assets are infinitely liquid. In our view, an interesting direction of future research consists in studying the value of informational arbitrage under more realistic market structures featuring transaction costs as well as price impact, in the spirit of \cite{KHS06}.
In those cases, it may be optimal for an informed agent to underinvest in the arbitrage opportunity, because the trading activity itself can affect the profitability of arbitrage strategies or simply because of the presence of market frictions.


\bibliographystyle{alpha}
\bibliography{value_arbitrage_biblio}
\end{document}